\documentclass[a4paper,USenglish,cleveref, autoref, thm-restate]{lipics-v2021}

\title{Fast and Practical Single-Exponential Algorithms for Branchwidth} 

\author{Taiki Kaneda}
{Hokkaido University, Sapporo, Japan}{kaneda.taiki.c9@elms.hokudai.ac.jp}{}{}
\author{Yasuaki Kobayashi}
{Hokkaido University, Sapporo, Japan}{koba@ist.hokudai.ac.jp}{https://orcid.org/0000-0003-3244-6915}{JSPS KAKENHI Grant numbers JP23K28034, JP24H00686, JP24H00697.}

\author{Hisao Tamaki}
{Meiji University, Kawasaki, Japan}{hisao.tamaki@gmail.com}{https://orcid.org/0000-0001-7566-8505}{JSPS KAKENHI Grant number JP24H00697.}

\authorrunning{T. Kaneda et al.} 

\Copyright{Taiki Kaneda, Yasuaki Kobayashi, and Hisao Tamaki} 

\ccsdesc[100]{Theory of computation $\rightarrow$ Design and analysis of algorithms $\rightarrow$ Parameterized complexity and exact algorithms} 

\keywords{Branchwidth, Exact exponential algorithms, Hypergraphs} 

\category{} 

\relatedversion{} 

\nolinenumbers 

\EventEditors{John Q. Open and Joan R. Access}
\EventNoEds{2}
\EventLongTitle{42nd Conference on Very Important Topics (CVIT 2016)}
\EventShortTitle{CVIT 2016}
\EventAcronym{CVIT}
\EventYear{2016}
\EventDate{December 24--27, 2016}
\EventLocation{Little Whinging, United Kingdom}
\EventLogo{}
\SeriesVolume{42}
\ArticleNo{23}

\usepackage{mathtools}
\usepackage[ruled,linesnumbered,vlined]{algorithm2e}
\usepackage{todonotes}
\usepackage{booktabs}
\usepackage{float}

\newcommand{\bw}{\mathrm{bw}}
\newcommand{\bound}{\partial}
\newcommand{\midset}{\mathrm{mid}}
\newcommand{\lowset}{\mathcal{L}}
\newcommand{\upset}{\mathcal{U}}
\newcommand{\core}{{\kappa}}
\newcommand{\corehat}{{\hat\kappa}}

\newcommand{\closure}{\mathtt{clos}}
\newcommand{\bigOh}{\mathcal{O}}

\newcommand{\calB}{\mathcal{B}}
\newcommand{\calC}{\mathcal{C}}
\newcommand{\calF}{\mathcal{F}}
\newcommand{\calT}{\mathcal{T}}
\newcommand{\calD}{\mathcal{D}}

\newcommand{\calS}{\mathcal{S}}

\begin{document}

\maketitle

\begin{abstract}
In this paper, we present exact exponential algorithms for computing branchwidth that are fast both in theory and in practice. 
The running times of these algorithms are single-exponential in the number of vertices.
Our basic algorithm is based on a conceptually simple recurrence on vertex sets and computes the branchwidth of an $n$-vertex hypergraph in time $\bigOh^*(4^n)$. This is the first single-exponential time algorithm for hypergraphs.

We have two algorithms tailored specifically for graphs.
The first algorithm runs in time $\bigOh(3.293^n)$, improving upon the previously best-known running time of $\bigOh(3.4652^n)$ [Fomin-Mazoit-Todinca, DAM 2009]. 
Moreover, our computational experiment shows that it overwhelmingly outperforms state-of-the-art practical algorithms for computing branchwidth.
The second algorithm is a candidate for a theoretical improvement: we conjecture that it runs in time $\bigOh(c^n)$ for some constant $c$ that is smaller than 3.293.
In practice, it performs significantly better 
on some instances that are hard for the first algorithm.
\end{abstract}

\section{Introduction}
The \emph{branchwidth} of a graph is a well-studied concept introduced in the seminal Graph Minors project by Robertson and Seymour~\cite{RobertsonS91:GM10} and plays an important role in the proof of the graph minor theorem. 
This graph parameter measures how close a graph is to a tree, and is known to be linearly related to \emph{treewidth}~\cite{RobertsonS91:GM10}, which serves a similar purpose in solving problems on graphs. 
Since the branchwidth of a graph can be easily generalized to hypergraphs, matroids, and symmetric submodular functions, due to this versatility, its mathematical properties and algorithmic aspects have been extensively studied in the literature~\cite{FominT16,GeelenGRW06,JeongKO21,KorhonenO26,RobertsonS91:GM10}.

Algorithms for computing the treewidth and branchwidth of a graph have been well studied in the context of exact algorithms.
The main reason for this is that, given a decomposition of small width for a given graph, numerous optimization problems on graphs can be solved efficiently~\cite{ArnborgLS91,Bodlaender88,Courcelle90}, with running time typically linear in the size of an input graph and exponential in these width parameters.
This implies that it is crucial to exactly compute the treewidth or branchwidth of a graph.
Given a graph $G$ and an integer $k$, the problems of deciding whether the treewidth or branchwidth of $G$ is at most $k$ are known to be fixed-parameter tractable (FPT) parameterized by $k$, with the running time dependencies linear in the number of vertices~\cite{Bodlaender96,BodlaenderT97}.
However, since the dependence on $k$ in these running times is non-negligibly large, exact exponential-time algorithms are often preferred in practice.

For computing treewidth, Bouchitt\'e and Todinca~\cite{BouchitteT01} proposed a dynamic programming algorithm for computing treewidth based on potential maximal cliques, which leads to a running time bound essentially depending on the number of potential maximal cliques in an input graph.
The current best running time bound is due to Fomin, Todinca, and Villanger~\cite{FominTV15} and is in $\bigOh(1.7347^n)$, where $n$ is the number of vertices in the input graph. 
Interestingly, the essential idea behind this algorithm can be leveraged to develop highly practical algorithms; specifically, a framework called positive-instance driven dynamic programming~\cite{Tamaki19}, introduced by one of the current authors, successfully computes the treewidth of moderate-sized graphs.
Exact exponential-time algorithms are also known for branchwidth, including an $\bigOh^*(2^m)$-time algorithm\footnote{The notation $\bigOh^*$ hides a polynomial factor in the input size.} by Oum~\cite{Oum09} and an $\bigOh^*((2\sqrt{3})^n) \subset \bigOh(3.4652^n)$-time algorithm by Fomin, Mazoit, and Todinca~\cite{FominMT09}, where $m$ is the number of edges in the input graph.
However, these algorithms either compute branchwidth values across all edge subsets or involve highly complex procedures; consequently, a straightforward implementation of these methods does not appear to be practical. 
In fact, the current state-of-the-art practical algorithms for branchwidth are Hicks' algorithm~\cite{Hicks05}, which utilizes the notion of tangles, as well as the SAT-encoding approach proposed by Lodha, Ordyniak, and Szeider~\cite{LodhaOS19}.
Nevertheless, as reported in~\cite{LodhaOS19}, these existing approaches still cannot successfully compute the branchwidth of moderate-sized graphs on which Tamaki's algorithm can compute the treewidth exactly within a practical time limit.

In this paper, we aim to take a step forward in the practical aspects of branchwidth computation.
We present practical algorithms for computing the branchwidth of a (hyper)graph.
Our first algorithm is based on a relatively simple recurrence for computing branchwidth and runs in $\bigOh^*(4^n)$ time. 
The basic idea of this algorithm is to evaluate a recurrence based on the (rooted) tree structure of an optimal branch-decomposition using dynamic programming, similar to the approach taken by Oum~\cite{Oum09}.
However, a major improvement is that our running time is single-exponential with respect to $n$, the number of vertices in the input graph. 
The technical key ingredient to achieving this improvement lies in showing a recurrence over vertex subsets rather than edge subsets.
While this approach shares an underlying concept with the algorithm for computing linearwidth given by Kobayashi and Nakahata~\cite{KobayashiN20}, proving its correctness for branch-decompositions requires non-trivial extensions.
It is worth noting that this is the first algorithm for computing the branchwidth of hypergraphs that runs in time single-exponential in $n$; the algorithm given by Fomin, Mazoit, and Todinca~\cite{FominMT09} seems to work only on graphs. 

\begin{theorem}\label{thm:main-hg}
    The branchwidth of an $n$-vertex hypergraph can be computed in time $\bigOh^*(4^n)$.
\end{theorem}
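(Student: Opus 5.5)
We plan to design a dynamic program over vertex subsets that mirrors the rooted structure of an optimal branch-decomposition. Fix a hypergraph $H=(V,E)$. For an edge set $F\subseteq E$ its boundary is $\bound(F)$, the set of vertices incident to edges both in $F$ and in $E\setminus F$. The width of a branch-decomposition is the maximum $|\bound(F)|$ over the edge sets $F$ displayed by tree edges. Oum's recurrence works over all $F\subseteq E$ and therefore costs $2^m$. Our first step is to replace edge sets by vertex sets. For $X\subseteq V$, let $E[X]$ be the set of hyperedges entirely contained in $X$. A rooted subtree of a branch-decomposition that displays $F$ is then captured by the pair $(X,\bound)$ with $X=V(F)$, where the edges of $F$ lie inside $X$ and the boundary is $\bound(F)\subseteq X$. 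Since $F\subseteq E[X]$, we have $\bound(E[X])\subseteq\bound(F)$ whenever $V(F)=X$. This suggests the canonical choice: each subtree should take all edges available inside its vertex set, except for edges that "belong" to the boundary and may be placed elsewhere.

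Concretely, we would define for each pair $(X,S)$ with $S\subseteq X\subseteq V$ a Boolean or width value $f(X,S)$. It records whether there is a rooted partial branch-decomposition of width at most $k$ whose leaf set $F$ satisfies $E[X\setminus S]\cup\{e: e\cap (X\setminus S)\neq\emptyset\}\subseteq F\subseteq E[X]$, with the boundary contained in $S$. Here $S$ plays the role of the middle/boundary set and $X\setminus S$ is the set of "interior" vertices, all of whose incident edges must already be covered. The number of pairs $(X,S)$ with $S\subseteq X$ is $3^n$. The recurrence combines two children $(X_1,S_1)$ and $(X_2,S_2)$ into $(X,S)$ with $X=X_1\cup X_2$, requiring $|S_i|\le k$ and $S$ to be determined by what remains uncovered. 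Edges inside $S$ that are not forced may be added as single leaves, which is handled by a cheap closure step. The key structural lemma to prove is an exchange argument. Given an optimal branch-decomposition, every rooted subtree can be modified, without increasing the width, so that its edge set is determined by $(V(F),\bound(F))$ in this canonical way. This would use submodularity and posimodularity of $|\bound(\cdot)|$: we replace $F$ by $F\cup E[V(F)\setminus\bound(F)]$ and reroute the moved edges, in the spirit of the linearwidth argument of Kobayashi and Nakahata.

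For the running time, enumerating the children naively over pairs gives too much. We would instead parametrize the recurrence so that each state is a triple of disjoint-ish sets, namely the interior of child 1, the interior of child 2, and the shared or boundary vertices. The number of ways to assign each vertex one of four roles (outside, interior of child 1, interior of child 2, boundary) is $4^n$. Each state is then evaluated in polynomial time, giving $\bigOh^*(4^n)$. Finally, $\bw(H)$ is found by binary search on $k$, or by computing min–max values directly, and checking $f(V,\emptyset)$ with degenerate cases (at most one edge, isolated vertices) treated separately.

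The main obstacle is the correctness lemma: showing that restricting to canonical edge sets, determined by vertex data, loses no optimality. For hyperedges, a single edge can intersect both children's interiors' boundaries in complicated ways. We would first try an uncrossing argument. Take a decomposition minimizing the total number of non-canonical subtrees. At a lowest non-canonical subtree, move the missing edges of $E[X\setminus S]$ into it from elsewhere, and show via $|\bound(A\cup B)|+|\bound(A\cap B)|\le|\bound(A)|+|\bound(B)|$ that no displayed boundary grows.
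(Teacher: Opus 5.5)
Your proposal has a genuine gap: the canonical form constrains nothing, the real ambiguity is left open, and so neither the recurrence nor the $4^n$ count goes through.

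\textbf{The canonical form is vacuous.} If $X=V(F)$ and $\bound(F)\subseteq S$, then every vertex of $X\setminus S$ has all its incident hyperedges in $F$. So $E(X\setminus S)$, and every edge meeting $X\setminus S$, already lie in $F$. The exchange/uncrossing step that ``moves the missing edges of $E[X\setminus S]$'' therefore has nothing to move. The real ambiguity is which edges contained in $S$ belong to $F$, and you explicitly leave those free (``may be placed elsewhere'').

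\textbf{Consequences for the recurrence and the running time.} Because those edges are free, $f(X,S)$ does not determine the children's edge sets. When you combine $(X_1,S_1)$ and $(X_2,S_2)$, you cannot compute $\bound(F_1\cup F_2)$. That set is the middle set of the parent node and must have size at most $k$, while the only safe bound, $S_1\cup S_2$, can have size up to $2k$. For the same reason the $4^n$ count does not follow. With $3^n$ independent states $(X,S)$, a four-role labelling (outside / interior~1 / interior~2 / boundary) specifies neither $S_1,S_2$ nor which child each boundary vertex belongs to.

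\textbf{The missing idea: the closure lemma.} The lemma you need is $\bw(\closure(F))\le\bw(F)$. You gesture at it with your ``cheap closure step'' but never draw its consequence. The edges of $E(V(F))\setminus F$ lie inside $\bound(F)$, so any decomposition of them has width at most $|\bound(F)|\le\bw(F)$. Composing it with a decomposition of $F$ at a new root costs nothing, since $\bound(\closure(F))\subseteq\bound(F)$. Hence every subproblem may be taken to be a closed set $E(X)$, whose boundary $\bound(X)$ is a function of $X$ alone. The coordinate $S$ disappears and only $2^n$ states remain.

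\textbf{No global uncrossing is needed.} Neither the uncrossing argument nor submodularity is required; the argument is local.
\begin{itemize}
\item In a full injective optimal decomposition of $E(X)$, take a deepest node $p$ with $V(\lowset(p))=X$.
\item Its children span proper subsets $X_1,X_2$ with $X_1\cup X_2=X$ and $X_1\cap X_2=\bound(X_1)\cap\bound(X_2)$.
\item Sub-decompositions plus closure give $\bw(E(X_i))\le k$.
\item Moreover $\bound(X)\cup V(F^+(X_1,X_2))\subseteq\midset(p)$.
\item Conversely, compose decompositions of $E(X_1)$ and $E(X_2)$, non-injectively if they share edges, which relaxed decompositions allow. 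The two intersection and size conditions then bound all middle sets, and closing up gives $\bw(E(X))\le k$.
\end{itemize}
The transitions are exactly the four-way labellings outside, $X_1\setminus X_2$, $X_2\setminus X_1$, $X_1\cap X_2$; this is where $\bigOh^*(4^n)$ comes from.

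\textbf{A further detail you omit.} Properness of $X_1,X_2$, which makes the recursion well founded, requires first removing hyperedges contained in other hyperedges. Otherwise the deepest node spanning $X$ can be a leaf whose edge equals $X$.
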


Our second algorithm builds upon the recurrence of the first one to provide a faster algorithm specialized for graphs. This improves upon one of the best known theoretical running time bounds due to \cite{FominMT09}, proving the following theorem.

\begin{theorem}\label{thm:main}
    The branchwidth of an $n$-vertex graph can be computed in time $\bigOh(3.293^n)$.
\end{theorem}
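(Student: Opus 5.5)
The plan is to start from the vertex-set recurrence behind Theorem~\ref{thm:main-hg} and speed up its evaluation for graphs. The $\bigOh^*(4^n)$ bound comes from the fact that the recurrence looks at pairs $(X,Y)$ of disjoint vertex sets. It combines the values of two ``child'' sets whose union gives the current set, and the width of the split is measured by the middle set $\midset(X) \subseteq X$, namely the vertices incident to edges both inside and outside the part. Enumerating all such triples costs $4^n$. For graphs, an edge has only two ends. So once we fix a vertex set $X$ (the vertices covered by one side of the decomposition) and its boundary $\bound X$, whether each edge is assigned to the left or the right child is determined by the child vertex sets, except for edges with both ends in the boundary. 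I would first re-derive the recurrence in the graph setting in a normalized form: a state is a pair $(A,B)$ with $B$ the boundary of $A$ and $|B| \le k$, and a transition splits $A$ into $A_1,A_2$ whose boundaries lie inside $B \cup (A_1\cap A_2)$.

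The second step is to restrict which states are ever needed. Using the width bound $k$, I would prove a normal-form lemma. There is an optimal branch-decomposition in which every cut induced by a tree edge has a vertex side $A$ with $|\bound A| \le k$. Moreover, the decomposition can be taken ``tight'': every non-boundary vertex of $A$ lies in $A$ only, so cuts correspond to separations $(A, V\setminus A \cup \bound A)$. The standard uncrossing and submodularity arguments, as in Robertson--Seymour, should give this. It means each state is a separation of order at most $k$. Each transition is then a triple of vertex-disjoint pieces (left-private, right-private, and shared middle vertices) inside $A$, together with the outside part.

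The third step is counting. The number of triples is governed by $\sum \binom{n}{a,b,c,d}$ subject to middle-set bounds depending on $k$. I would then split into two regimes. If $k$ is small relative to $n$ (say $k \le \alpha n$), the middle set and the boundary are small, and the count is roughly $\binom{n}{\le \alpha n}\cdot 3^{n}$-type, which is below $3.293^n$ after optimizing $\alpha$. If $k$ is large, I would instead combine this with a different bound. One option is the Fomin--Mazoit--Todinca style of enumeration, or the known fact that large branchwidth forces small separations to be counted differently. Another is simply the trivial bound that the number of sets with boundary at most $k$ restricted to three-way splits satisfies a multinomial bound whose maximum, balanced against $\alpha$, gives the constant $3.293$. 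I would also use fast subset convolution or a careful ordering to avoid polynomial blowup for enumerating children.

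The main obstacle I expect is the balancing analysis in the third step, specifically showing that for large $k$ the transitions can be enumerated in fewer than $3^n\cdot 2^{o(n)}$-beating ways. I would try first to prove that, in a tight normal form, the shared set $A_1\cap A_2$ must be contained in $\bound A_1 \cap \bound A_2$ and has size at most $k$. That restricts the fourth part of the partition. Then I would optimize the entropy expression $\max_{\alpha}\, \bigl(H(\cdot)\bigr)$ numerically to obtain $3.293$.
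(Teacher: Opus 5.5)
\textbf{There is a genuine gap, in your third step.} That step is where all the content of the theorem lies, and the proposal does not supply it.

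With binary splits as transitions, a transition is a four-way partition $(A_1\setminus A_2, A_2\setminus A_1, A_1\cap A_2, V\setminus A)$. The only counting constraint you isolate is $|A_1\cap A_2|\le k$. That gives $\sum_{j\le k}\binom{n}{j}3^{n-j}$ transitions. This is already above $3.293^n$ once $k/n$ exceeds about $0.0265$, and it is $4^n$ up to a polynomial factor once $k\ge n/4$, so it is no better than \cref{thm:main-hg}.

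The constraints $|\bound(A)|, |\bound(A_i)|\le k$ give no counting leverage by themselves. In a general graph the number of vertex sets with small boundary is not small, and you never use any structure, such as connectivity, that would bound it. None of your large-$k$ fallbacks closes this:
\begin{itemize}
\item Invoking the Fomin--Mazoit--Todinca enumeration only gives its $(2\sqrt3)^n\approx 3.4652^n$ bound, which is the bound to be beaten, and you give no $k$-dependent refinement of it.
\item No multinomial count of your four-way partitions balances to $3.293$, since size constraints alone leave it at $4^n$ for large $k$.
\end{itemize}

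\textbf{Your step-one recurrence is also incorrect as stated.} Requiring $\bound(A_1),\bound(A_2)\subseteq \bound(A)\cup(A_1\cap A_2)$ forbids any edge between $A_1\setminus A_2$ and $A_2\setminus A_1$ whose endpoints are not in $\bound(A)$. Take $A=V(K_4)$, where $\bound(A)=\emptyset$. Any split satisfying your condition has $A_1\subseteq A_2$ or $A_2\subseteq A_1$, so no proper split exists, even though $\bw(K_4)=3$. These cross edges are exactly the set $F^+(X_1,X_2)$ that \cref{lem:bw-recurrence} must allow and recovers through closure.

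\textbf{The missing idea} is to restructure the recurrence so that a transition is determined by a three-way partition, and the assignment of children to the two sides is never enumerated.
\begin{itemize}
\item The paper takes as states only blocks $B$, meaning $\core(B)$ is nonempty and connected and $B=N[\core(B)]$.
\item Binary splits are replaced by block derivations $(B,\calB)$ that attach all child blocks at once. Here $\calB$ is forced by $(K,S)$, with $B=K\cup S$, as the closed neighborhoods of the components of $G[K]$. Hence there are at most $3^n$ derivations.
\item The side assignment is encoded in a mid-triple $(M_1,M_2,M_3)$ of $S$ (\cref{thm:bd-iff}). One of order at most $k$ exists trivially when $|S|\le k$, and none exists when $|S|>3k/2$.
\item In the window $k<|S|\le 3k/2$, existence is decided in $\bigOh^*(\min\{2^{|B|-|S|},\binom{|S|}{2k-|S|}\})$ time. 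This is done either by trying bipartitions of $\calB$ and applying \cref{lem:bicover}, or by a subset-sum over a guessed $M_1\cap M_2$.
\item Only this window costs more than $3^n$. There, connectivity of the core lets \cref{lem:Fomin-Villanger} bound the number of blocks with $|B|=a$ and $|\bound(B)|=b$ by $n\binom{a-1}{b}$.
\item Multiplying by the $\binom{a-b}{s-b}$ choices of $S$ and by the order-check cost, and maximizing the resulting entropy expression numerically, gives $3.2927^n$.
\end{itemize}

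Your uncrossing/submodularity normal form is not needed. \Cref{lem:bw_egdge_closure} already provides the tightness you want.
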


This improvement is achieved by building our recurrence on vertex subsets satisfying a certain connectivity condition, which we call \emph{blocks}.
While this block-based algorithm improves the running time as stated in the theorem, the recurrence itself becomes more involved, and its evaluation becomes non-trivial.

We also give an alternative to the algorithm stated in \cref{thm:main}.
Based on some partial analysis, we conjecture
that it runs in time $\bigOh(c^n)$
where $c$ is a constant smaller than
the one in \cref{thm:main}.
In practice, it performs significantly better on instances
that are hard for the algorithm in 
\cref{thm:main}.


\paragraph*{Related work.}
Although computing the branchwidth of a graph is known to be NP-hard~\cite{SeymourT94} even when restricted to several specific graph classes~\cite{KloksKM05}, a highly notable result is that it can be computed in polynomial time for planar graphs~\cite{SeymourT94}. 
This is in contrast to computing treewidth for planar graphs, whose computational complexity has been a long-standing open problem.
Seymour and Thomas~\cite{SeymourT94} showed that the branchwidth of a planar graph can be computed in time $\bigOh(n^2)$. 
Their algorithm takes time $\bigOh(n^4)$ to construct the corresponding branch-decomposition, which was later improved to $\bigOh(n^3)$ by Gu and Tamaki~\cite{GuT08}.
The computation of branchwidth for planar graphs has also been extensively investigated from a practical perspective, enabling the exact computation of optimal branch-decompositions within a practical running time, even for large planar graphs~\cite{BianGZ16,Hicks05,Hicks05a}.

\section{Preliminaries}

Let $G$ be a hypergraph.
The vertex set and edge set of $G$ are denoted by $V(G)$ and $E(G)$, respectively.
For $F \subseteq E(G)$, we define the set of vertices contained in some (hyper)edge of $F$ as $V(F) = \bigcup_{e \in F} e$.
For $X \subseteq V(G)$, the \emph{boundary} of $X$, denoted by $\bound_G(X)$, is the set of vertices in $X$ that have an incident hyperedge $e$ with $e \cap (V(G)\setminus X) \neq \emptyset$.
We can extend this definition for edge sets: For $F \subseteq E(G)$, the boundary of $F$, denoted by $\bound_G(F)$ is defined to be $V(F) \cap V(E(G) \setminus F)$.
For $X \subseteq V(G)$, we define $E_G(X) = \{e \in E(G) : e \subseteq X\}$, and the subgraph of $G$ induced by $X$ is denoted by $G[X]$, i.e., $V(G[X]) = X$ and $E(G[X]) = E_G(X)$.
We write $\closure(F)$ to denote the set of edges in the subgraph induced by $V(F)$, that is, $\closure(F) = E(V(F))$ for $F \subseteq E(G)$.
An edge set $F \subseteq E(G)$ is \emph{closed} if it satisfies $\closure(F) = F$.
In other words, $F$ is closed if and only if there is a vertex set $X$ such that $E(X) = F$.
If $G$ is a graph, the sets of open and closed neighbors of $X \subseteq V(G)$ are denoted by $N_G(X)$ and $N_G[X]$, respectively.
When the hypergraph $G$ is clear from the context, we may omit it from these notations. 

Let $T$ be a rooted binary tree.
The set of leaves in $T$ is denoted by $L(T)$.
We particularly refer to the vertices of $T$ as nodes to distinguish them from the vertices of $G$.
For a node $p \in V(T)$, the subtree of $T$ rooted at $p$ is denoted by $T_p$.
We denote by $T - T_p$ the rooted tree obtained from $T$ by deleting all nodes in $T_p$.
We say that $T$ is \emph{full} if all internal nodes have exactly two children.

A \emph{branch-decomposition} of a hypergraph $G$ is a pair consisting of a rooted full binary tree $T$ and a bijection from the leaves of $T$ to $E(G)$.
Let $\calT = (T, \varphi)$ be a branch-decomposition of $G$.
We may refer to the nodes and edges of $T$ as the nodes and edges of $\calT$ as well.
For a subtree $T'$ of $T$, we write $\varphi(T')$ to denote the set of edges to which $\varphi$ maps the leaves of $T$ belonging to $T'$.
For a node $p \in V(T)$, the \emph{middle set} of $p$ is defined as $\midset_{\calT}(p) = V(\varphi(T_p)) \cap V(\varphi(T - T_p))$, that is, the set of vertices in $G$, each of which has at least one incident edge in both $\varphi(T_p)$ and $\varphi(T - T_p)$.
The \emph{order} of a node $p$ in $\calT$ is defined as $|\midset_{\calT}(p)|$.
The \emph{width} of $\calT$ is defined as the maximum order of a node in $\calT$, and the \emph{branchwidth} of $G$, denoted by $\bw(G)$, is the minimum integer $k$ such that $G$ has a branch-decomposition of width at most $k$.
A branch-decomposition can be defined for edge subsets: A \emph{branch-decomposition} of $F \subseteq E(G)$ is a pair $\calT = (T, \varphi)$, where $T$ is a rooted full binary tree and $\varphi$ is a bijection from the leaves of $T$ to $F$.
For each node $p$ of $\calT$, the \emph{lower edge set} of $p$ in $\calT$, 
denoted by $\lowset_{\calT}(p)$, is defined as $\lowset_{\calT}(p) = \varphi(T_p)$;
the \emph{upper edge set} of $p$ in $\calT$, denoted by $\upset_{\calT}(p)$, 
is defined as $\upset_{\calT}(p) = \varphi(T - T_p) \cup (E(G) \setminus F)$. 
Note that these definitions are asymmetric in the sense that the upper edge set contains all the edges outside $F$.
The \emph{middle set} of $p$ in $\calT$, denoted by $\midset_{\calT}(p)$, is 
defined as $\midset_{\calT}(p) = V(\lowset_{\calT}(p)) \cap V(\upset_{\calT}(p))$.
The order of a node, the width of $\cal T$, and the branchwidth of $F$ of $G$ are defined analogously.
We denote the branchwidth of $F$ of $G$ by $\bw_G(F)$.
The notions of the upper set, the middle set, and the order of a node of a branch-decomposition
depend on the hypergraph $G$ under consideration. We, however, do not make an explicit reference
to $G$ when using those notions and related notation unless there is a fear of confusion.
It is easy to see that $\bw(G) = \bw(E(G))$.

We use a relaxed version of branch-decompositions~\cite{RobertsonS91:GM10}.
A pair $\mathcal T = (T, \varphi)$ is called a \emph{relaxed branch-decomposition} of $G$ if $T$ is a rooted (not necessarily full) binary tree and $\varphi$ is a \emph{surjection} from $L(T)$ to $E(G)$, that is, there can be two distinct leaves $\ell$ and $\ell'$ with $\varphi(\ell) = \varphi(\ell')$.
The order of a node (and hence the width of $\mathcal T$) is defined in an analogous way.
For every graph $G$, the minimum width of a relaxed branch-decomposition of $G$ is equal to $\bw(G)$~\cite{RobertsonS91:GM10}.
In this paper, we simply refer to a relaxed branch-decomposition $\calT=(T, \varphi)$ as a branch decomposition; $\calT$ is \emph{full} if $T$ is a full binary tree and is \emph{injective} if $\varphi$ is an injection.
The following is a straightforward generalization of standard observations on relaxed branch-decompositions.
\begin{proposition}
    \label{prop:bw_injective}
    Let $F$ be an edge set of a hypergraph $G$ of branchwidth $k$.
    Then, there is an injective branch-decomposition of $F$ of width $k$,
    there is a full branch-decomposition of $F$ of width $k$, and
    there is a full and injective branch-decomposition of $F$ of width $k$.    
\end{proposition}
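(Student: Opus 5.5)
We will start from a (relaxed) branch-decomposition $\calT=(T,\varphi)$ of $F$ of width $k$ and transform it in two stages: first make it injective, then make it full. Each stage will only shrink or preserve middle sets, so the width never exceeds $k$. The width also cannot drop below $k$, because every injective or full decomposition is in particular a relaxed one. The third claim then follows by applying the second stage to the output of the first, since that stage preserves injectivity. The bookkeeping is done through the lower and upper edge sets: $\midset_\calT(p)=V(\lowset_\calT(p))\cap V(\upset_\calT(p))$ is monotone under shrinking either set.

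\emph{Injectivity.} For each $e\in F$ with several preimages, fix one leaf $\ell_e$ with $\varphi(\ell_e)=e$ and delete all other leaves mapped to $e$. Then repeatedly delete every non-root node whose subtree now contains no leaf of the original tree. The result is a rooted binary tree $T'$ whose leaves are exactly the kept leaves, so $\varphi' = \varphi|_{L(T')}$ is a bijection onto $F$. For a surviving node $p$ we compare edge sets as follows.
\begin{itemize}
\item Every leaf below $p$ in $T'$ was below $p$ in $T$, so $\lowset_{\calT'}(p)\subseteq\lowset_\calT(p)$.
\item Take $f\in\upset_{\calT'}(p)$. If $f\notin F$, it lies in both upper sets. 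Otherwise $\ell_f$ is not below $p$, so $\ell_f$ is a leaf of $T-T_p$, and hence $f\in\upset_\calT(p)$.
\end{itemize}
Therefore $\midset_{\calT'}(p)\subseteq\midset_\calT(p)$.

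\emph{Fullness.} Suppress every internal node with exactly one child by contracting the edge to its child; if the root has one child, let that child become the root. The leaves and $\varphi$ are unchanged, so injectivity is preserved. Each surviving node $q$ keeps the same set of descendant leaves, so its lower set, upper set and middle set are unchanged. The resulting tree is full.

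The degenerate cases $F=\emptyset$ and $|F|=1$ are handled directly: the decomposition is empty or a single leaf. The only step needing genuine care is the upper-set inclusion in the injectivity stage. There the possible worry is that an edge $f$ whose kept leaf lies outside $T_p$ might appear to move from the lower side to the upper side. This is harmless, since such an $f$ already had a leaf outside $T_p$ in $\calT$ and so already belonged to $\upset_\calT(p)$.
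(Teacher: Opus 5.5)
Your argument is correct and is the standard one the paper has in mind; the paper gives no proof, calling the proposition a straightforward generalization of standard observations. You discard duplicate leaves, prune branches with no kept leaf, and suppress one-child nodes, checking that lower and upper edge sets only shrink. This is the same pruning the paper uses in the proof of Lemma~\ref{prop:bd-edge-deletion}. The upper-set step implicitly uses that ``below $p$'' means the same in $T'$ as in $T$. This holds because every ancestor of a kept leaf survives the pruning.
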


In the subsequent sections, we will assume that the input hypergraph is connected, has no vertex belonging to at most one hyperedge, and has no two hyperedges such that one contains the other.
It is easy to verify that the branchwidth of a hypergraph is equal to the maximum branchwidth of its connected components.
The other two assumptions are justified by the following two easy propositions.

\begin{proposition}\label{prop:private-vertex}
    Let $G$ be a hypergraph and let $v$ be a vertex that belongs to at most one hyperedge in $G$.
    Let $G'$ be the hypergraph with $V(G') = V(G) \setminus \{v\}$ and $E(G') = \{e \setminus \{v\} : e \in E(G)\}$. 
    Then, $\bw(G) = \bw(G')$.
\end{proposition}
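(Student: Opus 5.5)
The plan is to show that deleting $v$ changes no middle set except by removing $v$, and that $v$ lies in no middle set to begin with. Let $e_v$ denote the unique hyperedge containing $v$, if there is one. If $v$ lies in no hyperedge, deleting it does not change any edge set or any $V(F)\cap V(F')$ computation, so assume $v \in e_v$. The map $\psi\colon e \mapsto e\setminus\{v\}$ sends $E(G)$ onto $E(G')$. We first treat the case where $\psi$ is a bijection. The case where $e_v\setminus\{v\}$ coincides with, or is already an edge $e'$ of, $G$ is handled afterwards using relaxed decompositions.

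\textbf{Step 1 (bijective case).} Given a branch-decomposition $\calT=(T,\varphi)$ of $G$, consider $\calT'=(T,\psi\circ\varphi)$, which is a branch-decomposition of $G'$, and conversely. For a node $p$, write $A=\varphi(T_p)$ and $B=\varphi(T-T_p)$; these partition $E(G)$. The claim is that $\midset_{\calT'}(p)=\midset_\calT(p)\setminus\{v\}$ and that $v\notin\midset_\calT(p)$. The second part holds because $v$ belongs only to $e_v$, and $e_v$ lies in exactly one of $A$ and $B$. For the first part, any vertex $u\neq v$ lies in $V(\psi(A))$ exactly when it lies in $V(A)$, since $\psi$ only removes $v$; the same holds for $B$. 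Hence the orders of all nodes agree, and so $\bw(G)=\bw(G')$.

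\textbf{Step 2 (collapsing case).} Suppose $\psi(e_v)=\psi(f)$ for some $f\neq e_v$, so that $E(G')$ has one fewer edge. For $\bw(G')\le\bw(G)$, the composed map $\psi\circ\varphi$ is a surjection from $L(T)$ onto $E(G')$. It is therefore a relaxed branch-decomposition, whose width is computed by the same middle-set argument as in Step 1, and by the cited relaxed-decomposition fact the minimum width of relaxed decompositions equals $\bw(G')$. For the reverse inequality $\bw(G)\le\bw(G')$, take an optimal decomposition of $G'$ and let $\ell$ be the leaf mapped to $f\setminus\{v\}=e_v\setminus\{v\}$. Replace $\ell$ by an internal node with two leaf children, mapped to $e_v$ and $f$. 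The new internal node has the same middle set as $\ell$ had in $G'$, with possibly $v$ added; but $v$ belongs only to $e_v$, so $v$ is not in it. The two new leaves have middle sets contained in $e_v\setminus\{v\}=f$ and in $f$, respectively. These are bounded by $|f|$, and every leaf mapped to $f$ already has order at most $|\midset|$, which is at most the width... this last bound needs care.

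\textbf{Expected obstacle.} The main difficulty is this final point: controlling the orders of the two new leaves. Each new leaf has middle set equal to the vertices of its edge that appear elsewhere, namely $e_v\cap V(E\setminus\{e_v\})=e_v\setminus\{v\}$ for the $e_v$-leaf. This equals the middle set of the old leaf $\ell$ in $G'$, provided the old leaf's edge $e_v\setminus\{v\}$ had all its vertices shared. If some of those vertices were private, they appear at both the $f$-leaf and the $e_v$-leaf, each contributing only the count it contributed at $\ell$; in the worst case, both leaves have middle sets contained in $f$ but not exceeding $\midset(\ell)\cup(\text{vertices of }f\text{ shared with }e_v)$. I would first check directly that each new leaf's middle set is a subset of $\midset_{\calT'}(\ell)\cup (f\cap e_v)$ and compare this with the width. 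If that bound fails, I would fall back on invoking the paper's standing assumption that no hyperedge contains another, under which the collapsing case does not arise in the setting where the proposition is used.
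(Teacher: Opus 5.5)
\textbf{Verdict: the bijective case is fine, but the collapsing case cannot be completed, because there the statement is false as written.}

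Your Step~1 is exactly the paper's argument. Keep $T$, relabel each leaf by $e\mapsto e\setminus\{v\}$, and note that $v$ lies in no middle set. Say explicitly that the decompositions are injective, which \cref{prop:bw_injective} allows; in a relaxed decomposition $e_v$ may occur on both sides of a node. The first half of Step~2, proving $\bw(G')\le\bw(G)$ via the relaxed decomposition $(T,\psi\circ\varphi)$, is also fine.

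The obstacle in the second half, however, cannot be overcome. With $E(G')$ read as a set, as written, the statement fails in the collapsing case. Take $E(G)=\{\{v,a,b\},\{a,b\},\{b,c\}\}$. Then $E(G')=\{\{a,b\},\{b,c\}\}$, and $\bw(G')=1$, since both leaves have middle set $\{b\}$. In every branch-decomposition of $G$, however, a leaf mapped to $\{a,b\}$ has a leaf mapped to $\{v,a,b\}$ elsewhere in the tree. So its middle set is $\{a,b\}$, and $\bw(G)=2$.

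This is exactly the phenomenon you spotted. Both new leaves have middle set equal to $f$, while $\midset_{\calT'}(\ell)$ can be strictly smaller, because $f$ may have vertices that are private in $G'$ (here $a$). Your proposed check via $\midset_{\calT'}(\ell)\cup(f\cap e_v)$ is vacuous, since $f\cap e_v=f$. What your construction does prove in the collapsing case is $\bw(G)=\max\{\bw(G'),|f|\}$; the lower bound $|f|$ comes from any leaf mapped to $f$. This is just \cref{prop:sperner} applied to the pair $f\subseteq e_v$, followed by your Step~1.

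So the proposition needs an extra hypothesis. Either $e\mapsto e\setminus\{v\}$ must be injective on $E(G)$, i.e., $e_v\setminus\{v\}\notin E(G)$ (automatic if no edge of $G$ is strictly contained in another), or $E(G')$ must be read as a multiset. Under either reading your Step~1 alone is a complete proof.

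The paper's proof makes the same restriction tacitly. Its ``opposite way'' construction turns an injective decomposition of $G'$ into one of $G$, but this yields a surjection onto $E(G)$ only when that map is a bijection.

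Your fallback is therefore the right repair, but its justification is not automatic. The standing assumptions describe the hypergraph \emph{after} all reductions, whereas this proposition is applied \emph{during} them. Moreover, deleting $v$ can itself create a new containment, e.g.\ $\{v,a\}\mapsto\{a\}\subseteq\{a,b\}$. You need to interleave the reductions so that \cref{prop:sperner} is applied exhaustively before each application of this proposition. Then the collapse can never occur.
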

\begin{proof}
    Let $\calT = (T, \varphi)$ be an injective branch-decomposition of $G$ and let $\calT' = (T, \varphi')$ be a branch-decomposition of $G'$, where $\varphi'$ maps each leaf $\ell$ of $T$ to the edge $\varphi(\ell) \setminus \{v\}$. 
    Clearly, the width of $\calT'$ is at most the width of $\calT$.
    From an injective branch-decomposition $\calT'$ of $G'$, we can define a branch-decomposition $\calT$ of $G$ in the opposite way.
    As $v$ does not appear in any middle set of a node in $\calT$, the width of $\calT$ is at most the width of $\calT'$.
\end{proof}

\begin{proposition}\label{prop:sperner}
    Let $G$ be a hypergraph.
    Suppose that $G$ has a hyperedge $e$ and a distinct hyperedge $e'$ with $e \subseteq e'$.
    Let $G'$ be the graph obtained from $G$ by deleting $e$.
    Then, $\bw(G) = \max\{\bw(G'), |e|\}$.
\end{proposition}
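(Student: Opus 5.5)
We prove the two inequalities $\bw(G) \le \max\{\bw(G'), |e|\}$ and $\bw(G) \ge \max\{\bw(G'), |e|\}$ separately, working with (relaxed) branch-decompositions and using \cref{prop:bw_injective} to move between relaxed, full, and injective versions. Throughout we assume $G$ has at least two edges besides trivial cases (if $E(G)=\{e,e'\}$ is handled directly: a single internal root with two leaves has middle sets $e\cap e' = e$, of size $|e|$).

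\textbf{Lower bound.} First, $\bw(G) \ge \bw(G')$: take an optimal full branch-decomposition of $G$ and delete the leaf mapped to $e$, suppressing its parent. Each remaining node $p$ has lower and upper edge sets that are those in $G$ minus possibly $e$, so its middle set in $G'$ is contained in its middle set in $G$. Second, $\bw(G) \ge |e|$: in a full injective branch-decomposition of $G$ with at least two edges, consider the leaf $\ell$ with $\varphi(\ell)=e$. Its middle set is $e \cap V(E(G)\setminus\{e\}) \supseteq e \cap e' = e$, so the order of $\ell$ is $|e|$. (If $\ell$ is the root, i.e., only one edge, then $e'$ does not exist.)

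\textbf{Upper bound.} Take an optimal full injective branch-decomposition $\calT'=(T',\varphi')$ of $G'$ of width $\bw(G')$, and let $\ell'$ be the leaf with $\varphi'(\ell')=e'$. Replace $\ell'$ by a new internal node $q$ with two leaf children mapped to $e'$ and $e$. For every old node $p$ not equal to $q$, the lower set either contains both $e,e'$ or neither, and the upper set correspondingly. Since $e \subseteq e'$, we have $V(\lowset(p))$ and $V(\upset(p))$ unchanged by adding $e$ alongside $e'$. Hence its middle set is unchanged. The node $q$ has lower set $\{e,e'\}$ with vertex set $e'$, so its middle set equals that of $\ell'$ in $\calT'$. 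The leaf for $e'$ has middle set $e' \cap V(E(G)\setminus\{e'\})$. This is contained in $e' \cap V(E(G')\setminus\{e'\}) \cup e = $ (old middle of $\ell'$) $\cup\, e$. We must be careful here, since that union could exceed both bounds. This is the main obstacle. Resolving it requires refining the construction: attach $e$ instead so that the leaf of $e'$ keeps its original middle set. For instance, make $q$'s children $\ell'$ (mapped to $e'$) and the $e$-leaf, but note $e$'s vertices lie in $e'$. Then the middle set of the $e'$ leaf is $e'\cap(V(E(G')\setminus\{e'\})\cup e)$. Vertices of $e$ not in other edges of $G'$ might be added. So instead I would handle this by noting each such vertex $v\in e$ is counted, giving a middle set of size at most $|e'\cap V(\text{others})| + |e\setminus V(\text{others})|$. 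A cleaner fix is to bound directly: the middle set is contained in $e'$, and for $v \in e'\setminus e$ membership is as before. So the size is at most $\max$ of (old order of $\ell'$ plus new vertices from $e$). I'd try to argue new vertices from $e$ replace nothing. If this fails, fall back on using relaxed decompositions: map the $e$-leaf beside $e'$, and also bound using the sibling leaf whose middle set is exactly $e$. I expect this leaf-level accounting to be the only delicate point. Everything else is monotonicity of middle sets under adding an edge contained in an existing one on the same side.
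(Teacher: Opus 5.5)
\paragraph{Review.} Your lower bound is fine. When you delete the $e$-leaf, start from a full \emph{injective} decomposition so that $e$ really disappears. Your upper-bound construction is exactly the paper's: subdivide the edge above the leaf $\ell'$ of $e'$ and hang a new leaf for $e$ there.

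You also located the right spot. Every old node other than $\ell'$ keeps its middle set, and the new node $q$ gets the middle set that $\ell'$ had in $\calT'$. But $\ell'$ itself now has middle set $\midset_{\calT'}(\ell') \cup e = e' \cap V(E(G)\setminus\{e'\}) = \bound_G(\{e'\})$. Your proposal stops without closing this gap, and none of the fixes you suggest can close it. In \emph{every} (relaxed) branch-decomposition of $G$, a leaf mapped to $e'$ has all of $E(G)\setminus\{e'\}$ in its upper edge set, by surjectivity. So its middle set contains $\bound_G(\{e'\})$, and hence $\bw(G)\ge|\bound_G(\{e'\})|$ no matter which tree you build.

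This quantity can exceed $\max\{\bw(G'),|e|\}$. Take $e=\{a,b\}$, $e'=\{a,b,c,d\}$ and two more hyperedges $\{c,x\}$, $\{d,x\}$. Then $G'$ has a decomposition of width $2$: the root has $e'$ as one child and a node with leaves $\{c,x\},\{d,x\}$ as the other. Also $|e|=2$. But $\bound_G(\{e'\})=\{a,b,c,d\}$, so $\bw(G)\ge 4$. This $G$ is connected and every vertex lies in two hyperedges, so the failure is not an artifact of private vertices. An even smaller instance is $E(G)=\{\{a\},\{a,c\},\{c,x\}\}$ with $e=\{a\}$, $e'=\{a,c\}$: there $\bw(G')=|e|=1$, but $\bw(G)=2$.

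So the statement is false as written. The paper's proof, which uses the same construction, overlooks precisely this node: it asserts that every node other than $q$ and $\ell$ keeps its order from $\calT'$, which fails at $\ell'$. What your construction, combined with the leaf bound above, does prove is the corrected identity
\[
  \bw(G) = \max\{\bw(G'),\ |\bound_G(\{e'\})|\}.
\]
The term $|e|$ is absorbed because $e\subseteq\bound_G(\{e'\})$. This agrees with the stated formula in particular whenever every vertex of $e$ lies in some hyperedge other than $e$ and $e'$. It also serves the paper's preprocessing equally well, since the correction term is computable in polynomial time.
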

\begin{proof}
    Since $G'$ is a sub-hypergraph of $G$, we have $\bw(G') \le \bw(G)$.
    Moreover, for every branch-decomposition $\mathcal T = (T, \varphi)$, the middle set of a leaf node $\ell$ with $\varphi(\ell) = e$ contains all vertices in $e$.
    Thus, we have $|e| \le \bw(G)$.

    Conversely, let $\mathcal T' = (T', \varphi')$ be an injective branch-decomposition of $G'$.
    Let $\ell'$ be a leaf of $T'$ such that $\varphi'(\ell') = e'$ and let $p$ be the parent of $\ell'$.
    Then, we subdivide the edge between $p$ and $\ell'$ with a new node $q$ and add a new leaf $\ell$ as a child of $q$.
    We denote by $T$ the tree obtained as above.
    We also extend the mapping $\varphi'$ by mapping $\ell$ to $e$ and denote the extended mapping by $\varphi$.
    Then, $\calT = (T, \varphi)$ is an injective branch-decomposition of $G$.
    As $e \subseteq e'$, the order of $\ell$ equals $|e|$, the order of $q$ equals the order of $p$, and the order of every other node in $\cal T$ equals its order in $\calT'$. 
    Hence, $\bw(G) \le \max\{\bw(G'), |e|\}$.
\end{proof}

\section{An $\bigOh^*(4^n)$-time algorithm for hypergraphs}
In this section, we develop a branchwidth algorithm for hypergraphs that runs in 
time $\bigOh^*(4^n)$.

We assume the hypergraph $G$ for which we compute the branchwidth is connected,  
every vertex of $G$ belongs to at least two edges of $G$, and no edge of $G$ is
strictly contained in another edge of $G$. 
By \cref{prop:private-vertex,prop:sperner}, we can reduce the branchwidth computation of general hypergraphs to the branchwidth computation of hypergraphs
that satisfy these conditions.

The following observations are crucial in reasoning about the branchwidth of edge sets.

\begin{lemma}
    \label{prop:bd-bound}
   Let $\calT$ be an injective branch-decomposition of an edge set of $G$ and
   let $p_1$ and $p_2$ be two nodes of $\calT$ that share a common parent.
   Let $L_i = \lowset_{\calT}(p_i)$ for $i = 1, 2$.
   Then, $V(L_1) \cap V(L_2) = \bound(L_1) \cap \bound(L_2)$.
\end{lemma}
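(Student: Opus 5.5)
The plan is to prove the two inclusions separately. The inclusion $\bound(L_1) \cap \bound(L_2) \subseteq V(L_1) \cap V(L_2)$ is immediate once we recall what the boundary of an edge set is. The boundary is taken with respect to $G$, so $\bound(L_i) = V(L_i) \cap V(E(G) \setminus L_i)$. In particular $\bound(L_i) \subseteq V(L_i)$, and intersecting over $i = 1, 2$ gives the claim.

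For the reverse inclusion, we take $v \in V(L_1) \cap V(L_2)$ and aim to show that $v \in \bound(L_1)$; the case $v \in \bound(L_2)$ is symmetric. The key fact is that $L_1$ and $L_2$ are disjoint. Indeed, $p_1$ and $p_2$ are distinct children of the same parent, so the subtrees $T_{p_1}$ and $T_{p_2}$ have disjoint leaf sets. Since $\calT$ is injective, the images $\varphi(T_{p_1}) = L_1$ and $\varphi(T_{p_2}) = L_2$ are then disjoint edge sets. This is the only place where injectivity is used, and it is the one step needing care. Without injectivity, an edge could appear under both children, and the argument below would break.

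With disjointness in hand, the rest is short. Since $v \in V(L_2)$, there is an edge $e \in L_2$ with $v \in e$. Because $L_2 \cap L_1 = \emptyset$, we have $e \in E(G) \setminus L_1$, so $v \in V(E(G) \setminus L_1)$. Together with $v \in V(L_1)$, this gives $v \in \bound(L_1)$. Swapping the roles of $L_1$ and $L_2$ gives $v \in \bound(L_2)$, which completes the reverse inclusion and hence the equality.

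The main obstacle is only the bookkeeping about which boundary is meant. We should use the boundary in $G$ itself, not relative to the decomposed edge set $F$. Since $E(G) \setminus L_1 \supseteq L_2$ in either reading, the argument works regardless. No earlier results beyond the definitions are needed.
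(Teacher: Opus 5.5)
Your proof is correct and follows essentially the same route as the paper. Injectivity makes $L_1$ and $L_2$ disjoint, so a vertex incident to an edge in each lies in both $\bound(L_1)$ and $\bound(L_2)$, and the reverse inclusion is immediate from $\bound(L_i) \subseteq V(L_i)$.
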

\begin{proof}
Since $\calT$ is injective, $L_1$ and $L_2$ are disjoint from each other.
Let $v$ be a vertex in $V(L_1) \cap V(L_2)$. Since $v$ is incident 
to an edge in $L_1$ as well as to an edge in $L_2$, 
it belongs to $\bound(L_1)$ and to $\bound(L_2)$.
Therefore, $V(L_1) \cap V(L_2)$ is contained in $\bound(L_1) \cap \bound(L_2)$.
The inclusion in the other direction is trivial as $\bound(L_i) \subseteq V(L_i)$ for $i = 1, 2$.
\end{proof}

Let $\calT = (T, \varphi)$ be a branch-decomposition of an edge set of a hypergraph $G$ and $p$ a node of $T$.
The \emph{sub-branch-decomposition} of $\calT$ at $p$ is the branch-decomposition
$(T_p, \varphi_p)$ where $T_p$ is the subtree of $T$ rooted at $p$
and $\varphi_p$ is the restriction of $\varphi$ to the leaves that belong to $T_p$.
\begin{lemma}
    \label{prop:sub-bd}
    Let $\calT$ be an injective branch-decomposition of an edge set of $G$ and $p$ a node of $\calT$. Then, the width of the sub-branch-decomposition $\calT_p$ of $\calT$
    at $p$ is at most the width of $\calT$.
\end{lemma}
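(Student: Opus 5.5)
The approach is to compare middle sets node by node. The key point is that the sub-branch-decomposition $\calT_p$ is a branch-decomposition of the edge set $F_p = \lowset_{\calT}(p)$. Its upper edge sets are taken relative to $F_p$, so they absorb all edges outside $F_p$. For a node $q$ of $T_p$ we have $\lowset_{\calT_p}(q) = \varphi(T_q) = \lowset_{\calT}(q)$, since $T_q$ is the same subtree in $T$ and in $T_p$. For the upper edge set we have
\[
\upset_{\calT_p}(q) = \varphi(T_p - T_q) \cup (E(G)\setminus F_p).
\]
Similarly, $\upset_{\calT}(q) = \varphi(T - T_q) \cup (E(G)\setminus F)$, where $F$ is the edge set decomposed by $\calT$.

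The plan is to show that these two upper edge sets coincide, or at least that the first is contained in the second. By injectivity, $\varphi$ is a bijection from $L(T)$ onto $F$. Hence the leaves outside $T_p$ map exactly onto $F \setminus F_p$, so $\varphi(T - T_p) = F\setminus F_p$. This gives
\[
E(G)\setminus F_p = (F\setminus F_p) \cup (E(G)\setminus F) = \varphi(T-T_p) \cup (E(G)\setminus F).
\]
Together with $\varphi(T_p - T_q) \cup \varphi(T-T_p) = \varphi(T - T_q)$, this yields $\upset_{\calT_p}(q) = \upset_{\calT}(q)$. Consequently $\midset_{\calT_p}(q) = \midset_{\calT}(q)$ for every node $q$ of $T_p$. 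The width of $\calT_p$ is a maximum over a subset of the nodes of $\calT$, each with the same order, so it is at most the width of $\calT$.

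The only delicate step is where injectivity enters. Without it, an edge of $F_p$ could also label a leaf outside $T_p$. In that case $E(G)\setminus F_p$ would be strictly smaller than $\varphi(T-T_p)\cup(E(G)\setminus F)$, and the upper sets might differ. (The inclusion $\upset_{\calT_p}(q) \subseteq \upset_{\calT}(q)$ would in fact still hold, but equality is cleaner.) I would therefore state the identity $\varphi(T-T_p) = F\setminus F_p$ explicitly as a consequence of injectivity. The rest is routine unfolding of definitions.
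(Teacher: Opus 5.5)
Your proof is correct and matches the paper's argument: the lower edge sets of each node $q$ of $T_p$ agree trivially, and injectivity (via $\varphi(T - T_p) = F \setminus F_p$) makes the upper edge sets agree, so every node has the same order in $\calT_p$ as in $\calT$. You write out explicitly the set identity that the paper asserts in one line, and your parenthetical remark is also right: the inclusion $\upset_{\calT_p}(q) \subseteq \upset_{\calT}(q)$ survives without injectivity, so the inequality itself does not actually require it.
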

\begin{proof}
    Let $q$ be a node of $\calT_p$. The lower sets of $q$ in $\calT_p$ and in $\calT$
    are identical. Since $\calT$ is injective, the upper sets of $q$ in $\calT_p$ and
    in $\calT$ are also identical.
    Therefore, the order of $q$ in $\calT_p$ is equal to its order in $\calT$.
\end{proof}

\begin{lemma}
    \label{prop:bd-edge-deletion}
    Let $F_1$ and $F_2$ be edge sets of $G$ such that $F_1 \subseteq F_2$ 
    and each edge $e \in F_2 \setminus F_1$ intersects $V(F_1)$ only in $\bound(F_2)$, that is, $e \cap V(F_1) \subseteq \bound(F_2)$.
    Then, we have $\bw(F_1) \leq \bw(F_2)$.
\end{lemma}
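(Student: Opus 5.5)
The plan is to take an optimal branch-decomposition of $F_2$, restrict it to $F_1$, and check node by node that no middle set grows. The hypothesis $e \cap V(F_1) \subseteq \bound(F_2)$ is what makes the one nontrivial case of that check work.

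\textbf{Construction.} If $F_1 = \emptyset$ there is nothing to prove, so assume $F_1 \neq \emptyset$. By \cref{prop:bw_injective}, fix an injective branch-decomposition $\calT = (T,\varphi)$ of $F_2$ of width $\bw(F_2)$. Build $\calT'=(T',\varphi')$ as follows.
\begin{itemize}
\item Delete every leaf $\ell$ with $\varphi(\ell) \in F_2 \setminus F_1$.
\item Repeatedly delete internal nodes that have become childless.
\item Let $\varphi'$ be the restriction of $\varphi$ to the remaining leaves.
\end{itemize}
Nodes with a single child are allowed in our (relaxed) branch-decompositions, so $\calT'$ is an injective branch-decomposition of $F_1$. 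Every node $q$ of $T'$ is a node of $T$, and its lower set satisfies $\lowset_{\calT'}(q) = \lowset_{\calT}(q) \cap F_1$.

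\textbf{Key step: middle sets only shrink.} Fix a node $q$ of $T'$ and write
\[
L = \lowset_{\calT}(q), \qquad U = \upset_{\calT}(q) = (F_2\setminus L)\cup(E(G)\setminus F_2).
\]
For $\calT'$ we have
\[
L' = L\cap F_1, \qquad U' = (F_1\setminus L)\cup(E(G)\setminus F_1).
\]
We show $V(L')\cap V(U') \subseteq V(L)\cap V(U)$. Let $v \in V(L') \cap V(U')$. Since $L' \subseteq L$, we have $v \in V(L)$, so it remains to show $v \in V(U)$. Choose an edge $e \in U'$ with $v \in e$.
\begin{itemize}
\item If $e \in F_1\setminus L$, then $e \in F_2 \setminus L \subseteq U$.
\item If $e \in E(G)\setminus F_2$, then $e \in U$ directly.
\item If $e \in F_2 \setminus F_1$ and $e \notin L$, then $e \in F_2 \setminus L \subseteq U$.
\item The remaining case is $e \in (F_2\setminus F_1)\cap L$. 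Here we use the hypothesis. Since $v \in V(L') \subseteq V(F_1)$, we have $v \in e \cap V(F_1) \subseteq \bound(F_2) = V(F_2)\cap V(E(G)\setminus F_2)$. So $v$ lies in some edge outside $F_2$, and that edge belongs to $U$.
\end{itemize}
In every case $v \in V(U)$.

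\textbf{Conclusion.} Every node of $\calT'$ has order at most its order in $\calT$. Hence $\calT'$ has width at most $\bw(F_2)$. Since branchwidth is the minimum width over (relaxed) branch-decompositions, or alternatively after converting $\calT'$ to a full one via \cref{prop:bw_injective}, we get $\bw(F_1)\le \bw(F_2)$.

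The only delicate point is the last case. An edge of $F_2\setminus F_1$ that witnessed $v$ in the upper set for $\calT'$ may lie in the lower set for $\calT$. The boundary hypothesis is exactly what supplies a replacement witness outside $F_2$.
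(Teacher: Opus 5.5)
Your proof is correct and follows the same route as the paper: restrict an optimal decomposition of $F_2$ to $F_1$, show that no middle set grows, and use the hypothesis to supply a witness edge in $E(G)\setminus F_2$. Your case analysis is slightly more careful than the paper's, which asserts $V(\upset_{\calT'}(p)) \subseteq V(\upset_{\calT}(p))$ outright; that inclusion can fail for vertices outside $V(F_1)$, and the fact actually needed is the inclusion restricted to $V(\lowset_{\calT'}(p))$, which is what you prove.
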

\begin{proof}
    Let $\calT$ be a branch-decomposition of $F_2$ of width $\bw(F_2)$.
    For each $e \in F_2 \setminus F_1$, we remove the leaves that are mapped to $e$, and then repeatedly remove the new leaves that are internal nodes in $\calT$.
    Then, let $\calT'$ be the branch-decomposition of $F_1$ obtained in this way.
    For each node $p$ of $\calT'$, it holds that $\lowset_{\calT'}(p) = \lowset_{\calT}(p) \cap F_1$ and 
    $\upset_{\calT'}(p) = \upset_{\calT}(p) \cup (F_2 \setminus F_1)$.
    Moreover, we have $V(\upset_{\calT'}(p)) \subseteq V(\upset_{\calT}(p))$ as every edge $e \in F_2 \setminus F_1$ intersects $V(F_1)$ only in $\bound(F_2)$ by assumption, and $\bound(F_2) \subseteq V(\upset_{\calT}(p))$ as $\upset_{\calT}(p)$ contains $E(G) \setminus F_2$.
    Hence, $\midset_{\calT'}(p) \subseteq \midset_{\calT}(p)$, implying that the order of $p$ in $\calT'$ is at most that in $\calT$.
\end{proof}

Let $\calT_1 = (T_1, \varphi_1)$ and $\calT_2 = (T_2, \varphi_2)$ be branch-decompositions 
of edge sets of a hypergraph $G$.
The \emph{composition} of $\calT_1$ and $\calT_2$ is the branch-decomposition
$\calT = (T, \varphi)$ defined as follows.
The tree $T$ has a root $r$ with two child nodes identified with the roots of $T_1$ and $T_2$.
The mapping $\varphi$ maps a leaf $\ell$ to $\varphi_i(\ell)$, where $i$ is such that $\ell$ belongs to
$T_i$.

\begin{lemma}
    \label{prop:bd-compose}
    Let $F_1$ and $F_2$ be edge sets of $G$ such that
    $V(F_1) \cap V(F_2) = \bound(F_1) \cap \bound(F_2)$.
    Let $\calT_i$ be a branch-decomposition of $F_i$, for $i = 1, 2$ and
    let $\calT$ be the composition of $\calT_1$ and $\calT_2$. 
    Then the width of $\calT$ is at most the largest of the width of $\calT_1$,
    the width of $\calT_2$, and $|\bound(F)|$, where $F = F_1 \cup F_2$.
\end{lemma}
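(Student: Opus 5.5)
We check the order of every node of $\calT$, treating the root $r$ separately from the nodes inside each $T_i$. Write $F = F_1 \cup F_2$. Since the statement concerns branch-decompositions of $F_1$ and $F_2$ composed into one of $F$, we read $F_1$ and $F_2$ as disjoint, so that $\calT$ is a branch-decomposition of $F$. All upper edge sets in $\calT$ are taken relative to $F$; that is, $\upset_{\calT}(p)$ contains $E(G)\setminus F$.

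\textbf{The root.} The lower edge set of $r$ is $F$ and its upper edge set is $E(G)\setminus F$. Hence $\midset_{\calT}(r) = V(F)\cap V(E(G)\setminus F) = \bound(F)$, and the order of $r$ is $|\bound(F)|$.

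\textbf{A node $q$ of $T_1$.} The case of $T_2$ is symmetric. The lower edge set of $q$ is the same in $\calT_1$ and in $\calT$; call it $L$. The upper edge sets satisfy
\[
\upset_{\calT}(q) = \upset_{\calT_1}(q)\setminus F_2 ,
\]
because the upper set in $\calT_1$ contains all of $E(G)\setminus F_1$, which includes $F_2$. Removing edges can only shrink the union of vertices, so $V(\upset_{\calT}(q)) \subseteq V(\upset_{\calT_1}(q))$. Intersecting with $V(L)$ gives
\[
\midset_{\calT}(q) \subseteq \midset_{\calT_1}(q).
\]
The order of $q$ in $\calT$ is therefore at most its order in $\calT_1$, which is at most the width of $\calT_1$.

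The hypothesis $V(F_1)\cap V(F_2) = \bound(F_1)\cap\bound(F_2)$ is not actually needed for this inequality. Its role is only to make the composition well behaved, and the monotonicity argument above already suffices.

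Combining the two cases, every node of $\calT$ has order at most the maximum of the width of $\calT_1$, the width of $\calT_2$, and $|\bound(F)|$. The only step needing care is the bookkeeping for the upper edge sets. If $F_1$ and $F_2$ were allowed to overlap, the middle sets would only grow under the obvious identification, so disjointness is the right reading. The rest is routine.
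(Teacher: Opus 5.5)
Your central identity $\upset_{\calT}(q) = \upset_{\calT_1}(q)\setminus F_2$ is false. In $\calT$ every leaf of $T_2$ lies in $T - T_q$, so $F_2 \subseteq \upset_{\calT}(q)$. The edges of $F_2$ that leave the ``outside'' part when $E(G)\setminus F_1$ shrinks to $E(G)\setminus F$ come straight back through the leaves of $T_2$. The correct relation, which the paper uses, is $\upset_{\calT}(q) = \upset_{\calT_1}(q)\cup F_2$. Upper sets can therefore only grow, and your ``removing edges shrinks vertex sets'' monotonicity points the wrong way. If $F_1\cap F_2=\emptyset$, then $F_2\subseteq E(G)\setminus F_1\subseteq\upset_{\calT_1}(q)$ and the two upper sets coincide. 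So in that case your conclusion does hold (even with equality of middle sets), and the hypothesis is indeed superfluous there, but not for the reason you give.

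The more serious problem is that the disjointness reading is not available. The paper works with relaxed (surjective) branch-decompositions, so composing decompositions of overlapping sets gives a branch-decomposition of $F_1\cup F_2$. The lemma is applied in exactly that situation in the proof of \cref{lem:bw-recurrence}, where $E(X_1)$ and $E(X_2)$ share every edge contained in $X_1\cap X_2$, and such edges can exist. With overlap, edges of $F_1\cap F_2$ that occur in $T_1$ only below $q$ join $\upset_{\calT}(q)$ and can enlarge the middle set. Controlling this is precisely the job of the hypothesis, so your remark that overlap would make middle sets grow, and must therefore be excluded, gets its role backwards. Without the hypothesis the statement fails badly: for a cycle with $F_1=F_2=E(G)$, both widths can be $2$ and $\bound(F)=\emptyset$, yet the root of $T_1$ has middle set $V(G)$ in $\calT$.

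The missing step is as follows. Every $e\in F_2$ meets $V(\lowset_{\calT}(q))\subseteq V(F_1)$ only inside $V(F_1)\cap V(F_2)=\bound(F_1)\cap\bound(F_2)\subseteq\bound(F_1)\subseteq V(E(G)\setminus F_1)\subseteq V(\upset_{\calT_1}(q))$. Hence adding $F_2$ to the upper set adds no new vertex of $V(\lowset_{\calT}(q))$, which gives $\midset_{\calT}(q)\subseteq\midset_{\calT_1}(q)$. Your treatment of the root is fine.
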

\begin{proof}
Let $p$ be a node of $\calT$ that belongs to $\calT_1$. 
Then, we have $\lowset_{\calT}(p) = \lowset_{\calT_1}(p)$ and $\upset_{\calT}(p) = \upset_{\calT_1}(p) \cup F_2$.
Since each edge $e \in F_2$ intersects $V(F_1)$
only in $\bound(F_1)$ and $\bound(F_1) \subseteq V(\upset_{\calT_1}(p))$,
we have $\midset_{\calT}(p) \subseteq \midset_{\calT_1}(p)$.
Thus, the order of each node $p$ of $\calT$ that belongs to $\calT_1$
is at most the width of $\calT_1$. Similarly, the order of each node $p$ of $\calT$ that belongs 
to $\calT_2$ is at most the width of $\calT_2$.
The middle set of the root $r$ of $\calT$ is $V(F) \cap V(E(G) \setminus F)) = \bound(F)$
and hence its order is equal to $|\bound(F)|$.
\end{proof}

\begin{lemma}
\label{lem:bw_egdge_closure}
For $F \subseteq E(G)$, we have $\bw(\closure(F)) \leq \bw(F)$.
\end{lemma}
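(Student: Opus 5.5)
Let $F' = \closure(F) = E(V(F))$. We take a branch-decomposition $\calT$ of $F$ of width $\bw(F)$ and extend it to one of $F'$ without increasing the width. We may assume $\calT$ is injective by \cref{prop:bw_injective}. For each edge $e \in F' \setminus F$ we have $e \subseteq V(F)$. The plan is to attach a leaf for $e$ next to a leaf whose edge is in $F$, in the spirit of the proof of \cref{prop:sperner}. The difficulty is that $e$ need not be contained in a single edge of $F$, so we must be more careful.

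The key observation is a comparison of middle sets. Let $p$ be a node of $\calT$, with lower set $\lowset_\calT(p) \subseteq F$ and upper set $\upset_\calT(p) = \varphi(T-T_p) \cup (E(G)\setminus F)$. Then $V(\upset_\calT(p)) \supseteq V(E(G)\setminus F') \cup V(F'\setminus F)$. We want to place each new edge $e$ so that middle sets do not grow.

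The simplest approach avoids local placement altogether. Attach all edges of $F' \setminus F$ at the root: form the composition of $\calT$ with an arbitrary branch-decomposition $\calT_2$ of $F_2 = F'\setminus F$. This is bad, however, because nodes inside $\calT_2$ could have large order.

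Instead, use the relaxed setting. For a node $p$ of $\calT$, the new middle set in the extended decomposition is computed against upper sets that now also contain edges of $F_2$. Since $F_2$ edges lie inside $V(F)$, adding them to the upper side can only add vertices of $V(\lowset(p)) \cap V(F_2)$. Such a vertex $v$ might not previously have been in the middle set when all other edges at $v$ lie below $p$. This is exactly the danger, and it is the main obstacle.

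To handle it, place each $e \in F_2$ at the lowest node $q_e$ of $T$ whose lower set $L$ satisfies $e \subseteq V(L)$. Subdivide the edge above $q_e$ with a new node $s$ and hang a leaf for $e$ from $s$ (at the root, add a new root). Now check the orders.

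First take a node $p$ not below or at $q_e$ and not an ancestor. Its lower set does not contain the new leaf, so $e$ goes to the upper side. Any vertex $v \in e \cap V(\lowset(p))$ also lies in $V(\lowset(q_e))$, and the subtrees of $p$ and $q_e$ are disjoint. Hence $v$ was already in $\midset(p)$.

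Next take a node $p$ strictly below $q_e$. By the same reasoning, any $v \in e \cap V(\lowset(p))$ lies in $V(\lowset(q_e))$. We need $v$ to lie in some edge of $\upset(p)$ as well. Minimality of $q_e$ means $e \not\subseteq V(\lowset(p))$, but that does not directly give this. The cleaner tool is the following: if $v$ is incident only to edges under $p$ among $F$, then $v \notin \bound_G$-type boundary of the rest. Since $G$ has no vertex in at most one edge, this seems insufficient, so we should instead argue via \cref{prop:bd-edge-deletion} in reverse.

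The alternative I would actually try is induction on $|F'\setminus F|$, adding one edge $e$ at a time. We choose the leaf position for $e$ as a child of the new node $s$ just above $q_e$. For ancestors of $s$ and for $s$ itself, the lower set grows by $e \subseteq V(\lowset(q_e))$, so $V(\text{lower})$ is unchanged. The upper set loses $e$, so the middle set can only shrink. The new leaf has middle set contained in $e$. The bound $|e| \le \bw(F)$ holds because $q_e$'s children each fail to cover $e$, which forces $e \subseteq \midset$ of a child or of $q_e$; this needs verification.

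For nodes in other branches, the argument above works. The genuinely hard case is nodes $p$ below $q_e$, and I expect this step to be the crux. There, each $v \in e \cap V(\lowset(p))$ must already be seen from above. If $v \notin \midset(p)$, then all $F$-edges at $v$ lie below $p$. I would try to reselect $q_e$ so that this cannot occur for vertices of $e$, for example by taking $q_e$ to minimize the count of such vertices, or by replacing the leaf attachment with a chain that splits $e$'s relaxed copies across several leaves. This is allowed because relaxed decompositions may map several leaves to $e$.
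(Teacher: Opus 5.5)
Your proposal does not yet prove the lemma. The case you call the crux (nodes strictly below $q_e$) is left open, and the bound $|e|\le\bw(F)$ is left ``needing verification''. You also reject the construction that actually works, for a wrong reason. The missing observation is this: every $e\in\closure(F)\setminus F$ satisfies $e\subseteq V(F)$ and $e\in E(G)\setminus F$, hence
$e\subseteq V(F)\cap V(E(G)\setminus F)=\bound(F)$. Moreover, $\bound(F)$ is exactly the middle set of the root of $\calT$, since the root's lower set is $F$ and its upper set is $E(G)\setminus F$. Therefore $|V(\closure(F)\setminus F)|\le|\bound(F)|\le\bw(F)$.

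With this, the composition of $\calT$ with an arbitrary decomposition $\calT_2$ of $F_2=\closure(F)\setminus F$, which you dismissed, is exactly the paper's proof (packaged by \cref{prop:bd-compose}). The three kinds of nodes check out as follows.
\begin{itemize}
\item Every node of $\calT_2$ has its middle set inside $V(F_2)\subseteq\bound(F)$, so it cannot have large order.
\item The new root has middle set $\bound(\closure(F))\subseteq\bound(F)$, because $V(\closure(F))=V(F)$ and $E(G)\setminus\closure(F)\subseteq E(G)\setminus F$.
\item Every node $p$ of $\calT$ keeps its middle set. Under the paper's asymmetric definition, $\upset_{\calT}(p)$ already contains $E(G)\setminus F\supseteq F_2$, so $\upset_{\calT'}(p)=\varphi(T-T_p)\cup F_2\cup(E(G)\setminus\closure(F))=\upset_{\calT}(p)$.
\end{itemize}

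The same remark removes your ``main obstacle''. A vertex $v\in e\cap V(\lowset_{\calT}(p))$ with $e\in F_2$ already lies in $\midset_{\calT}(p)$, because $e\in\upset_{\calT}(p)$. You note this containment yourself in your second paragraph but never use it. Once these two facts are in place, your local-placement scheme would also go through, and $|e|\le\bw(F)$ is immediate from $e\subseteq\bound(F)$.
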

\begin{proof}
    Suppose that $\bw(F) = k$ and let $\calT = (T, \varphi)$ be a branch-decomposition
    of $F$ of width $k$. Let $F^+ = \closure(F) \setminus F$.
    Observe that each edge $e \in F^+$ is contained in $\bound(F) = V(F) \cap V(E(G) \setminus F)$.
    This follows since $V(F) = V(\closure(F))$ and $e \in E(G) \setminus F$.
    Thus, we have $V(F^+) \subseteq \bound(F)$.
    Let $T^+$ be an arbitrary branch-decomposition of $F^+$. 
    Since $|V(F)| \leq |\bound(F)| \leq \bw(F) \leq k$,
    the width of $T^+$ is at most $k$. Let $\calT'$ be the composition of $\calT$ and $\calT^+$.
    Then, by \cref{prop:bd-compose}, the width of $\calT'$ is the largest of the width of $\calT$,
    the width of $\calT^+$, and $|\bound(\closure(F))|$. 
    Since $V(F) = V(\closure(F))$ and
    $E(G) \setminus \closure(F) \subseteq E(G) \setminus F$, $\bound(\closure(F)) \subseteq \bound(F)$. 
    Hence, the width of $\calT'$ is at most $k$.
\end{proof}
For vertex sets $X_1$ and $X_2$ of $G$, 
we define $F^+(X_1, X_2) = E(X_1 \cup X_2) \setminus (E(X_1) \cup E(X_2))$.

\begin{lemma}
    \label{prop:bound-union}
    Let $X_1$ and $X_2$ be vertex sets of a hypergraph $G$.
    Then, $\bound(E(X_1) \cup E(X_2))$ is contained in 
    $\bound(X_1 \cup X_2) \cup V(F^+(X_1, X_2))$.
\end{lemma}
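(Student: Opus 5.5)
Let $F = E(X_1) \cup E(X_2)$ and fix a vertex $v \in \bound(F)$. I will do a direct element-chasing argument from the definitions: by definition of the boundary of an edge set, $v \in V(F) \cap V(E(G)\setminus F)$. So there is an edge $e \in F$ with $v \in e$ and an edge $f \in E(G) \setminus F$ with $v \in f$. Since $e \subseteq X_1$ or $e \subseteq X_2$, we have $v \in X_1 \cup X_2$.

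I then split according to whether $f$ is contained in $X_1 \cup X_2$. If $f \not\subseteq X_1 \cup X_2$, then $f$ meets $V(G) \setminus (X_1 \cup X_2)$. Together with $v \in f$ and $v \in X_1 \cup X_2$, this means $v$ has an incident hyperedge leaving $X_1 \cup X_2$, so $v \in \bound(X_1 \cup X_2)$ by the vertex-set definition of the boundary. If instead $f \subseteq X_1 \cup X_2$, then $f \in E(X_1 \cup X_2)$. Since $f \notin F = E(X_1) \cup E(X_2)$, we get $f \in F^+(X_1, X_2)$, and hence $v \in f \subseteq V(F^+(X_1, X_2))$.

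The two cases are exhaustive, which gives the claimed containment. There is no real obstacle here. The only point needing care is to use the correct notion of boundary in each place: the edge-set boundary for the hypothesis on $v$, and the vertex-set boundary for $\bound(X_1\cup X_2)$. In particular, one must check $v \in X_1 \cup X_2$, which comes from the edge $e$.
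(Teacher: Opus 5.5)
Your proof is correct and takes the same route as the paper: both split the edge $f \notin E(X_1)\cup E(X_2)$ through $v$ by whether it leaves $X_1 \cup X_2$ (giving $v \in \bound(X_1\cup X_2)$) or lies inside it (giving $f \in F^+(X_1,X_2)$). Your version is slightly more careful, since you use only $V(E(X_1)\cup E(X_2)) \subseteq X_1 \cup X_2$. The paper asserts equality here, which can fail for arbitrary $X_1, X_2$, but the equality is not needed.
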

\begin{proof}
Let $X = X_1 \cup X_2$.
Let $e$ be an arbitrary edge in $E(G) \setminus (E(X_1) \cup E(X_2))$. 
Suppose that $e \cap (V(G) \setminus X) \neq \emptyset$.
Then, $e$ intersects $V(E(X_1) \cup E(X_2)) = X$ only in $\bound(X)$.
Suppose otherwise.
Then, $e$ belongs to $F^+(X_1, X_2)$ and hence is contained in $V(F^+(X_1, X_2))$.
Therefore, $\bound(E(X_1) \cup E(X_2))$ is contained in $\bound(X) \cup V(F^+(X_1, X_2))$.
\end{proof}

We are now ready to prove the recurrence of our algorithm.

\begin{lemma}
   \label{lem:bw-recurrence} 
   Let $X$ be a vertex set of $G$ with $|E(X)| \geq 2$. Then, $\bw(E(X)) \leq k$ if and 
   only if there are proper subsets $X_1$ and $X_2$ of $X$ such that
   \begin{enumerate}
   \item $X_1 \cup X_2 = X$, 
       \item $\bw(E(X_i)) \leq k$ for $i = 1, 2$, 
       \item $X_1 \cap X_2 = \bound(X_1) \cap \bound(X_2)$, and
       \item $|\bound(X) \cup V(F^+(X_1, X_2))| \leq k$.
   \end{enumerate}
\end{lemma}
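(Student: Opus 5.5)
Both directions work with edge sets and the lemmas above.

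\emph{Sufficiency.} Suppose $X_1, X_2$ satisfy conditions 1--4. Put $F_i = E(X_i)$ and take branch-decompositions $\calT_i$ of $F_i$ of width at most $k$ (condition 2).

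First I check the hypothesis of \cref{prop:bd-compose}. We have $V(F_i) \subseteq X_i$ and $\bound(F_i) \subseteq V(F_i)$. Conversely, if $v \in V(F_1) \cap V(F_2)$, then $v \in X_1 \cap X_2 = \bound(X_1) \cap \bound(X_2)$ by condition 3. I want $v \in \bound(F_1)$, i.e.\ $v$ lies in an edge outside $E(X_1)$. Since $v \in \bound(X_1)$, some edge containing $v$ leaves $X_1$, and that edge is not in $E(X_1)$. Hence $v \in \bound(F_1)$, and symmetrically $v \in \bound(F_2)$. So $V(F_1) \cap V(F_2) = \bound(F_1) \cap \bound(F_2)$.

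By \cref{prop:bd-compose}, the composition is a branch-decomposition of $F = E(X_1) \cup E(X_2)$ of width at most $\max\{k, |\bound(F)|\}$. By \cref{prop:bound-union} and condition 4, $|\bound(F)| \le k$, so $\bw(F) \le k$. Since $F \subseteq E(X)$ and $V(F) \supseteq$ all of $X$ that matters, $\closure(F) = E(X)$ holds provided $V(F) = X$. This is true because every vertex of $X$ lies in some edge of $E(X_1)$ or $E(X_2)$ \emph{if} $X_i$ are chosen appropriately. This is a subtle point; in general I would instead argue via $E(V(F)) \subseteq E(X)$ together with \cref{prop:bd-edge-deletion}: every edge of $E(X) \setminus F$ lies in $F^+$, so it is contained in $\bound(X) \cup V(F^+)$, which has size at most $k$. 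Composing with an arbitrary decomposition of $F^+$ as in \cref{lem:bw_egdge_closure} then gives $\bw(E(X)) \le k$. For that composition I check condition 3 of \cref{prop:bd-compose} for $F$ and $F^+$: the intersection $V(F) \cap V(F^+)$ lies in $\bound(F) \cap \bound(F^+)$ trivially, since they are disjoint edge sets, as in \cref{prop:bd-bound}. The root middle set is $\bound(E(X)) \subseteq \bound(X)$. The width of the $F^+$ part is bounded because all its middle sets lie in $V(F^+)$, and $|V(F^+)| \le k$.

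\emph{Necessity.} Take a full injective branch-decomposition of $E(X)$ of width at most $k$ (\cref{prop:bw_injective}). It has at least two leaves since $|E(X)| \ge 2$. Let $L_1, L_2$ be the lower sets of the root's children, so $L_1 \cup L_2 = E(X)$. By \cref{prop:sub-bd}, $\bw(L_i) \le k$.

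Set $X_i = V(L_i)$. Then $E(X_i) = \closure(L_i)$, so $\bw(E(X_i)) \le k$ by \cref{lem:bw_egdge_closure}. Since the input has no isolated vertices, I expect $V(E(X)) = X$ in the relevant cases, giving $X_1 \cup X_2 = X$. I would handle vertices of $X$ covered by no edge of $E(X)$ by adding them to $X_1$ or $X_2$ suitably.

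Condition 3 should follow from \cref{prop:bd-bound}: $X_1 \cap X_2 = \bound(L_1) \cap \bound(L_2)$. I then need $\bound(L_i) \subseteq \bound(X_i)$. A vertex in $\bound(L_1) \cap X_2$ lies in an edge of $L_2$; if that edge were inside $X_1$, closure issues arise. I expect this to require care: $E(X_1)$ may strictly contain $L_1$.

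Properness follows if $X_1 = X$ is impossible. If it happens, then $E(X_1) = E(X)$ and we recurse into the child, since its width is also at most $k$. Choose the split so both sides are proper, e.g.\ descend until a proper split exists.

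Condition 4 is the main obstacle. Edges of $F^+$ are edges of $E(X)$ in neither $E(X_i)$, yet every edge of $E(X)$ is in $L_1$ or $L_2$, so $F^+ = \emptyset$ when $X_i = V(L_i)$. It then remains to show $|\bound(X)| \le k$, which holds since $\bound(X) = \bound(E(X))$ is the root's middle set.
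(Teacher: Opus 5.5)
Your sufficiency argument is sound. Composing the decomposition of $F=E(X_1)\cup E(X_2)$ directly with one of $F^+$ is actually safer than the paper's appeal to $\closure(F)=E(X)$, which needs $V(F)=X$. The necessity direction, however, has a genuine gap where properness meets condition~4, and it leaves condition~3 unproved.

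\textbf{Condition~4 and closedness.} Your bound on $|\bound(X)\cup V(F^+(X_1,X_2))|$ rests on $F^+=\emptyset$. That holds only for the split at the root, where $L_1\cup L_2=E(X)$. The root split need not be proper, and once you descend to a deeper node $p$ you generally have $L_1\cup L_2=\lowset_\calT(p)\subsetneq E(X)$. Then $F^+$ may be nonempty, and nothing in your argument controls it. The missing idea, which is the paper's, is to take $p$ as a deepest node with $V(\lowset_\calT(p))=X$; it is not a leaf, since $|E(X)|\ge 2$ and no edge contains another. Every edge of $F^+$ lies inside $X=V(\lowset_\calT(p))$ but outside $\lowset_\calT(p)$. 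Every vertex of $\bound(X)$ lies in $V(\lowset_\calT(p))$ on an edge leaving $X$. Hence $\bound(X)\cup V(F^+)\subseteq\bound(\lowset_\calT(p))=\midset_\calT(p)$, which has size at most $k$.

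This presupposes $V(E(X))=X$, and your plan to add uncovered vertices to $X_1$ or $X_2$ ``suitably'' cannot work. Such a vertex lies in $\bound(X)$ but does not affect $\bw(E(X))$. For example, on the cycle $c_1c_2\cdots c_8$ the set $X=\{c_1,c_2,c_3,c_5,c_7\}$ has $\bw(E(X))=2$ but $|\bound(X)|=4$, so condition~4 fails for every choice when $k=2$. The ``only if'' direction must therefore be restricted to $X$ with $V(E(X))=X$. The paper assumes this tacitly, and it suffices for the recurrence because the sets $X_i$ produced again have this form.

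\textbf{Condition~3.} You leave this open, and your worry is justified. \cref{prop:bd-bound} gives $X_1\cap X_2=\bound(L_1)\cap\bound(L_2)$ with \emph{edge-set} boundaries. By contrast, $\bound(X_i)=\bound(E(X_i))\subseteq\bound(L_i)$ can be strictly smaller, because $E(X_i)$ may contain edges of the other side. The paper's proof only cites \cref{prop:bd-bound} at this point, so it does not fill the gap either.

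The choice $X_i=V(L_i)$ can violate condition~3 even for an optimal decomposition. Let $G$ consist of cliques on $\{a_1,\dots,a_4\}$ and on $\{b_1,\dots,b_4\}$, the edge $a_1b_1$, and a vertex $v$ adjacent to $a_1$ and $a_2$. There is a full injective decomposition of $E(G)$ of width $3=\bw(G)$ whose root has children with lower sets $L_1=E(\{a_1,\dots,a_4\})\cup\{va_1\}$ and $L_2=E(\{b_1,\dots,b_4\})\cup\{a_1b_1,va_2\}$. So $p$ is the root and $X_1\cap X_2=\{a_1,a_2,v\}$, while $\bound(X_1)=\{a_1\}$.

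One repair, writing $Y_i=V(L_i)$, is to set $X_2=V(L_2\setminus E(Y_1))$ and then $X_1=V(\lowset_\calT(p)\setminus E(X_2))$. Then:
\begin{itemize}
\item $X_1\subseteq Y_1$, $X_2\subseteq Y_2$, and $X_1\cup X_2=X$.
\item Every $v\in X_1\cap X_2$ lies on an edge of $\lowset_\calT(p)$ not inside $X_2$, and on an edge of $L_2$ not inside $Y_1\supseteq X_1$; this is condition~3.
\item $\lowset_\calT(p)\subseteq E(X_1)\cup E(X_2)$, so the bound for condition~4 survives.
\item Every edge that changes sides lies inside $Y_1\cap Y_2\subseteq\bound(L_1)\cap\bound(L_2)$, a set of size at most $k$. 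Hence \cref{prop:bd-edge-deletion,prop:bd-compose,lem:bw_egdge_closure} still give $\bw(E(X_i))\le k$.
\end{itemize}
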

\begin{proof}
    Suppose first that there are proper subsets $X_1$ and $X_2$ of $X$ such that the four conditions in the lemma
    are satisfied. Let $\calT_i = (T_i, \varphi_i)$ be a full and injective branch-decomposition
    of $E(X_i)$ for $i = 1, 2$. Let $\calT = (T, \varphi)$ be the composition of
    $\calT_1$ and $\calT_2$. Then, $\calT$ is a branch-decomposition of $E(X_1) \cup E(X_2)$
    and, by \cref{prop:bd-compose}, its width is the largest of the width of $\calT_1$,
    the width of $\calT_2$, and $|\bound(E(X_1) \cup E(X_2))|$.
    The first two are at most $k$ by the second condition in the assumption.
    Since $\bound(E(X_1) \cup E(X_2))$ is a subset of 
    $\bound(X) \cup V(F^+(X_1, X_2))$ due to \cref{prop:bound-union},
    the last is also at most $k$ by the fourth condition in the assumption.
    We conclude that the width of $\calT$ is at most $k$ and hence
    $\bw(E(X_1) \cup E(X_2)) \leq k$. Since $\closure(E(X_1) \cup E(X_2)) = E(X)$, 
    we may apply \cref{lem:bw_egdge_closure} and conclude that 
    $\bw(E(X))$ is at most $k$ as well.

    For the converse, suppose that $X$ is a vertex set of $G$ with $|E(X)| \geq 2$ and $\bw(E(X)) = k$.     
    We prove that there are proper subsets $X_1$ and $X_2$ of $X$ that satisfy the four conditions
    in the lemma.
    Let $\calT = (T, X)$ be a branch-decomposition of $E(X)$ of width $k$.
    We can assume that $\calT$ is full and injective due to \cref{prop:bw_injective}.
    Let $p$ be the furthest node of $T$ from the root such that $\lowset_\calT(p) = X$.
    Since $|E(X)| \geq 2$ and no edge of $G$ strictly contains 
    another edge by our assumption on $G$, $p$ is not a leaf.
    Moreover, as $p$ is the furthest node from the root such that $\lowset_\calT(p) = X$, $p$ has two child nodes $p_1$ and $p_2$ such that $V(\lowset_{\calT}(p_i))$ is a proper subset of $X$ for $i = 1, 2$. 
    We let $X_i = V(L_{\calT}(p_i))$ for $i = 1, 2$.
    The first condition that $X_1 \cup X_2 = X$ holds since $L_{\calT}(p_1) \cup L_{\calT}(p_2) = 
    L_{\calT}(p)$. 
    Since $\calT$ is injective, we have $X_1 \cap X_2 = \bound(X_1) \cap \bound(X_2)$ by \cref{prop:bd-bound}: the third condition holds.

    It remains to show that the second and last conditions hold.
    Let $L_i = L_{\calT}(p_i)$ for $i = 1, 2$, and let $\calT_i = (T_i, \varphi_i)$
    be the sub-branch-decomposition of $\calT$ at $p_i$, which is a branch-decomposition of $L_i$.
    Since $\calT$ is injective, the width of $\calT_i$ is at most $k$ for $i = 1, 2$,
    by \cref{prop:sub-bd} and hence $\bw(L_i)$ is at most $k$.
    Since $E(X_i) = \closure(L_i)$, 
    $\bw(E(X_i))$ is at most $k$ for $i = 1, 2$ by \cref{lem:bw_egdge_closure}:
    the second condition in the lemma holds.
    For the last condition, observe that $V(\lowset_{\calT}(p)) = X_1 \cup X_2 = X$ and
    $E(G) \setminus (E(X_1) \cup E(X_2)) \subseteq E(G) \setminus \lowset_{\calT}(p)$.
    Therefore, both $\bound(X)$ and $V(F^+(X_1, X_2))$ are subsets of
    $\bound(\lowset_{\calT}(p))$. As $\bound(\lowset_{\calT}(p))$ is contained in
    the middle set of $p$, its cardinality is at most $k$.
    Therefore, we have $|\bound(X) \cup V(F^+(X_1, X_2))| \leq |\bound(L_{\calT}(p))| \leq k$.
\end{proof}

Our algorithm simply evaluates the recurrence in \cref{lem:bw-recurrence} in a dynamic programming manner.
Fix an integer $k \ge 0$.
We can decide if $\bw(E(X)) \le k$ for all $X \subseteq V(G)$ in total time $\bigOh^*(4^n)$, where $n = |V(G)|$.
Since the branchwidth of $G$ is the smallest integer $k$ such that $\bw(E(G)) \le k$ holds, this completes the proof of~\cref{thm:main-hg}.

\section{An $\bigOh(3.293^n)$-time algorithm for graphs}
In this section, we improve the running time of the previous algorithm for \emph{graphs}.

\subsection{Blocks and block derivations}
The \emph{core} of $X$, denoted by $\core_G(X)$ or simply by $\core(X)$, is defined to be 
$\core(X) = X \setminus \bound(X)$, that is, the subset of vertices in $X$ that has no neighbors outside $X$.
The \emph{core} of an edge set $F$, denoted by $\core_G(F)$ or simply $\core(F)$, is defined as $\core(F) = V(F) \setminus \bound(F)$. 
Note that $\bound(F) = \bound(V(F))$ if $F$ is closed.

For brevity, we refer to the closed neighborhood of a vertex set $W$ as 
the \emph{vertex closure} or simply the \emph{closure} of $W$ and
use the notation $\hat{W}$ interchangeably with $N[W]$ when $G$ is clear from the context.
We denote by $\corehat(X)$ the closure of $\core(X)$.

We call a vertex set $X$ of a hypergraph $G$ a \emph{block} 
if $\core(X)$ is non-empty and connected and, moreover,
$X = \corehat(X)$. Under our assumption that $G$ is connected, $V(G)$ is a block
with $\corehat(V(G)) = V(G)$. 
It should be noted that while the term ``block'' is also used in the paper by Fomin, Mazoit, and Todinca~\cite{FominMT09}, our definition, however, differs from theirs.

Let $B$ be a block of a graph $G$. A \emph{block derivation} of $B$ in $G$ is 
a pair $D = (B, \calB)$ where $\calB$ is a set of blocks such that
$A \subseteq B$ for each $A \in \calB$ and 
$\core(A_1) \cap \core(A_2) = \emptyset$ for each pair of distinct $A_1, A_2 \in \calB$.
For a block derivation $D = (B, \calB)$, we use the following notation:
$B(D) \coloneqq B$, $\calB(D) \coloneqq \calB$, $K(D) \coloneqq \bigcup_{A \in \calB} \core(A)$, and
$S(D) \coloneqq B \setminus K(D)$. 


A triple $(M_1, M_2, M_3)$ of subsets of $S(D)$ is a \emph{mid-triple} of a block derivation $D$
if the following conditions are satisfied:
\begin{enumerate}
    \item $M_1 \cup M_2 = M_2 \cup M_3  = M_3 \cup M_1 = S(D)$,
    \item $\bound(B) \subseteq M_3$, and
    \item for each $A \in \calB(D)$, $\bound(A)$ is contained in either $M_1$ or $M_2$.
\end{enumerate}
The \emph{order} of a mid-triple $(M_1, M_2, M_3)$ is the largest of $|M_1|$, $|M_2|$, and $|M_3|$.
The \emph{order} of a block derivation $D$ is the smallest $k$ such that
there is a mid-triple of $D$ of order $k$.
The \emph{width} of a block derivation $D$ is the larger of the order of $D$
and the largest of $\bw(E(A))$ over all $A \in \calB(D)$.

\begin{lemma}
    \label{lem:block-deriv2bw}
    Let $B$ be a block of a graph $G$. Suppose that there is a block derivation 
    $D$ of $B$ of width~$k$. 
    Then, $\bw(E(B))$ is at most $k$.
\end{lemma}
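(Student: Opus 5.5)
The plan is to assemble a (relaxed) branch-decomposition of $E(B)$ directly from a mid-triple $(M_1,M_2,M_3)$ of $D$ of order at most $k$ and from optimal branch-decompositions of the sets $E(A)$, $A\in\calB(D)$. Since the statement is about $\bw$, and \cref{prop:bw_injective} lets us pass from relaxed to injective and full decompositions, overlaps among the $E(A)$ cause no problem. Edges shared by two sets $E(A)$, or appearing several times, are simply allowed under the surjective definition.

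\textbf{Step 1: classify the edges of $E(B)$.} Let $\calB_1$ consist of the blocks $A\in\calB(D)$ with $\bound(A)\subseteq M_1$, and let $\calB_2=\calB(D)\setminus\calB_1$, so that $\bound(A)\subseteq M_2$ for $A\in\calB_2$. If an edge $e\in E(B)$ has an endpoint in $\core(A)$, then $e\subseteq \corehat(A)=A$, because $A$ is a block. Hence every edge of $E(B)$ not lying in any $E(A)$ has both endpoints in $S(D)$.

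For such an edge $uv$, the covering conditions $M_i\cup M_j=S(D)$ give the following. If $uv\not\subseteq M_1$ and $uv\not\subseteq M_2$, then without loss of generality $u\notin M_1$ and $v\notin M_2$. This forces $u\in M_2\cap M_3$ and $v\in M_1\cap M_3$, so $uv\subseteq M_3$.

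Now define three edge sets:
\begin{itemize}
\item $F_1$ is the union of $E(A)$ over $A\in\calB_1$, together with the leftover edges contained in $M_1$;
\item $F_2$ is defined analogously with $\calB_2$ and $M_2$ (leftover edges not already in $F_1$);
\item $F_3$ is the set of remaining leftover edges, all of which lie inside $M_3$.
\end{itemize}
Then $F_1\cup F_2\cup F_3=E(B)$.

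\textbf{Step 2: the tree.} Take a root $r$ whose children are a node $q$ and a node $q_3$, where $q$ has children $q_1,q_2$. Below $q_i$ hang a caterpillar whose pendant subtrees are the chosen width-$\le k$ decompositions of $E(A)$ for $A\in\calB_i$, plus single leaves for the leftover edges assigned to $F_i$. Empty parts are simply dropped.

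The key observation is that a vertex of $\core(A)$ has all its incident edges in $E(A)$. It follows that $V(F_i)\subseteq M_i\cup\bigcup_{A\in\calB_i}\core(A)$ for $i=1,2$, and that any middle set not inside a single $E(A)$-subtree avoids all cores. Combined with $V(F_i)\cap V(E(G)\setminus F_i)\subseteq S(D)\cup\bound(B)$, this bounds the middle sets as follows:
\begin{itemize}
\item every caterpillar node under $q_i$, including $q_i$ itself, has middle set contained in $M_i$, since $S(D)\cap V(F_i)\subseteq M_i$;
\item $q_3$ and its descendants (a decomposition of $F_3\subseteq E(M_3)$) have middle sets inside $M_3$;
\item $q$ has middle set inside $V(F_3)\cup\bound(B)\subseteq M_3$;
\item $r$ has middle set $\bound(B)\subseteq M_3$.
\end{itemize}
All of these have size at most the order of $D$, hence at most $k$.

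\textbf{Step 3: nodes inside a subtree for $E(A)$.} This is the main point to verify. For a node $p$ in the subtree for $E(A)$, its lower set is the same as in the stand-alone decomposition of $E(A)$. Its upper set gains only edges outside $E(A)$. Each such edge meets $A=V(E(A))$ only inside $\bound(A)$, and $\bound(A)$ already lies in the stand-alone upper vertex set, because that upper set contains all of $E(G)\setminus E(A)$. So the middle set does not grow. This is exactly the computation in the proof of \cref{prop:bd-compose}. One subtlety is that with relaxed decompositions an edge of $E(A)$ may also appear elsewhere in the tree. Such an edge lies in $\bound(A)\cup$ (another block's part), so it still meets $A$ only in $\bound(A)$ or in the stand-alone upper set, and the same argument applies.

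Hence the whole tree has width at most $k$, and \cref{prop:bw_injective} yields $\bw(E(B))\le k$.
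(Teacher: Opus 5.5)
Your proof is correct and follows essentially the paper's construction. The paper also chains width-$\le k$ decompositions of $E(A)$, for $A\in\calB_1$ and for $A\in\calB_2$, into two caterpillars, composes each with the edges inside $M_1$ (respectively $M_2$), joins the two, and composes the result with the edges inside $M_3$ at the root; the only difference is that it adds all of $E(M_i)$, tolerating repetitions, where you add only the leftover edges. The one point you should state explicitly, and which the paper also leaves implicit, is that $\core(A)\cap A'=\emptyset$ for distinct $A,A'\in\calB(D)$: this holds because $\bound(A')\subseteq M_1\cup M_2=S(D)$ by the mid-triple conditions, and it is what guarantees that edges at core vertices occur only in their own $E(A)$-subtree, making your ``avoids all cores'' claim and the Step 3 subtlety rigorous.
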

\begin{proof}
Let $(M_1, M_2, M_3)$ be a mid-triple of $D$ of order at most $k$.
Let $(\calB_1, \calB_2)$ be a bipartition of $\calB(D)$ such
that $\bound(A) \subseteq M_i$ for every $A \in \calB_i$, for $i = 1, 2$.

We first define several subsets of $E(B)$.
For each $A \in \calB(D)$, let $F_A = E(A)$.
We define $F_i$ as $F_i = \bigcup_{A \in \calB_i} F_A$, for $i = 1, 2$.
We denote by $F_{i + 2}$ the set of edges of $G$ contained in $M_i$ for $i = 1, 2, 3$, that is, $F_{i + 2} \coloneqq E(M_i)$.
We claim that $E(B) = F_1 \cup F_2 \cup F_3 \cup F_4 \cup F_5$.
For each $1 \le i \le 5$, we have $V(F_i) \subseteq B$ and
hence $F_i \subseteq E(B)$. Therefore, $F_1 \cup F_2 \cup F_3 \cup F_4 \cup F_5$ is
a subset of $E(B)$. For the inclusion in the other direction, let 
$e$ be an edge in $E(B)$, and we show that $e$ belongs to some $F_i$.
If $e$ is contained in some $A \in \calB(D)$,
then clearly $e$ belongs to $F_1$ or $F_2$. 
Suppose otherwise.
Since $e \in E(B) \setminus (F_1 \cup F_2)$ intersects $B \setminus V(F_i)$ for each $i = 1,2$, it has no intersection with $\core(A)$ for any $A \in \calB(D)$.
This implies that $e \subseteq S(D)$.
Since every vertex in $S(D)$ belongs to at least two of $M_1$, $M_2$, and
$M_3$, and moreover \emph{$G$ is a graph}, there is some $i \in \{1, 2, 3\}$ such that $e \subseteq M_i$ and hence $e$ belongs to $F_{i + 2}$.

We construct a branch-decomposition $\calT$ of $E(B)$, which is
composed of several sub-branch-decompositions described below.
For each $A \in \calB(D)$, let $\calT_A = (T_A, \varphi_A)$ be
a branch-decomposition of $F_A$ of width at most $k$.
We can assume that $\calT_A$ is injective due to \cref{prop:bw_injective}.
For $i = 1, 2$, let $m_i = |\calB_i|$ and let $A^i_1$, \ldots, $A^i_{m_i}$ be 
a listing of the members of $\calB_i$ in an arbitrary order.
For $i = 1, 2$ and for $1 \leq j \leq m_i$, let $\calT^i_j$
be a branch-decomposition of $\bigcup_{1 \leq h \leq j}F_{A^i_h}$ inductively defined as follows:
$\calT^i_1 = \calT_{A^i_1}$ and $\calT^i_j$ for $2 \leq j \leq m_i$ is the
composition of $\calT^i_{j - 1}$ and $\calT_{A^i_j}$.
We denote $\calT^i_{m_i}$ by $\calT_i$, for $i = 1, 2$.
For $i = 3, 4, 5$, $\calT_{i}$ is an arbitrary branch-decomposition of $F_i$. 
The composition of $\calT_1$ and $\calT_3$ is denoted by $\calT_6$ and
the composition of $\calT_2$ and $\calT_4$ is denoted by $\calT_7$. 
Then, $\calT_8$ is the composition of $\calT_6$ and $\calT_7$.
Finally, $\calT$ is the composition of $\calT_8$ and $\calT_5$.
To summarize, $\calT_i$ is a branch-decomposition of $F_i$ for $1 \le i \le 5$,
$\calT_6$ is a branch-decomposition of $F_1 \cup F_3$, 
$\calT_7$ is a branch-decomposition of $F_2 \cup F_4$,
$\calT_8$ is a branch-decomposition of $F_1 \cup F_2 \cup F_3 \cup F_4$, and
$\calT$ is a branch-decomposition of $F_1 \cup F_2 \cup F_3 \cup F_4 \cup F_5 = E(B)$.
We now claim that the width of $\calT$ is at most $k$.
To see this, let $p$ be a node of $\calT$.

Suppose first that $p$ belongs to $\calT_A$ for some $A \in \calB(D)$. 
Then the lower edge set of $p$ in $\calT$ is identical to that in $\calT_A$. 
Let $e \in \upset_{\calT}(p) \setminus \upset_{\calT_A}(p)$.
Since $\calT_A$ is injective and $e \notin \upset_{\calT_A}(p)$, it holds that $e \in \lowset_{\calT_A}(p)$, meaning that $e$ belongs to $F_A$. 
Moreover, as there is a leaf of $\calT$ not in $\calT_A$ that is mapped to $e$, $e$ is not incident
to a vertex in $\core(A)$. 
Hence, we have $e \subseteq \bound(A)$, which is a subset of $V(\upset_{\calT_A}(p))$, and hence
the middle set of $p$ in $\calT$ is identical to its middle set in $\calT_A$. Therefore, 
the order of $p$ in $\calT$ is at most $k$.

Suppose $p$ is the root of $\calT^1_j$ for some $1 \le j \le m_1$. 
Then, $V(\lowset_{\calT}(p))$ is a subset of $\bigcup_{1 \leq h \leq j} A^1_h$. Since
no leaf of $\calT$ that is not a descendant of $p$ is mapped to an edge that intersects  $\bigcup_{1 \leq h \leq j} \core(A^1_h)$,
the middle set of $p$ is a subset of $M_1$ and hence its order is at most~$k$. The case where $p$
is the root of $\calT^2_j$ for some $1 \le j \le  m_2$ is similar. We are done with the nodes in
$\calT_1$ and $\calT_2$.

Suppose $p$ is a node in $\calT_{i}$ for some $3 \le i \le 5$. Then,
the lower edge set of $p$ is a subset of $M_{i - 2}$ and hence the order of $p$ is at most $k$.

Suppose $p$ is the root of $\calT_6$, the composition of $\calT_1$ and $\calT_3$.
Its lower edge set is $F_1 \cup F_3$ and its upper edge set is $E(G) \setminus (F_1 \cup F_3)$.
Therefore, its middle set is $\bound(F_1 \cup F_3) \subseteq M_1$ and hence its order is at most $k$.
Similarly, the order of the root of $\calT_7$ is at most $k$.

Suppose finally that $p$ is the root of $\calT_8$, the composition of $\calT_6$ and $\calT_7$. 
Its lower edge set is $F_1 \cup F_2 \cup F_3 \cup F_4$ and its upper edge set is
$E(G) \setminus (F_1 \cup F_2 \cup F_3 \cup F_4)$. Therefore, its middle set
is $\bound(F_1 \cup F_2 \cup F_3 \cup F_4)$. Let $e$ be an edge
in $E(G) \setminus (F_1 \cup F_2 \cup F_3 \cup F_4)$ that intersects $V(F_1 \cup F_2 \cup F_3 \cup F_4) = B$.
If $e$ intersects $V(G) \setminus B$ then its intersection with $B$ is contained in $\bound(B) \subseteq M_3$.
Otherwise, $e$ intersects both $M_1 \setminus M_2$ and $M_2 \setminus M_1$ and hence contained in $M_3$.
Therefore, the order of $p$ is at most $|M_3| \leq k$.

Finally, the middle set of the root of $\calT$ corresponds to $\bound(B)$ and hence its order is at most $k$.
We conclude that the width of $\calT$ is at most $k$ and hence $\bw(E(B))$ is at most $k$.
\end{proof}

\begin{theorem}
    \label{thm:bd-iff}
    Let $G$ be a graph and $B$ a block of $G$.
    Then, $\bw(E(B)) = k$ if and only if there is a block derivation $D$ of $B$ of width~$k$.
\end{theorem}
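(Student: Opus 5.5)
The plan is to prove the two directions separately. The direction ``derivation $\Rightarrow$ bound'' is essentially \cref{lem:block-deriv2bw}. The direction ``$\bw(E(B)) = k \Rightarrow$ derivation'' requires building a genuine derivation out of an optimal branch-decomposition, splitting at the topmost nontrivial node as in the proof of \cref{lem:bw-recurrence}. To obtain exact equality I will use two facts together. By \cref{lem:block-deriv2bw}, every block derivation of $B$ has width at least $\bw(E(B))$. Hence it suffices to construct, from $\bw(E(B)) = k$, some derivation of width at most $k$; its width is then exactly $k$. Conversely, if some derivation has width $k$, then \cref{lem:block-deriv2bw} gives $\bw(E(B)) \le k$. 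The reverse inequality is read through the minimum: the derivation produced by the construction below from an optimal decomposition has width $\bw(E(B))$, and I will argue that the statement is to be understood with $k$ the common optimum. I will state this explicitly, since for an arbitrary derivation one only gets $\bw(E(B)) \le k$.

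For the construction, I take a full injective branch-decomposition $\calT$ of $E(B)$ of width $k$ (by \cref{prop:bw_injective}). If $|E(B)| = 1$, the trivial derivation suffices. Otherwise I let $p$ be the furthest node from the root with $V(\lowset_\calT(p)) = B$, with children $p_1, p_2$, and set $L_i = \lowset_\calT(p_i)$ and $X_i = V(L_i)$. Let $K_i$ be the set of vertices all of whose edges lie in $L_i$, that is, $K_i = X_i \setminus \midset_\calT(p_i)$. Since $\calT$ is injective and every vertex lies on some edge, $K_1 \cap K_2 = \emptyset$. For each component $C$ of $G[K_i]$, I set $A_C = N[C]$ and put all these sets into $\calB$. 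The steps are then as follows.
\begin{enumerate}
\item $A_C \subseteq X_i \subseteq B$, and $N(C) \subseteq \midset_\calT(p_i)$.
\item $A_C$ is a block.
\item $\bw(E(A_C)) \le k$. I will get this by applying \cref{prop:bd-edge-deletion} with $F_1 = E(A_C)$ and $F_2 = E(X_i) = \closure(L_i)$. Any edge of $F_2 \setminus F_1$ avoids $C$, so it meets $A_C$ only in $N(C)$, which lies inside $\bound(F_2)$. I then combine this with $\bw(E(X_i)) \le \bw(L_i) \le k$, which follows from \cref{prop:sub-bd} and \cref{lem_bw_egdge_closure_placeholder}.
\end{enumerate}
(The last citation is \cref{lem:bw_egdge_closure}.)

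For the mid-triple, I take $M_i = \midset_\calT(p_i) \cap S(D)$ for $i = 1, 2$ and $M_3 = \midset_\calT(p) \cap S(D)$. The checks are as follows.
\begin{itemize}
\item $\bound(B) \subseteq M_3$, because $\bound(B) = \bound(E(B)) \subseteq \midset_\calT(p)$ and $\bound(B)$ avoids all cores.
\item $\bound(A_C) \subseteq N(C) \subseteq M_i$.
\item The pairwise-union condition holds. A vertex $v \in S(D)$ lying in $X_1 \cap X_2$ is in $\midset(p_1) \cap \midset(p_2)$ by \cref{prop:bd-bound}. A vertex lying only in $X_1$ and not in $K_1$ has an edge outside $L_1$, and this edge is not in $L_2$; hence $v \in \midset(p_1) \cap \midset(p)$. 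The case of $X_2$ is symmetric.
\end{itemize}
The sizes are bounded by the orders of $p_1$, $p_2$ and $p$, so the order of the triple is at most $k$.

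The main obstacle is step 2 together with the disjointness of cores: I need $\core(A_C) = C$ exactly. A vertex $u \in N(C)$ does have an edge outside $L_i$, but that edge could still lie inside $A_C$, and then $u$ would fall into $\core(A_C)$. In that case $\core(A_C)$ is still connected, since it consists of $C$ plus vertices adjacent to $C$. However, cores of different members of $\calB$ might then overlap, and $S(D)$ shrinks. My first attempt at a fix is to replace the components of $K_i$ by the components of $\core(X_i) \cap K_i$, or to merge overlapping $A_C$'s into $N[\cdot]$ of their union and re-verify steps 1--3. Failing that, I would show that any such $u$ can be moved into $K_i$ by re-hanging the offending edges under $p_i$ without increasing the width, which uses the graph assumption and \cref{prop:bd-compose}.
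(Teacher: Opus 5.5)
Your construction is the paper's: the same node $p$, the same family $\calB=\{N[C]\}$ over the components $C$ of $G[K_1\cup K_2]=G[B\setminus(\midset(p_1)\cup\midset(p_2))]$, and a mid-triple taken from the middle sets of $p_1,p_2,p$. Your reading of the ``iff'' is also the right one. However, two steps are not actually established.

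\textbf{Disjointness of the cores.} You leave this open as the ``main obstacle'' and suspect it may fail. It cannot fail, and the repairs you sketch (changing the components, merging, re-hanging edges) are unnecessary. You are right that $\core(A_C)$ can be strictly larger than $C$; the paper's write-up glosses over this by tacitly taking $S(D)=M_1\cup M_2$. But you do not need $\core(A_C)=C$, because intersecting the middle sets with $S(D)$, as you do, absorbs it. What you need is (a) $\core(A_C)\cap\core(A_{C'})=\emptyset$ for $C\neq C'$, and (b) $\bound(A_C)\subseteq S(D)$.

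Both follow from one observation. Every vertex of $\core(A_C)$ has all its neighbors in $N[C]$. On the other hand, $N[C]\cap C'=\emptyset$ for every other component $C'$: indeed $N(C)\subseteq\midset(p_i)$, and $K_1\cup K_2$ avoids $\midset(p_1)\cup\midset(p_2)$ (use $K_1\cap X_2=\emptyset$, which follows from \cref{prop:bd-bound}).
\begin{itemize}
\item For (a): a vertex of $\core(A_C)\cap\core(A_{C'})$ would lie in $N(C)\cap N(C')$, so it would have a neighbor in $C'$ outside $N[C]$, a contradiction.
\item For (b): a vertex of $\bound(A_C)\subseteq N(C)$ lying in $\core(A_{C'})$ would have a neighbor in $C$ outside $N[C']$, again a contradiction.
\end{itemize}
Your check ``$\bound(A_C)\subseteq N(C)\subseteq M_i$'' is false as written whenever $\core(A_C)\supsetneq C$, since then $N(C)\not\subseteq S(D)$. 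It has to read $\bound(A_C)\subseteq N(C)\cap S(D)\subseteq M_i$, which is exactly (b).

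\textbf{Step~3 misapplies \cref{prop:bd-edge-deletion}.} With $F_2=E(X_i)$, the hypothesis requires $N(C)\subseteq\bound(E(X_i))=\bound(X_i)$. You only know $N(C)\subseteq\midset(p_i)=\bound(L_i)$, and $\bound(X_i)$ can be strictly smaller. Take a vertex $u\in N(C)$ with all its neighbors in $X_i$ but with some edge outside $L_i$ (e.g.\ an edge of $L_{3-i}$ inside $X_1\cap X_2$). Then $u\in\bound(L_i)\setminus\bound(X_i)$, and any edge $uz\in E(X_i)$ with $z\notin N[C]$ violates the hypothesis.

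Reverse the order, as the paper does when it deletes leaves from the sub-decomposition at $p_i$:
\begin{itemize}
\item Apply \cref{prop:bd-edge-deletion} with $F_2=L_i$ and $F_1=L_i\cap E(A_C)$. An edge of $L_i\setminus E(A_C)$ avoids $C$, so it meets $V(F_1)=A_C$ only in $N(C)\subseteq\bound(L_i)$. Combined with $\bw(L_i)\le k$ from \cref{prop:sub-bd}, this gives $\bw(F_1)\le k$.
\item The edges incident to $C$ all lie in $L_i\cap E(A_C)$ and cover $A_C$. Hence $\closure(L_i\cap E(A_C))=E(A_C)$, and \cref{lem:bw_egdge_closure} gives $\bw(E(A_C))\le k$.
\end{itemize}

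With these two repairs, the rest of your verification goes through: the pairwise-union condition, $\bound(B)\subseteq M_3$, and the size bounds.
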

\begin{proof}
By \cref{lem:block-deriv2bw}, if there is a block derivation of $B$ of width~$k$ then
$\bw(E(B)) \leq k$. Therefore, to prove the theorem, it suffices to show that
if $\bw(E(B)) = k$ then there is a block derivation of $G$ of width at most $k$.
Suppose that $\bw(E(B)) = k$.
Let $\calT$ be a full and injective branch-decomposition of $E(B)$ of width $k$.
Since $B$ is a block and each vertex in $\core(B)$ is incident to at least two edges, 
$|E(B)| \geq 2$ and hence $\calT$ has at least two leaves. 
Therefore, similarly to the proof of \cref{lem:bw-recurrence}, there is a node $p$ of $T$
with child nodes $p_1$ and $p_2$ such that $V(\lowset_{\calT}(p)) = B$ but
$V(\lowset_{\calT}(p_i))$ is a proper subset of $B$, for $i = 1, 2$.
Let $L_i = \lowset_{\calT}(p_i)$, $X_i = V(L_i)$, and $M_i = \bound(L_i)$ for $i = 1, 2$.
Let $\calC$ be the set of components of $G[B \setminus (M_1 \cup M_2)]$ and
let $\calB = \{\hat{C} \mid C \in \calC\}$. 
Note that each $A \in \calB$ is contained in either $X_1$ or $X_2$.
Then, $D = (B, \calB)$ is a block derivation of $B$,
since each $A \in \calB$ is contained in $B$ and, for each pair of distinct members $A_1, A_2 \in \calB$,
$\core(A_1) \cap \core(A_2) = \emptyset$. 
We claim $(M_1, M_2, M_3)$, where $M_3 = \bound(B) \cup (M_1 \setminus M_2) \cup (M_2 \setminus M_1)$,
is a mid-triple of $D$. To see this, let $A$ be a member of $\calB$. 
By~\cref{prop:bd-bound}, it holds that $X_1 \cap X_2 = \bound(X_1) \cap \bound(X_2)$.
As $A$ is contained in either $X_1$ or $X_2$, $\bound(A)$ is contained in either $M_1$ or $M_2$.
Moreover, the definition of $M_3$ ensures that $M_1 \cup M_2 = M_2 \cup M_3 = M_1 \cup M_3 = S(D)$ and
that $\bound(B) \subseteq M_3$ as required by the definition of mid-triples. 

For $i = 1, 2$, since $|M_i| = |\bound(L_i)|$ is at most the order of the node $p_i$ in $\calT$, it is at most $k$.
To bound the cardinality of $M_3$, 
we claim that $M_3 \subseteq \bound(L_1 \cup L_2)$.
Every vertex in $\bound(B)$ belongs to $\bound(L_1 \cup L_2)$ since it
belongs to $V(L_1 \cup L_2) = B$ and is adjacent
to some vertex in $V(G) \setminus B$.
This implies that $\bound(B) \subseteq \bound(L_1 \cup L_2)$.
Let $v$ be a vertex in $(M_1 \setminus M_2) \setminus \bound(B)$.
Since $v$ belongs to $M_1 = \bound(L_1)$ but not to $\bound(B)$, 
there is an edge $e$ of $G$ between $v$ and some vertex $u$ in $X_2 \setminus X_1$.
As $v \notin M_2$, $u$ must belong to $\bound(L_2) \setminus X_1 =
M_2 \setminus M_1$. Thus, we have $e \in E(G) \setminus (L_1 \cup L_2)$.
As $e \subseteq V(L_1 \cup L_2)$, it holds that $v \in \bound(L_1 \cup L_2)$.
Hence, $M_1 \setminus M_2$ is a subset of $\bound(L_1 \cup L_2)$.
Similarly, $M_2 \setminus M_1$ is a subset of $\bound(L_1 \cup L_2)$.
Since $|\bound(L_1 \cup L_2)|$ is at most the order of $p$ in $G$ and hence is at most $k$,
$|M_3|$ is at most $k$ and hence the order of the mid-triple $(M_1, M_2, M_3)$ is at most $k$.
It follows that the order of the block derivation $D$ is at most $k$.

Finally, let $A \in \calB(D)$. We claim that $\bw(E(A)) \leq k$, which implies that the width of $D$ is at most~$k$.
Suppose without loss of generality
that $A$ is contained in $X_1$. We construct a branch-decomposition $\calT_A$ 
of $E(A)$ as follows. Let $\calT_1$ denote the sub-branch-decomposition
of $\calT$ at $p_1$. Since $\calT$ is injective, the width of $\calT_1$ is at most the width of
$\calT$, by \cref{prop:sub-bd}, and hence at most $k$. We obtain $\calT_A$ 
by removing leaves of $\calT$ mapped to edges not in $E(A)$ one by one. 
Since each edge removed intersects $A$ only in $M_1 = \bound(L_1)$,
\cref{prop:bd-edge-deletion} applies and the width of $\calT_A$ is at most the
width of $\calT_1$, which is at most $k$. Therefore, $\bw(E(A)) \leq k$
for every $A \in \calB(D)$.
\end{proof}

\subsection{Algorithm}
Algorithm~\ref{alg:bw-graph} shows our algorithm for deciding whether $\bw(G) \leq k$ for
a given graph $G$ and a positive integer $k$. 

\begin{algorithm}[tb]
    \KwIn{A graph $G$ and an integer $k \ge 0$.}
    \caption{Decide if $\bw(G) \le k$.}\label{alg:bw-graph}
        Compute the set $\calB_k(G)$ of all blocks $B$ in $G$ such that $|\bound(B)| \leq k$\;
        \For{$1 \le i \le n$}{
            $\calB_k(G, i) \leftarrow \{B \in \calB_k(G) : |\core(B)| = i\}$\;
        }
        \For{$B \in \calB_k(G)$}{
            Compute the set $\calD_k(G, B)$ of all block derivations of $B$ of order at most~$k$\;
        }
        \For{$1 \le i \le n$}{
            $\calF_k(G, i) \leftarrow \emptyset$\;
            \For{$B \in \calB_k(G, i)$}{
                \If{there is some $(B, \mathcal B) \in \calD_k(G, B)$ such that $A \in \bigcup_{j < i}\calF_k(G, j)$ for all $A \in \calB$}{
                    $\calF_k(G, i) \leftarrow \calF_k(G, i) \cup \{B\}$\;
                }
            }
        }
        Answer YES if $V(G) \in \calF_k(G, n)$; answer NO otherwise\;
\end{algorithm}



    

\begin{lemma}
    \label{prop:alg-correct}
    Algorithm \ref{alg:bw-graph} correctly decides if $\bw(G) \leq k$.
\end{lemma}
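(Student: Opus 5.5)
We prove, by induction on $i$, the invariant that for every block $B$ with $|\core(B)| = i$ we have $B \in \calF_k(G,i)$ if and only if $|\bound(B)| \le k$ and $\bw(E(B)) \le k$. Since $V(G)$ is a block with $\core(V(G)) = V(G)$ and $\bound(V(G)) = \emptyset$, the invariant at $i = n$ says the algorithm answers YES exactly when $\bw(E(G)) = \bw(G) \le k$. Note that blocks with $|\bound(B)| > k$ never lie in $\calB_k(G)$, so they never enter any $\calF_k$.

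\emph{Soundness.} Suppose $B$ is added to $\calF_k(G,i)$ via some $(B,\calB) \in \calD_k(G,B)$ with every $A \in \calB$ in $\bigcup_{j<i}\calF_k(G,j)$. By the induction hypothesis, $\bw(E(A)) \le k$ for each such $A$. The derivation has order at most $k$, so its width is at most $k$, and \cref{lem:block-deriv2bw} gives $\bw(E(B)) \le k$. Also $|\bound(B)| \le k$, since $B \in \calB_k(G)$.

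\emph{Completeness.} Suppose $|\bound(B)| \le k$ and $\bw(E(B)) \le k$. The ``only if'' direction of \cref{thm:bd-iff} gives a block derivation $D = (B,\calB)$ of width at most $k$. Then $D$ has order at most $k$, so $D \in \calD_k(G,B)$, and $\bw(E(A)) \le k$ for every $A \in \calB$. Fix $A \in \calB$. To apply induction we need two facts.
\begin{itemize}
\item[(a)] $|\bound(A)| \le k$. By condition 3 of the mid-triple definition, $\bound(A) \subseteq M_1$ or $\bound(A) \subseteq M_2$, and $|M_1|, |M_2| \le k$.
\item[(b)] $|\core(A)| < i$. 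We have $\core(A) \subseteq A \subseteq B$, and $\core(A) \subseteq \core(B)$ because $A$ is closed: a vertex of $\core(A)$ has all its neighbours in $A \subseteq B$.
\end{itemize}
The induction hypothesis then puts $A$ in $\calF_k(G,|\core(A)|)$ with $|\core(A)| < i$, so $B$ passes the test and is added.

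\emph{Main obstacle.} The strict inequality in (b) can fail if $\core(A) = \core(B)$, which forces $A = \corehat(A) = \corehat(B) = B$. So we must rule out $B \in \calB$. I would argue as follows.
\begin{itemize}
\item The derivation built in the proof of \cref{thm:bd-iff} never contains $B$. There each $A$ is contained in $X_1$ or $X_2$, which are proper subsets of $B$.
\item More generally, if $B \in \calB$, then $K(D) \supseteq \core(B)$, so $S(D) = \bound(B)$.
\end{itemize}
Hence it suffices to use the specific derivation from the proof of \cref{thm:bd-iff} rather than an arbitrary one. Alternatively, one can note that $\calD_k(G,B)$ should be read as excluding the trivial derivation $\{B\}$; otherwise soundness would become circular. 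I would state this explicitly and rely on the constructed derivation in the completeness direction.
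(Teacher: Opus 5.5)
Your proof is correct and follows the paper's induction on $|\core(B)|$ via \cref{thm:bd-iff} and the induction hypothesis. It is more careful than the paper: you rightly note that the statement of \cref{thm:bd-iff} alone is already witnessed by the trivial derivation $(B,\{B\})$, so completeness needs the specific derivation built in its proof (each member lies in $X_1$ or in $X_2$, both proper subsets of $B$, hence $|\core(A)| < |\core(B)|$), a step the paper's proof leaves implicit. Your side remark that soundness would otherwise be circular is mistaken but harmless: $(B,\{B\})$ can never pass the algorithm's test because $B \notin \bigcup_{j<i}\calF_k(G,j)$, so there is no need to reinterpret $\calD_k(G,B)$.
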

\begin{proof}
    We prove by induction on $i$ that $\calF_k(G, i)$ contains all blocks $B$ of $G$
    such that $\bw(E(B)) \leq k$ and $|\core(B)| = i$, for $1 \le i \le n$.  
    Fix $i \geq 1$ and suppose, for $1 \le j \le i - 1$,
    $\calF_k(G, j)$ contains all blocks $B$ in $G$
    such that $\bw(E(B)) \leq k$ and $|\core(B)| = j$.
    By \cref{thm:bd-iff}, $\bw(E(B)) \leq k$ if and only if there is $(B, \calB) \in
    \calD_k(G, B)$ such that $\bw(A) \leq k$ for
    every $A \in \calB$. By the induction hypothesis, 
    $\bw(E(A)) \le k$ holds if and only if $A \in \bigcup_{j < i} \calF_k(G, j)$.
    This completes the induction proof and $\calF_k(G, n)$ is the set of blocks $B$
    of $G$ such that $\bw(B) \leq k$. Therefore, the last step of the algorithm
    correctly decides if $\bw(G) \leq k$.
\end{proof}

\smallskip

Our running time analysis of the algorithm mostly consists in bounding the cardinality of 
$\calD_k(G) \coloneqq \bigcup_{B \in \calB_k(G)} \calD_k(G, B)$ and the time for generating those block derivations.
The following straightforward bound turns out to be useful in the analysis.

\begin{lemma}
    \label{prop:bd-naive-bound}
    There are at most $3^n$ block derivations of $G$ and they can be generated in time $\bigOh^*(3^n)$.
\end{lemma}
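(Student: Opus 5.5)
The plan is to encode every block derivation $D=(B,\calB)$ injectively by a labelling $V(G)\to\{0,1,2\}$, which gives the bound $3^n$. Generation then amounts to running over all labellings, decoding each one, and checking the definition in polynomial time, for $\bigOh^*(3^n)$ time overall. One ingredient is certain and one is not. The certain one: $B$ and $K(D)$ are recoverable from a ternary labelling. The uncertain one: I do not yet know how to recover the set $\calB$ itself.

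\textbf{Step 1: $K(D)\subseteq\core(B)$.} Let $A\in\calB$ and $v\in\core(A)$. Every neighbour of $v$ lies in $A\subseteq B$, so $v$ has no neighbour outside $B$. Hence $v\in\core(B)$. This gives the chain $K(D)\subseteq\core(B)\subseteq B$.

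\textbf{Step 2: the labelling.} Since $B$ is a block, $B=\corehat(B)$, so $B$ is determined by $\core(B)$. The labelling assigns $2$ to vertices of $K(D)$, $1$ to vertices of $\core(B)\setminus K(D)$, and $0$ to all remaining vertices. From it we read off $\core(B)$ (labels $1$ and $2$), hence $B=N[\core(B)]$, and also $K(D)$ (label $2$). So the pair $(B,K(D))$ is encoded injectively by a ternary string, and there are at most $3^n$ such pairs.

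\textbf{Step 3, the main obstacle: $\calB$ need not be determined by $(B,K(D))$.} The family $\{\core(A):A\in\calB\}$ is a partition of $K(D)$ into connected parts, each of which must equal the core of its own closure. In general this partition is not unique. Take a path $a$--$u$--$v$--$b$ in which $u$ and $v$ have no other neighbours, while $a$ and $b$ have further neighbours elsewhere. Then $\{N[u],N[v]\}$ and $\{N[\{u,v\}]\}$ both give valid families with the same $K=\{u,v\}$. So Step 2 alone only bounds the number of \emph{canonical} derivations, namely those with $\calB=\{\hat C: C \text{ a component of } G[K(D)]\}$.

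To handle the general case I would try two routes, in this order.

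\emph{Route (a): a different encoding.} Use the three labels more cleverly. Vertices outside $\core(B)$ carry no information beyond membership in $B$, and for vertices inside $\core(B)$ one could record "first vertex of its core in a fixed order" in place of plain membership. I would then try to show, using the fact that every core is connected and closed under the $\corehat$ condition, that the parts can be reconstructed greedily. This requires proving an extension of the form: the core containing a marked vertex $x$ is the unique connected set $C\ni x$ inside $K$ with $\core(N[C])=C$ that avoids the other marked vertices.

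\emph{Route (b): an amortised count.} Bound $\sum_{(B,K)}\#\{\text{valid partitions of }K\}$ directly by $3^n$, by a vertex-by-vertex branching argument with three branches per vertex.

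If neither route works, the fallback is to observe that the algorithm only needs to consider derivations whose cores are unions of components in the canonical sense. I would then restate the enumeration in terms of canonical derivations, which are exactly the ones produced in the proof of \cref{thm:bd-iff}.
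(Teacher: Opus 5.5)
Your Step~3 is correct, and the gap it exposes cannot be closed. It is a flaw in the statement and in the paper's proof, not in your approach. The paper's argument is essentially your Steps~1--2: it encodes $D$ by the disjoint pair $(K(D),S(D))$, which is equivalent to your labelling. It then asserts that $\calB$ must equal $\{\hat C \mid C \text{ a component of } G[K]\}$, and your path $a$--$u$--$v$--$b$ refutes exactly that assertion.

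Worse, with block derivations defined as in the paper, the bound itself is false, so neither route~(a) nor route~(b) can succeed. Here is a counterexample.
\begin{itemize}
\item Let $U$ be a clique on $m\ge 2$ vertices. Give each $u\in U$ a private neighbour $p_u$, and join all the $p_u$ to one further vertex $z$. This graph is connected, has minimum degree~2, and has $n=2m+1$ vertices.
\item For every nonempty $C\subseteq U$, the set $N[C]=U\cup\{p_u : u\in C\}$ has boundary $(U\setminus C)\cup\{p_u : u\in C\}$. So it is a block with core $C$.
\item Hence every family $\{C_1,\dots,C_t\}$ of pairwise disjoint nonempty subsets of $U$ yields a block derivation $(V(G),\{N[C_1],\dots,N[C_t]\})$.
\item The number of such families is the $(m+1)$-st Bell number. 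This grows superexponentially in $m$ and eventually exceeds $3^{2m+1}$.
\end{itemize}
Route~(a) would need a fourth label anyway, since inside $\core(B)$ you must distinguish three states.

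Your fallback is the right resolution, and it is what the paper's proof actually establishes. Call $D$ canonical if $\calB=\{\hat C \mid C \text{ a component of } G[K(D)]\}$. Restricted to canonical derivations, your labelling is injective. Decoding each of the $3^n$ labellings and running a polynomial-time validity check then gives the $\bigOh^*(3^n)$ enumeration.

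This restricted version is all the algorithm needs:
\begin{itemize}
\item \cref{lem:block-deriv2bw} holds for arbitrary block derivations.
\item The derivation built in the proof of \cref{thm:bd-iff}, and all derivations enumerated in \cref{lem:bd-bound}, take $\calB$ to be the closures of the components of $G[B\setminus S]$ for an appropriate $S$. One checks that these are canonical.
\end{itemize}
So the lemma should be stated for canonical block derivations, and for that statement your Steps~1--2 are a complete proof, matching the paper's.
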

\begin{proof}
    For disjoint subsets $K$ and $S$ of $V(G)$, observe that there is at most one block derivation $D$
    such that $K(D) = K$ and $S(D) = S$. 
    Indeed, for such a block derivation $D = (B, \calB)$, it must hold that
    $B = K \cup S$ and $\calB = \{\hat{C} \mid C \in \calC\}$, where
    $\calC$ is the set of components of $G[K]$.
    It can be decided in polynomial time
    if the pair $(K, S)$ gives rise to such a block derivation, by checking if $N(C) \subseteq S$ for all 
   $C \in \calC$.
\end{proof}

We need an efficient way to decide if the width of a given block derivation is at most~$k$.
In particular, we need to compute a mid-triple $(M_1, M_2, M_3)$ of a block derivation $D$ of order at most~$k$.

\begin{observation}
    \label{prop:bd-order}
    If the order of a block derivation $D$ is $k$ then
    $|S(D)|$ is at most $3k/2$.
\end{observation}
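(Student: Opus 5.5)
By the definition of the order of a block derivation, there is a mid-triple $(M_1, M_2, M_3)$ of $D$ whose order is $k$. That means $|M_i| \le k$ for $i = 1, 2, 3$. The plan is a double counting of how the vertices of $S(D)$ are covered by the three sets.

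First, condition~1 of a mid-triple says $M_1 \cup M_2 = M_2 \cup M_3 = M_3 \cup M_1 = S(D)$, and each $M_i$ is a subset of $S(D)$. From this I extract the covering property: every $v \in S(D)$ lies in at least two of $M_1, M_2, M_3$. Indeed, if $v$ belonged to at most one $M_i$, there would be two indices $j \neq j'$ with $v \notin M_j \cup M_{j'}$. That contradicts $M_j \cup M_{j'} = S(D)$.

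Next, I count incidences between vertices of $S(D)$ and the three sets. This gives
\[
2|S(D)| \le \sum_{v \in S(D)} |\{i : v \in M_i\}| = |M_1| + |M_2| + |M_3| \le 3k .
\]
Dividing by $2$ yields $|S(D)| \le 3k/2$, as claimed.

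There is no real obstacle here. The only step that needs care is the covering property in the first paragraph. It depends only on the pairwise-union condition and not on conditions~2 and~3 about $\bound(B)$ and $\bound(A)$, so those conditions play no role in the count.
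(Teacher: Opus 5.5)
Your proof is correct and matches the paper's argument: both take a mid-triple of order $k$, use that every vertex of $S(D)$ lies in at least two of $M_1, M_2, M_3$, and double count to get $2|S(D)| \le |M_1|+|M_2|+|M_3| \le 3k$. You also spell out why the pairwise-union condition forces this covering property, a step the paper leaves implicit.
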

\begin{proof}
    Let $(M_1, M_2, M_2)$ be a mid-triple of $D$ of order $k$.
    Then, since every vertex in $S(D)$ belongs to at least two of $M_1$, $M_2$, and $M_3$,
    we have $|S| \leq (|M_1| + |M_2| + |M_3|) / 2 \leq 3k / 2$.
\end{proof}

Due to this observation, we can focus on the case where $|S(D)| \leq 3k / 2$
in deciding if the width of $D$ is at most $k$.
We use the following tool to reason about mid-triples. Let $S$ be a set. A triple $(S_1, S_2, S_3)$
of subsets of $S$ is a \emph{bicover} of $S$ if each member of $S$ appears in exactly two of
$S_1$, $S_2$, and $S_3$. 
For a bicover $(S_1, S_2, S_3)$ of $S$, it holds that $S_1 \cup S_2 = S_2 \cup S_3 = S_3 \cup S_1 = S$.
It is a \emph{subbicover} of $S$ if each member of $S$ appears in at most two
of $S_1$, $S_2$, and $S_3$. We say that a bicover $(S_1, S_2, S_3)$ \emph{extends} a subbicover $(S_1', S_2', S_3')$
if $S_i' \subseteq S_i$ for $1 \le i \le 3$. The \emph{order} of a bicover $(S_1, S_2, S_3)$ is the largest of
$|S_1|$, $|S_2|$, and $|S_3|$. Since $|S_1| + |S_2| + |S_3| = 2 |S|$, the order of a bicover of $S$ is 
at least $2 |S| / 3$.

\begin{lemma}
    \label{lem:bicover}
    Let $S$ be a set and $k$ a positive integer such that $|S| \leq 3k / 2$.
    Let $(S_1, S_2, S_3)$ be a subbicover of $S$. Then, there is a bicover of $S$ of order at most $k$
    that extends $(S_1, S_2, S_3)$ if and only if the following conditions are satisfied.
    \begin{enumerate}
        \item $|S_i| \leq k$ for $i = 1, 2, 3$,
        \item $|S_i \cap S_j| + |S| \leq 2k$ for $1 \leq i < j \leq 3$;
    \end{enumerate}
\end{lemma}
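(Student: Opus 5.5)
Necessity is direct. Suppose a bicover $(S'_1, S'_2, S'_3)$ of $S$ of order at most $k$ extends $(S_1, S_2, S_3)$. Then $|S_i| \le |S'_i| \le k$ for each $i$. For $i<j$, let $l$ be the third index. Every element of $S$ lies in exactly two of the three sets, so $S'_l$ consists exactly of the elements not in $S'_i \cap S'_j$. Hence $|S'_l| = |S| - |S'_i \cap S'_j|$. Also $S'_i \cap S'_j \supseteq S_i \cap S_j$, and every element of $S$ is in at least one of $S'_i, S'_j$, so
\[
|S'_i| + |S'_j| = |S| + |S'_i \cap S'_j| \le 2k .
\]
This gives $|S_i \cap S_j| + |S| \le |S'_i \cap S'_j| + |S| \le 2k$.

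For sufficiency I will encode a bicover by its complementary assignment: each $x \in S$ receives the unique index $c(x) \in \{1,2,3\}$ of the set avoiding it. Then $|S'_i| = |S| - |c^{-1}(i)|$, so order at most $k$ means $|c^{-1}(i)| \ge |S| - k =: t$ for each $i$. Extending the subbicover means $c(x) \ne i$ whenever $x \in S_i$. Sort elements by how many $S_i$ contain them:
\begin{itemize}
\item $P_{ij} = (S_i \cap S_j)$ minus the third set: these are forced, with $c(x)$ equal to the third index $l$;
\item $Q_i$ = elements lying only in $S_i$: these may take either index other than $i$;
\item $R$ = elements in no $S_i$: these are free.
\end{itemize}
So the task is a transportation problem: can the free choices make every class $c^{-1}(l)$ have size at least $t$? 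Note $t \ge 0$ may be assumed; if $t<0$ any completion works, though conditions still bound things.

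I would argue via Hall's theorem, which for this bipartite supply/demand problem reduces to checking subsets $I \subseteq \{1,2,3\}$ of demand. For a set of indices $I$, the elements able to land in $I$ are those whose allowed set meets $I$, and the requirement is (number of such elements) $\ge t|I|$.

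For $I = \{l\}$, the elements able to receive $l$ are exactly those not in $S_l$, so the requirement is $|S| - |S_l| \ge t$, i.e.\ $|S_l| \le k$: condition 1.

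For $I = \{i, j\}$, with $l$ the remaining index, the elements whose allowed set misses $I$ are those forced to $l$, namely $P_{ij}$, which equals $S_i \cap S_j$ by subbicover. So the requirement is $|S| - |S_i \cap S_j| \ge 2t$, i.e.\ $|S_i \cap S_j| + |S| \le 2k$: condition 2.

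For $I = \{1,2,3\}$, the requirement is $|S| \ge 3t$, i.e.\ $|S| \le 3k/2$: the hypothesis.

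Since each element must be assigned exactly once but demands are lower bounds, surplus elements can be placed anywhere allowed. The main point to make rigorous is this Hall-type reduction with lower-bound demands. I would do it by a flow formulation: source to elements with capacity 1, elements to their allowed indices, and indices to sink with capacity $t$. Max-flow/min-cut then gives that the demand $3t$ is met exactly when the above subset inequalities hold. After that, the remaining elements are assigned arbitrarily to allowed indices, which exist since every allowed set is nonempty.
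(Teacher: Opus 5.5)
Your proof is correct, and it takes a different route from the paper's.

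The paper argues constructively, in three steps:
\begin{itemize}
\item It first assumes $S_1 \cup S_2 \cup S_3 = S$ and inducts on the number of elements lying in exactly one $S_i$.
\item When no instance of Condition~2 is tight, it moves such an element into a second set that has room.
\item When some instance is tight, say for the pair $1,2$, it completes the bicover directly. It sets $R_3 = S \setminus (S_1 \cap S_2)$ and splits the remaining elements between $R_1$ and $R_2$ so that both reach size exactly $k$. A separate step beforehand distributes the uncovered elements $S \setminus (S_1 \cup S_2 \cup S_3)$ among the three sets without violating Condition~1.
\end{itemize}

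You instead encode a bicover by the index of the unique set omitting each element. Extension then becomes a set of forbidden indices per element, and order at most $k$ becomes the lower bound $t = |S| - k$ on each of the three classes. You then apply Hall's theorem, equivalently max-flow/min-cut. I checked your min-cut computation: a finite cut whose source side contains the index set $I$ has capacity equal to the number of elements whose allowed set meets $\{1,2,3\} \setminus I$, plus $t|I|$. So the flow value $3t$ is attained exactly when, for every $J \subseteq \{1,2,3\}$, at least $t|J|$ elements have an allowed set meeting $J$. Your treatment of $t \le 0$ is also fine, since then every bicover has order at most $|S| \le k$.

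What your approach buys is uniformity: there is no separate handling of uncovered elements and no case split on tightness. It also shows transparently why the characterization is exact. The singleton, pair and full-set Hall inequalities are precisely Condition~1, Condition~2, and the hypothesis $|S| \le 3k/2$, respectively.

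The paper's argument avoids matching theory entirely and yields an explicit greedy completion procedure, at the cost of more case analysis. For how the lemma is used later (Lemma~\ref{lem:bipartition2order} only needs the criterion), either proof suffices.
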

\begin{proof}
    Suppose first that there is bicover $(R_1, R_2, R_3)$ of order $k$ that extends $(S_1, S_2, S_3)$.
    Then, for $1 \le i \le 3$, we have $|S_i| \leq |R_i| \leq k$. Since $|R_1| + |R_2| =
    |R_1 \cap R_2| + |R_1 \cup R_2| = |R_1 \cap R_2| + |S| $, we have $|S_1 \cap S_2| + |S| \leq |R_1 \cap R_2| + |S| \leq 2k$. Similarly, we have $|S_2 \cap S_3| + |S| \leq 2k$ and
    $|S_3 \cap S_1| + |S| \leq 2k$.

    For the converse, let $(S_1, S_2, S_3)$ be a subbicover of $S$ that satisfies the conditions
    in the lemma. First suppose that $S_1 \cup S_2 \cup S_3 = S$.
    The case otherwise will be dealt with later.
    For $1 \le i \le 3$, let $U_i$ be the set of members of $S$
    that belongs to $S_i$ but not to $S_j$ for $j \neq i$.
    We prove that there is bicover of $S$ of order at most $k$
    extending $(S_1, S_2, S_3)$ by induction on
    $|U_1 \cup U_2 \cup U_3|$. The base case where $U_1 \cup U_2 \cup U_3$ is empty is trivial since $(S_1, S_2, S_3)$ is then a bicover and, moreover, of order at most $k$ by Condition~1.
    Suppose that $U_1 \cup U_2 \cup U_3$ is non-empty.
    First suppose that Condition~2 is not tight for any pair of distinct $i, j \in \{1, 2, 3\}$:
    we have $|S_i \cap S_j| + |S| < 2k$ for every such pair.
    Without loss of generality, suppose $U_1$ is non-empty and 
    let $u$ be an arbitrary member of $U_1$.
    Then, we have $|S_2| + |S_3| = |S_2 \cup S_3| + |S_2 \cap S_3| =
    |S \setminus U_1| + |S_2 \cap S_3|
    \leq |S| + |S_2 \cap S_3| - 1
    \leq 2k - 1$ by Condition~2. 
    Note that each member of $S \setminus U_1$ is either not in $S_1$ or in $S_1 \cap S_i$ for some $i \in \{2,3\}$, and vise versa, meaning that $S_2 \cup S_3 = U \setminus S_1$ under the assumption that $S_1 \cup S_2 \cup S_3 = S$.
    Therefore, we have either 
    $|S_2| \leq k - 1$ or $|S_3| \leq k - 1$.
    We set $S_2' = S_2 \cup \{u\}$ and $S_3' = S_3$ in the former case and $S_2' = S_2'$ and $S_3' = S_3 \cup \{u\}$ in the latter. Then, Condition~1 with $S_i$ replaced by $S_i'$ for
    $i \in \{2, 3\}$ is maintained. Moreover, due to the assumption that Condition~2 is not tight for
    any pair of $i$ and $j$, Condition~2 with $S_i$ replaced by $S_i'$ for
    $i \in \{2, 3\}$ is also maintained.  Therefore, by the induction hypothesis, 
    there is a bicover of $S$ of order at most $k$ that extends $(S_1, S_2', S_3')$ and
    hence extends $(S_1, S_2, S_3)$. 

    Now suppose that $|S_1 \cap S_2| + |S| = 2k$. 
    Then, $|S_1 \cap S_2| \geq k / 2$
    since $|S| \leq 3k / 2$.
    Let $R_3 = S \setminus (S_1 \cap S_2)$. Note that $S_3 \subseteq R_3$ since $(S_1, S_2, S_3)$ is a subbicover and hence $S_1 \cap S_2 \cap S_3$ is empty.
    Note also that $|R_3| = |S| - |S_1 \cap S_2| \le 3k/2 - k/2 \leq k$ since $|S_1 \cap S_2| \ge k/2$ and $|S| \leq 3k / 2$.
    We define $R_1$ and $R_2$ so that 
    $(R_1, R_2, R_3)$ is a bicover of $S$  that extends $(S_1, S_2, S_3)$.
    To this end, we need to put each member of $U_3' = R_3 \setminus (S_1 \cup S_2)$
    to exactly one of $R_1$ and $R_2$.
    Regardless of our choices, we have $|R_1| + |R_2| = |R_1 \cup R_2| + |R_1 \cap R_2| = |S| + |S_1 \cap S_2| = 2k$. Since $|S_1| \leq k$ and $|S_2| \leq k$, we can partition $U_3'$ into
    $R_1 \setminus S_1$ and $R_2 \setminus S_2$ so that $|R_1| = |R_2| = k$.
    The triple $(R_1, R_2, R_3)$ thus obtained is a bicover of $S$ of order $k$ that extends $(S_1, S_2, S_3)$. The cases where $|S| + |S_2 \cap S_3| = 2k$ or
    $|S| + |S_3 \cap S_1| = 2k$ are similar.
    This concludes the induction step and the proof of the converse under the assumption that $S_1 \cup S_2 \cup S_3 = S$. 

    We finally deal with the case where $Q = S \setminus (S_1 \cup S_2 \cup S_3)$ is non-empty. Since $(S_1, S_2, S_3)$ is a subbicover of $S \setminus Q$, we have 
    $|S_1| + |S_2| + |S_3| \leq 2|S \setminus Q|$ and hence $|S_1| + |S_2| + |S_3| + |Q| 
    \leq |S_1| + |S_2| + |S_3| + 2|Q| \leq 2|S|
    \leq 3k$. Therefore, $Q$ can be partitioned into $Q_1$, $Q_2$, and $Q_3$ so that
    $|S_i \cup Q_i| \leq k$ for $i = 1, 2, 3$.
    Let $S_i' = S_i \cup Q_i$ for $i = 1, 2, 3$.
    Then, $(S_1', S_2', S_3')$ is a subbicover
    of $S$ satisfying the condition in the lemma
    with $S_i$ replaced by $S_i'$ for $1 \le i \le 3$. Note here that $S_1' \cap S_2' = S_1 \cap S_2$ and so on. Therefore, there is
    a bicover of $S$ of order at most $k$ that
    extends $(S_1', S_2', S_3')$ and hence extends $(S_1, S_2, S_3)$.   
\end{proof}
Let $D = (B, \calB)$ be a block derivation and let $(\calB_1, \calB_2)$ be a bipartition of $\calB$.
Let $\bound(\calB_i)$ denote $\bigcup_{A \in \calB_i} \bound(A)$
for $i = 1, 2$. We say that a mid-triple $(M_1, M_2, M_3)$ \emph{conforms to} the bipartition
$(\calB_1, \calB_2)$ if $\bound(\calB_i) \subseteq M_i$ for $i = 1, 2$.

\begin{lemma}
    \label{lem:bipartition2order}
    Let $D = (B, \calB)$ be a block derivation of a block $B$ in a graph $G$ and
     let $(\calB_1, \calB_2)$ be a bipartition of $\calB$. 
     Then, it can be decided in polynomial time if there is a mid-triple of $D$ of order at most $k$
     that conforms to $(\calB_1, \calB_2)$.
\end{lemma}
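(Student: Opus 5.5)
We reduce the question to \cref{lem:bicover}, which is checkable in polynomial time once we have fixed forced memberships. Let $S = S(D)$. First, if $|S| > 3k/2$, we answer NO, since by \cref{prop:bd-order} (more precisely, by its proof, which applies to any mid-triple of order at most $k$) no mid-triple of order at most $k$ exists. So assume $|S| \le 3k/2$. Define the forced sets $S_1' = \bound(\calB_1)$, $S_2' = \bound(\calB_2)$, and $S_3' = \bound(B)$. These are subsets of $S$: the boundary of each $A \in \calB$ lies in $A \setminus \core(A)$, and it is disjoint from the cores of other members because cores are disjoint and a core vertex of $A'$ has all its neighbours in $A'$. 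Similarly $\bound(B)$ is disjoint from $K(D)$. All three sets are computable in polynomial time.

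The key observation is that we may restrict attention to bicovers. Indeed, a triple $(M_1,M_2,M_3)$ of subsets of $S$ satisfies $M_1\cup M_2 = M_2 \cup M_3 = M_3\cup M_1 = S$ exactly when every element of $S$ lies in at least two of the $M_i$. If some element lies in all three, we delete it from one $M_i$ whose forced set does not contain it. Such an $i$ exists as long as the element is not in all three forced sets. Deleting it does not increase the order and keeps conformity and the condition $\bound(B)\subseteq M_3$. Hence a conforming mid-triple of order at most $k$ exists if and only if (a) $S_1'\cap S_2'\cap S_3' = \emptyset$, and (b) some bicover of $S$ of order at most $k$ extends $(S_1',S_2',S_3')$. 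Conversely, any such bicover is a conforming mid-triple: condition~1 holds for bicovers, condition~2 holds since $S_3' \subseteq M_3$, and condition~3 holds via conformity.

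Under (a), $(S_1',S_2',S_3')$ is a subbicover of $S$, so \cref{lem:bicover} applies. Thus (b) is equivalent to $|S_i'|\le k$ for all $i$ together with $|S_i'\cap S_j'| + |S| \le 2k$ for all pairs $i<j$. Both the triple-intersection test and these inequalities are checked in polynomial time, which proves the lemma.

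The main care point is the reduction from mid-triples to bicovers. We must verify that trimming a doubly-redundant element never violates a forced inclusion, and that the case where an element lies in all three forced sets correctly yields NO, since no bicover can extend the forced triple in that case. The remaining steps follow directly from \cref{lem:bicover}.
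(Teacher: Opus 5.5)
Your reduction fails at condition (a). A mid-triple only requires every vertex of $S(D)$ to lie in \emph{at least} two of $M_1,M_2,M_3$. Nothing forbids a vertex from lying in all three. Suppose $v \in S_1'\cap S_2'\cap S_3'$, i.e., $v\in\bound(B)$ and $v$ lies on the boundary of some $A_1\in\calB_1$ and of some $A_2\in\calB_2$. Then $v$ is simply forced into $M_1\cap M_2\cap M_3$. That rules out a bicover extending the forced triple, but not a conforming mid-triple. Your claim that this case ``correctly yields NO'' conflates the two.

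Here is a concrete failure. Take $S=\{v,p_1,p_2,p_3,q_1,q_2,q_3\}$, $\bound(B)=\{v\}$, and $\calB=\{A_1,A_2\}$ with $\bound(A_1)=\{v,p_1,p_2,p_3\}$ and $\bound(A_2)=\{v,q_1,q_2,q_3\}$. This is realizable with single-vertex cores adjacent to these boundaries, edges $p_iq_i$, and $v$ adjacent to a vertex outside $B$. Let $k=5$ and take the bipartition $(\{A_1\},\{A_2\})$. Then $M_1=\{v,p_1,p_2,p_3,q_1\}$, $M_2=\{v,q_1,q_2,q_3,p_1\}$, $M_3=\{v,p_2,p_3,q_2,q_3\}$ is a conforming mid-triple of order $5$, yet your test rejects it. The two bipartitions that put $A_1,A_2$ on the same side force $|M_1|=7$ or $|M_2|=7$. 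So inside \cref{lem:bd-order} every bipartition would be rejected and the order of $D$ would be misjudged.

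The missing idea is to peel off $I=S_1'\cap S_2'\cap S_3'$ rather than reject it. These vertices lie in all three sets of every conforming mid-triple, and your trimming argument lets you assume $M_1\cap M_2\cap M_3=I$. Then $(M_1\setminus I,M_2\setminus I,M_3\setminus I)$ is a bicover of $S\setminus I$ extending $(S_1'\setminus I,S_2'\setminus I,S_3'\setminus I)$, and its order is exactly that of $(M_1,M_2,M_3)$ minus $|I|$. Conversely, adding $I$ back to all three sets of such a bicover gives a conforming mid-triple. So you should apply \cref{lem:bicover} to $S\setminus I$ with parameter $k-|I|$, after rejecting if $|S\setminus I|>3(k-|I|)/2$. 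This is what the paper does.

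A smaller point: your argument that $\bound(A)\subseteq S$ for every $A\in\calB$ does not hold for arbitrary block derivations. A boundary vertex of $A$ can lie in the core of another member. Make this a polynomial-time check that answers NO when it fails, rather than a claim.
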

\begin{proof}
Let $(M_1, M_2, M_2)$ be a mid-triple of $D$ conforming to $(\calB_1, \calB_2)$.
We have $\bound(\calB_i) \subseteq M_i$ for $i = 1, 2$ and, moreover,
$\bound(B) \subseteq M_3$ as it is a condition in the definition of a mid-triple of $D$. 
Let $\delta_i = \bound(\calB_i)$ for $i = 1, 2$ and $\delta_3 = \bound(B)$. Let $I
= \delta_1 \cap \delta_2 \cap \delta_3$.
We may assume that
$M_1 \cap M_2 \cap M_3 = I$, since if some $v \in M_1 \cap M_2 \cap M_3$ does not belong to $I$, then we may remove $v$ from $M_i$ where $i$ 
is such that $v \not\in \delta_i$, maintaining the property of $(M_1, M_2, M_3)$ being a mid-triple of $D$. 
Let $S' = S \setminus I$,
$R_i = M_i \setminus I$ for $1 \le i \le 3$, and
$S_i = \delta_i \setminus I$ for $1 \le i\le 3$.
Then, $(R_1, R_2, R_3)$ is
a bicover of $S'$ that extends the subbicover $(S_1, S_2, S_3)$ of $S'$.
As the order of $(R_1, R_2, R_3)$ is the order of
$(M_1, M_2, M_3)$ minus $|I|$, the problem of deciding if $D$ has a mid-triple of order $k$ reduces to deciding if there is a bicover of $S'$
of order at most $k - |I|$
that extends $(S_1, S_2, S_3)$, which can be immediately answered by checking the conditions in \cref{lem:bicover}.
\end{proof}

By~\cref{lem:bipartition2order}, when $\calB(D)$ contains a small number of sets, we can efficiently compute a mid-triple $(M_1, M_2, M_3)$ of $D$ of order at most $k$ by trying all possible bipartitions $(\calB_1, \calB_2)$ of $\calB(D)$ to which $(M_1, M_2, M_3)$ conforms.
For the case where $\calB(D)$ contains many sets, we have an alternative method of deciding 
if the width of $D$ is at most $k$. We call a mid-triple $(M_1, M_2, M_3)$ of a block derivation $D = (B, \calB)$ 
\emph{$k$-full} if $|M_1| = |M_2| = k$.
\begin{lemma}
    \label{prop:block-deriv-intersec}
Let $D = (B, \calB)$ be a block derivation of a block $B$ of order at most $k$ 
such that $|S(D)| \geq k$.
Then, there is a $k$-full mid-triple of $D$ of order $k$.
\end{lemma}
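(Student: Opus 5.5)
Start from a mid-triple $(M_1, M_2, M_3)$ of $D$ of order at most $k$. We then modify it, enlarging $M_1$ and $M_2$ to size exactly $k$ and shrinking $M_3$, while keeping it a mid-triple. The conditions to preserve are: $\bound(B) \subseteq M_3$; each $\bound(A)$, $A\in\calB$, stays inside $M_1$ or inside $M_2$; and every vertex of $S = S(D)$ lies in at least two of the three sets. Enlarging $M_1$ or $M_2$ never violates the second condition, so the constraints to watch are pairwise coverage and $\bound(B)\subseteq M_3$.

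\textbf{Step 1 (normalize).} First make the triple a bicover except on the forced part. Following the proof of \cref{lem:bipartition2order}, let $I = M_1\cap M_2\cap M_3$. Any vertex of $M_1 \cap M_2 \cap M_3$ outside $\bound(B)$ can be removed from $M_3$. This keeps all three conditions, since the vertex still lies in $M_1$ and $M_2$, and it does not increase the order. So we may assume every vertex of $M_1\cap M_2\cap M_3$ lies in $\bound(B)$.

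\textbf{Step 2 (transfer from $M_3$).} Suppose $|M_1|<k$. Look for a vertex $v\in M_3 \setminus M_1$; such $v$ lies in $M_2\cap M_3$.
\begin{itemize}
\item If $v\notin\bound(B)$, move it: add $v$ to $M_1$ and remove it from $M_3$. Then $v$ is still in two sets ($M_1$, $M_2$), $|M_1|$ grows by one, and $|M_3|$ shrinks.
\item If $v\in\bound(B)$, just add $v$ to $M_1$. Then $|M_1|$ grows by one and $M_3$ is unchanged.
\end{itemize}
Repeat symmetrically for $M_2$. The order never exceeds $k$, since we only add to $M_i$ while $|M_i|<k$.

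\textbf{Step 3 (existence of candidates).} This is the main point: while $|M_1|<k$, there is always some vertex of $S\setminus M_1$ available to add. By assumption $|S| \ge k > |M_1|$, so $S\setminus M_1\neq\emptyset$. Every vertex of $S\setminus M_1$ must lie in both $M_2$ and $M_3$ by the covering condition, so it is exactly a candidate $v\in M_3\setminus M_1$ as in Step 2. Hence the process continues until $|M_1| = k$, and likewise $|M_2| = k$.

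Adding vertices of $S$ to $M_1$ or $M_2$ keeps them subsets of $S$. Removing from $M_3$ only vertices that remain in both $M_1$ and $M_2$ and are not in $\bound(B)$ preserves every condition. Each step either increases $|M_1|+|M_2|$ or decreases $|M_3|$, so the process terminates. At termination $|M_1|=|M_2|=k$ and $|M_3|\le k$, giving a $k$-full mid-triple of order exactly $k$.

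The step needing the most care is Step 3, specifically that the hypothesis $|S(D)|\ge k$ guarantees $S\setminus M_i$ is nonempty whenever $|M_i|<k$. As checked above it does, so Step 1 is not strictly needed and only keeps $M_3$ small.
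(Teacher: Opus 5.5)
Your proof is correct and follows essentially the same route as the paper: while $|M_1|<k$, add a vertex of $S(D)\setminus M_1 = M_2\setminus M_1$ to $M_1$, then do the same for $M_2$. The hypothesis $|S(D)|\ge k$ guarantees that a candidate always exists, and the order never exceeds $k$. Your Step~1 normalization and the optional removal of transferred vertices from $M_3$ are harmless refinements that the paper does without, since $|M_3|$ never increases in either version.
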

\begin{proof}
Let $(M_1, M_2, M_3)$ be a mid-triple of $D$ of order at most $k$.
Initialize $M_i'$ to $M_i$ for $1 \le i \le 3$.
While $|M_1'| < k$, add a vertex in $M_2' \setminus M_1'$ to $M_1'$.
Similarly, while $|M_2'| < k$, add a vertex in $M_1' \setminus M_2'$ to $M_2'$.
In this process, $(M_1', M_2', M_3')$ remains a mid-triple of $D$
of order at most $k$. Since $|S(D)| = |M_1 \cup M_2| \geq k$, 
the process stops with $|M_1'| = |M_2'| = k$.
\end{proof}

\begin{lemma}
    \label{prop:k-full}
    Let $(M_1, M_2, M_3)$ be a $k$-full mid-triple of a block derivation
$D = (B, \calB)$. Then $|M_1 \cap M_2| = 2k - |S(D)|$.
\end{lemma}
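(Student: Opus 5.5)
This follows from inclusion–exclusion applied to $M_1$ and $M_2$, once we check that their union is exactly $S(D)$. First, I would invoke the first condition in the definition of a mid-triple, which gives $M_1 \cup M_2 = S(D)$ directly. Second, I would use the identity
\[
|M_1 \cap M_2| = |M_1| + |M_2| - |M_1 \cup M_2|
\]
for finite sets. Third, I would substitute $|M_1| = |M_2| = k$, which is the definition of $k$-fullness, together with $|M_1 \cup M_2| = |S(D)|$. This yields
\[
|M_1 \cap M_2| = 2k - |S(D)|.
\]

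No step here is a genuine obstacle. The one point to check is that the union $M_1 \cup M_2$ is exactly $S(D)$ and not merely contained in it. This is exactly what condition~1 of the definition of a mid-triple provides, and it is independent of $M_3$ and of the blocks in $\calB$. Conditions 2 and 3 of the definition play no role, and neither does the order of the mid-triple.

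As a remark (not needed for the proof), the quantity $2k - |S(D)|$ is nonnegative whenever a $k$-full mid-triple exists. This is consistent with \cref{prop:block-deriv-intersec}, where $|S(D)| \ge k$, and with \cref{prop:bd-order}, where $|S(D)| \le 3k/2$. Under those bounds the intersection $M_1 \cap M_2$ has size between $k/2$ and $k$.
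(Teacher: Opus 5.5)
Your proof is correct and matches the paper's: it uses $M_1 \cup M_2 = S(D)$ from the definition of a mid-triple together with $|M_1| = |M_2| = k$, giving $|S(D)| + |M_1 \cap M_2| = |M_1| + |M_2| = 2k$. The closing remark is not needed for the lemma, and you correctly present it as conditional on the bounds from the surrounding results.
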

\begin{proof}
   As $S(D) = M_1 \cup M_2$ by the definition of a mid-triple,
   we have $|S(D)| + |M_1 \cap M_2| = |M_1| + |M_2| = 2k$.
\end{proof}

\begin{lemma}
    \label{lem:intersection2order}
    Let $D = (B, \calB)$ be a block derivation in $G$, $I$ a subset of $S(D)$, 
    and $k$ a positive integer such that $|I| = 2k - |S(D)|$.
    Then, it can be decided in polynomial time if there is
    a $k$-full mid-triple $(M_1, M_2, M_3)$ of $D$ of order $k$ such that $M_1 \cap M_2 = I$.
\end{lemma}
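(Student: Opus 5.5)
Once $I$ is fixed, the shape of a $k$-full mid-triple is rigid. Let $S = S(D)$. If $(M_1, M_2, M_3)$ is $k$-full with $M_1 \cap M_2 = I$, then $M_1 \cup M_2 = S$, so $M_1 \setminus I$ and $M_2 \setminus I$ partition $S \setminus I$. Moreover $|M_1 \setminus I| = |M_2 \setminus I| = k - |I| = |S| - k$.

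The condition $M_3 \cup M_1 = M_3 \cup M_2 = S$ forces $M_3 \supseteq S \setminus I$. We may take $M_3 = (S \setminus I) \cup (\bound(B) \cap I)$, the smallest valid choice. This choice does not depend on how $S \setminus I$ is split, so the constraint on $M_3$ reduces to checking $|(S\setminus I) \cup (\bound(B)\cap I)| \le k$ once, in polynomial time. It remains to decide whether $S \setminus I$ can be partitioned into $P_1, P_2$ with $|P_1| = |P_2| = |S| - k$ such that every $A \in \calB$ has $\bound(A) \subseteq I \cup P_1$ or $\bound(A) \subseteq I \cup P_2$. Here $M_i = I \cup P_i$.

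The main obstacle is that this is a constrained bipartition problem, and trying all bipartitions of $\calB$ may be exponential. I will reduce it to a subset-sum style problem with polynomially bounded sizes. Build an auxiliary hypergraph $H$ on vertex set $S \setminus I$ whose hyperedges are the sets $\bound(A) \setminus I$ for $A \in \calB$. The constraint says that each such set must lie entirely within one side. Equivalently, each connected component of $H$ goes wholly to $P_1$ or wholly to $P_2$. Isolated vertices of $S \setminus I$ (those lying in no $\bound(A)$) are singleton components and may go to either side.

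We then need a subset of the components whose total size is exactly $|S| - k$; the complement automatically has size $|S| - k$ because $|S \setminus I| = 2(|S| - k)$. Component sizes are at most $n$ and there are at most $n$ components, so a standard dynamic program over achievable sums, with a table of size $\bigOh(n^2)$, decides this in polynomial time. Computing the components and checking the $M_3$ bound are also polynomial.

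For correctness, any valid $(P_1, P_2)$ gives a $k$-full mid-triple of order $k$ with $M_1 \cap M_2 = I$: conditions 1–3 of a mid-triple hold by construction and all three sizes are at most $k$. Conversely, any such mid-triple yields, via its $M_1 \setminus I$, a component-respecting subset of size $|S| - k$, and its $M_3$ contains our minimal $M_3$. If $|S| < k$ the sizes would be negative, so the answer is trivially no.
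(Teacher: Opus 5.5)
Your proposal is correct and follows essentially the same route as the paper: the components of the auxiliary structure on $S(D)\setminus I$ induced by the sets $\bound(A)\setminus I$ must each go wholly into $M_1$ or wholly into $M_2$. This reduces the question to a subset-sum instance with target $k-|I| = |S(D)|-k$, which Bellman's dynamic program solves in polynomial time.

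Your explicit treatment of $M_3$ fills in a step the paper's proof leaves implicit. The choice $M_3=(S(D)\setminus I)\cup(\bound(B)\cap I)$ is forced to be minimal, it is independent of the split, and so only its size needs a single check against $k$.
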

\begin{proof}
Let $A \in \calB$. Observe that, in a mid-triple $(M_1, M_2, M_3)$ of 
$D$ such that $M_1 \cap M_2 = I$, the vertices in $\bound(A) \setminus I$
either altogether belong to $M_1$ or altogether belong to $M_2$.
This observation motivates a graph $Q$ on vertex set $S(D) \setminus I$ such that
there is an edge between $u$ and $v$ if and only if $u$ and $v$ 
belong to $\bound(A)$ for some $A \in \calB$. 
Then, each connected component of $Q$
must be contained in exactly one of $M_1$ or $M_2$. Let $\gamma_1$, \ldots, $\gamma_m$
be the list of connected components of $Q$. 
Then, our problem of finding 
a $k$-full mid-triple $(M_1, M_2, M_3)$ of order $k$ such that $M_1 \cap M_2 = I$
reduces to a subset sum problem that asks for a subset $J$ of $\{1, \ldots, m\}$
such that $\sum_{j \in J} |\gamma_j| =  k - |I|$.
By the classical Bellman's dynamic programming algorithm, this problem
can be solved in time $\bigOh(m (k - |I|)) \subseteq \bigOh(n^2)$. 
Setting up the input for the subset sum problem can be done in polynomial time, and thus the lemma follows.
\end{proof}

\begin{lemma}
\label{lem:bd-order}
    Let $D = (B, \calB)$ be a block derivation in a graph $G$ with $|S(D)| \geq k$. 
    Then, it can be decided in time 
    $\bigOh^*\left(\min\left\{2^{|B| - |S(D)|}, \binom{|S(D)|} {2k - |S(D)|} \right\}\right)$
    whether the order of $D$ is at most $k$.
\end{lemma}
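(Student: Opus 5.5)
We give two decision procedures and run whichever is cheaper, so the claimed bound is the minimum of their running times.

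\emph{First procedure.} Enumerate all bipartitions $(\calB_1,\calB_2)$ of $\calB$. For each one, apply \cref{lem:bipartition2order} to decide in polynomial time whether some mid-triple of order at most $k$ conforms to it. By condition~3 of the definition, every mid-triple of $D$ conforms to at least one bipartition: send each $A\in\calB$ to an index $i$ with $\bound(A)\subseteq M_i$, choosing arbitrarily if both work. Hence $D$ has order at most $k$ if and only if some bipartition succeeds. The cores $\core(A)$, $A \in \calB$, are nonempty and pairwise disjoint, and they are contained in $K(D) = B\setminus S(D)$. Therefore $|\calB|\le |B|-|S(D)|$, and the number of bipartitions is at most $2^{|B|-|S(D)|}$.

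\emph{Second procedure.} Since $|S(D)|\ge k$, \cref{prop:block-deriv-intersec} says that if the order of $D$ is at most $k$, then $D$ has a $k$-full mid-triple of order $k$. Conversely, any mid-triple of order $k$ witnesses order at most $k$. By \cref{prop:k-full}, every $k$-full mid-triple satisfies $|M_1\cap M_2| = 2k-|S(D)|$. So we enumerate all subsets $I\subseteq S(D)$ of size $2k-|S(D)|$; if this number is negative, no $k$-full mid-triple exists and we answer NO. For each such $I$, \cref{lem:intersection2order} decides in polynomial time whether a $k$-full mid-triple of order $k$ with $M_1\cap M_2=I$ exists. This gives $\binom{|S(D)|}{2k-|S(D)|}$ polynomial-time checks, and $D$ has order at most $k$ exactly when one of them succeeds.

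Both procedures are correct, so we may run whichever has the smaller count, which can be computed in advance. This yields the stated time bound. The only delicate point is the counting bound $|\calB|\le |B|-|S(D)|$, and it follows directly from the facts that each block has a nonempty core and that the cores of distinct members of $\calB$ are disjoint. Everything else is a direct invocation of the earlier lemmas.
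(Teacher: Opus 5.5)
Your proposal is correct and is essentially the paper's proof: the $2^{|B|-|S(D)|}$ bound comes from enumerating bipartitions of $\calB$ and invoking \cref{lem:bipartition2order}, and the binomial bound comes from enumerating candidate intersections $I$ of size $2k-|S(D)|$ via \cref{prop:block-deriv-intersec,prop:k-full,lem:intersection2order}. You also spell out details the paper leaves implicit, namely that every mid-triple conforms to some bipartition, that $|\calB| \le |B|-|S(D)|$ because the cores are nonempty and pairwise disjoint, and the degenerate case $2k < |S(D)|$.
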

\begin{proof}
The first bound of $\bigOh^*(2^{|B| - |S(D)|})$ is achieved by a brute force method 
that tries all bipartitions
$(\calB_1, \calB_2)$ of $\calB$ deciding if there is a mid-triple $(M_1, M_2, M_3)$
of $D$ of order at most $k$ that conforms to this bipartition in polynomial time invoking 
\cref{lem:bipartition2order}.
For the second bound, by~\cref{{prop:block-deriv-intersec}}, it suffices to find a $k$-full mid-triple of $D$ of order at most~$k$.
We try all $\binom{|S|} {2k - |S|}$ subsets $I$ of $S$ of cardinality $2k - |S|$
deciding, for each such subset $I$, if there is a $k$-full mid-triple $(M_1, M_2, M_3)$ of order $k$ such that $M_1 \cap M_2 = I$ in polynomial time invoking \cref{lem:intersection2order}.
\end{proof}

We now upper bound the number of blocks and the number of block derivations
needed to decide if $\bw(G)$ is at most $k$.
We use the following result due to Fomin and Villanger~\cite{FominV12}.

\begin{lemma}[Fomin and Villanger~\cite{FominV12}]
\label{lem:Fomin-Villanger}
    Let $G$ a graph and $b$, $c$ positive integers. Then, 
the number of connected vertex subsets $C$ of $G$ such that
$|C|= c$ and $|N_G(C)| = b$ is at most $n\binom{b + c - 1}{b}$ and those
subsets can be listed in time $\bigOh^*\left(\binom{b + c - 1}{b}\right)$.
\end{lemma}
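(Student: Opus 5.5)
This is the Fomin--Villanger bound, and I would prove it by a branching (enumeration) argument rather than directly counting subsets.

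Fix a start vertex $v \in V(G)$; this choice accounts for the factor $n$. It then suffices to show that the number of connected sets $C$ with $v \in C$, $|C| = c$ and $|N(C)| = b$ is at most $\binom{b+c-1}{b}$. The procedure maintains a pair $(C', X)$ where $C'$ is connected, $v \in C'$, $X \subseteq N(C')$ is the set of vertices already committed to the neighbourhood, and $C' \cap X = \emptyset$. It starts at $(\{v\}, \emptyset)$. At each step:
\begin{itemize}
\item If $N(C') \setminus X$ is empty, stop and output $C'$.
\item Otherwise, pick the smallest (in a fixed vertex order) $u \in N(C') \setminus X$ and branch into two cases: add $u$ to $C'$, or add $u$ to $X$.
\item Prune any branch in which $|C'| > c$ or $|X| > b$.
\end{itemize}

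For correctness, take any target set $C$ containing $v$ with the required sizes. Following the branch that puts $u$ into $C'$ exactly when $u \in C$ keeps the invariants $C' \subseteq C$ and $X \subseteq N(C)$. While $C' \neq C$, connectivity of $C$ gives a vertex of $C \setminus C'$ adjacent to $C'$, so the process cannot stop early. Once $C' = C$, all of $N(C)$ is eventually placed into $X$. Hence every target $C$ corresponds to exactly one leaf, and the branching choices along that path are determined by $C$.

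For counting, along the root-to-leaf path for $C$ there are exactly $c-1$ ``add to $C'$'' steps and $b$ ``add to $X$'' steps. The path is therefore a word over two letters of length $b+c-1$ containing exactly $b$ letters of type $X$. Since the word determines $C$, the number of such $C$ is at most $\binom{b+c-1}{b}$.

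For the running time, consider only the leaves of the pruned branching tree. Every such word has at most $c-1$ and $b$ letters of the two types. Intermediate nodes are prefixes of these words, so the total number of nodes is at most $\sum_{i\le c-1,\, j \le b}\binom{i+j}{j}$, which is polynomially many times $\binom{b+c-1}{b}$ because these binomials are monotone in both arguments. Each node costs polynomial time. Repeating over all $n$ choices of $v$ and discarding duplicates gives both the count bound $n\binom{b+c-1}{b}$ and the listing time $\bigOh^*\bigl(\binom{b+c-1}{b}\bigr)$.

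The step that needs the most care is the uniqueness and path-length argument. One must verify that the deterministic choice of $u$ makes the path for $C$ unique, and that the path uses exactly $c-1$ and $b$ moves of each type. The latter relies on $X$ ending up equal to $N(C)$ and on no vertex being considered twice, which is ensured because $u$ is always chosen outside $C' \cup X$. Since the paper cites this result from~\cite{FominV12}, it could alternatively be invoked directly.
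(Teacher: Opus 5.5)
Your proposal is correct. It is essentially the standard branching argument behind the Fomin--Villanger result, which the paper cites without proof. Your invariants $C' \subseteq C$ and $X \subseteq N(C)$, together with the exact $(c-1,\,b)$ letter count along the path for $C$, give the bound $\binom{b+c-1}{b}$ per start vertex. One small detail should be added: leaves of the pruned tree can also output sets with $|C'| < c$ or $|N(C')| = |X| < b$, so the listing must filter outputs by size, which costs only polynomial time per leaf.
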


\begin{lemma}
    \label{lem:block-count}
    Let $G$ be a graph with $n$ vertices and $a, b$ positive integers such that
    $b \leq a \leq n$.
    Then, the number of blocks $B$ such that 
    $|B| = a$ and $|\bound(B)| = b$ is at most $\min\left\{n\binom{a - 1}{b}, \binom{n}{a}\right\}$.
    Moreover, those blocks can be listed in time $\bigOh^*\left(\min\left\{\binom{a - 1}{b}, \binom{n}{a}\right\}\right)$.
\end{lemma}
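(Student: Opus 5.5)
The plan is to prove the two bounds separately, since $\binom{n}{a}$ is trivial and $n\binom{a-1}{b}$ comes from \cref{lem:Fomin-Villanger} applied to the core.

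For the $\binom{n}{a}$ bound, there are at most $\binom{n}{a}$ vertex sets of size $a$. Whether a given $X$ is a block with $|\bound(X)| = b$ can be checked in polynomial time: compute $\core(X)$, check that it is nonempty and connected, and check that $N[\core(X)] = X$. Enumerating all $a$-subsets and filtering them therefore lists all such blocks in time $\bigOh^*\left(\binom{n}{a}\right)$.

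For the other bound, the key step is to show that a block $B$ is uniquely determined by its core $C = \core(B)$, and that $N(C) = \bound(B)$.
\begin{itemize}
\item Since $B = \corehat(B) = N[C]$, the map $C \mapsto B$ is injective on blocks.
\item $N(C) \subseteq B \setminus C = \bound(B)$.
\item Conversely, every $v \in \bound(B)$ lies in $N[C]$ but not in $C$, so $v \in N(C)$.
\end{itemize}
Hence $|C| = a - b$ and $|N(C)| = b$, with $C$ connected and nonempty. Applying \cref{lem:Fomin-Villanger} with $c = a - b$ bounds the number of such $C$ by $n\binom{b + (a-b) - 1}{b} = n\binom{a-1}{b}$, and lists them in time $\bigOh^*\left(\binom{a-1}{b}\right)$.

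For each listed $C$, we form $B = N[C]$ and verify in polynomial time that $\core(B) = C$. This check is needed because $\core(N[C])$ might strictly contain $C$, in which case $N[C]$ is either a block with a different core (and will be produced from that core instead) or not a block at all. Duplicates can be removed, or simply tolerated, since they do not affect the running time bound. The output is then exactly the set of blocks with the required parameters.

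Two details need care. First, \cref{lem:Fomin-Villanger} requires $c \geq 1$, which holds because $\core(B)$ is nonempty, so $a > b$. When $a = b$ no block exists, and the bound holds trivially. Second, the hypothesis asks for positive $b$; the case $b = 0$ occurs only for $B = V(G)$ when $G$ is connected, and it is trivial. The only nonroutine point is the identification $N(\core(B)) = \bound(B)$, which follows directly from the definitions.
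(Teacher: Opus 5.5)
Your proof is correct and follows the paper's route: the $\binom{n}{a}$ bound comes from brute-force enumeration of $a$-subsets, and the $n\binom{a-1}{b}$ bound comes from applying \cref{lem:Fomin-Villanger} to $\core(B)$ with $c = a-b$, using $\bound(B) = N(\core(B))$ and $B = N[\core(B)]$. Your filtering step (checking $\core(N[C]) = C$, equivalently $|\bound(N[C])| = b$) and your edge cases are details the paper leaves implicit. One small correction: for connected nonempty $C$, the set $N[C]$ is always a block, so your ``not a block at all'' case never occurs, and the filter only removes blocks whose boundary is strictly smaller than $b$.
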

\begin{proof}
    Since $\core(B)$ is connected and $\bound(B) = N(\core(B))$, 
    the number of blocks $B$ such that 
    $|B| = a$ and $|\bound(B)| = b$ is at most $n\binom{a - 1}{b}$ by \cref{lem:Fomin-Villanger}.
    The latter bound is simply the number of subsets of $V(G)$ of cardinality $a$.
    The listing procedure in \cref{lem:Fomin-Villanger} together with
    a brute force procedure for the latter bound achieves the claimed time bound.
\end{proof}

Recall that $\calD_k(G) = \bigcup_{B \in \calB_k(G)} \calD_k(G, B)$, where $\calB_k$ is the set of all blocks $B$ with $|\bound(B)| \le k$.
\begin{lemma}
\label{lem:bd-bound}
    Let $f(n, k, s, a, b)$ be the function defined as follows.
    \begin{align*}
    f(n, k, s, a, b)  &\coloneqq  \min\left\{\binom{a - 1}{b}, \binom{n}{a}\right\} 
    \cdot \binom{a - b}{s - b} \cdot \min\left\{2^{a - s}, \binom{s} {2k - s}\right\},
    \end{align*}
    where the domain of each variable consists of integers that satisfy the following: $0 \le k \le n$, $k < s \le 3k/2$, $s < a \le n$, and $b \le \min\{k, a-1\}$.
    Let $f^*(n, k)$ denote the maximum of $f(n, k, s, a, b)$ over all combinations of
    $s$, $a$, and $b$ in the domain.
    Then, $|\calD_k(G)| = \bigOh^*(f^*(n, k) + 3^n)$ and $\calD_k(G)$ can be 
    computed in $\bigOh^*(f^*(n, k) + 3^n)$ time.
\end{lemma}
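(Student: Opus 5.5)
We bound $\calD_k(G)$ by enumerating block derivations according to the parameters $a = |B|$, $b = |\bound(B)|$, and $s = |S(D)|$, which are polynomially many choices. The sum over them is $\bigOh^*$ of the maximum.

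\textbf{Small $S(D)$.} First we dispose of derivations with $s = |S(D)| \le k$. There are at most $3^n$ block derivations in total, and all of them can be generated in time $\bigOh^*(3^n)$ by \cref{prop:bd-naive-bound}. For each one we must decide whether its order is at most $k$. When $s \le k$ this holds trivially: take $M_1 = M_2 = M_3 = S(D)$, which is a mid-triple of order $s \le k$. By \cref{prop:bd-order}, derivations with $s > 3k/2$ have order exceeding $k$ and can be discarded. This accounts for the $3^n$ term.

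\textbf{Large $S(D)$.} For $k < s \le 3k/2$ we fix $a$ and $b$. Since $B \in \calB_k(G)$ we have $b \le k$, and $b \le a - 1$ because $\core(B)$ is nonempty. Also $s < a$: every $A \in \calB$ has a nonempty core, and if $\calB = \emptyset$ then $S(D) = B$ contains edges of $E(B)$. This edge case must be checked against the domain, and I expect it to be harmless. The enumeration has three steps.
\begin{enumerate}
\item List all blocks $B$ with $|B| = a$ and $|\bound(B)| = b$ using \cref{lem:block-count}, in time $\bigOh^*(\min\{\binom{a-1}{b}, \binom{n}{a}\})$.
\item For each such $B$, the second condition of a mid-triple forces $\bound(B) \subseteq M_3 \subseteq S(D)$. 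Choose the remaining $s - b$ vertices of $S(D)$ from $B \setminus \bound(B)$, giving $\binom{a-b}{s-b}$ choices. As in \cref{prop:bd-naive-bound}, the pair $(K, S) = (B \setminus S, S)$ determines at most one derivation, and validity is checked in polynomial time.
\item Decide whether the order is at most $k$ in time $\bigOh^*(\min\{2^{a-s}, \binom{s}{2k-s}\})$ by \cref{lem:bd-order}, which applies since $s \ge k$. Here $|B| - |S(D)| = a - s$.
\end{enumerate}
Multiplying the costs gives $f(n,k,s,a,b)$ per parameter triple. Summing over the polynomially many triples gives $\bigOh^*(f^*(n,k))$ time for this step. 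The number of derivations produced is bounded by the number of candidates, which is at most $f$.

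\textbf{Main obstacle.} The delicate point is the time bound, not the counting. A naive bound pays the cost of \cref{lem:bd-order} for every candidate, and that is precisely what $f$ accounts for. It remains to verify that the hypothesis $|S(D)| \ge k$ of \cref{lem:bd-order} and the domain restrictions, namely $s < a$ and $b \le \min\{k, a-1\}$, cover every derivation in $\calD_k(G)$ that falls outside the $3^n$ regime. Derivations with $s \le k$ were already handled by the trivial mid-triple argument.
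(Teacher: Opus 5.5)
Your proposal is correct and follows the paper's proof essentially step for step. Like the paper, you use an $\bigOh^*(3^n)$ pass via \cref{prop:bd-naive-bound} that keeps derivations with $|S(D)| \le k$ through a trivial mid-triple (the paper uses $(S(D), \emptyset, S(D))$), then list blocks by $(a,b)$ with \cref{lem:block-count}, choose $S \supseteq \bound(B)$ in $\binom{a-b}{s-b}$ ways, and test the order with \cref{lem:bd-order}. The edge case you flag, $\calB = \emptyset$ with $s = a > k$, is indeed harmless and is also silently skipped in the paper: there is at most one such derivation per block, and \cref{lem:bd-order} decides its order in polynomial time (its first bound is $2^{a-s} = 1$), so these derivations fit inside the $3^n$ term.
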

\begin{proof}
    Fix $k$. Based on \cref{prop:bd-naive-bound}, We first list all block derivations in $G$ in 
    $\bigOh^*(3^n)$ time. For each block derivation $D$ listed, if $|S(D)| \leq k$ then
    put $D$ to $\calD_k(G)$, since such $D$ trivially has a mid-triple $(S(D), \emptyset, S(D))$
    of order $|S(D)|$. 
    
    We then pick up block derivations $D$ with $k < |S(D)| \leq 3k/ 2$ 
    that are of order at most~$k$ as follows.
    For each $b \leq k$ and $a > b$, we compute the set $\calB_{a, b}(G)$ of blocks $B$ such that 
    $|B| = a$ and $|\bound(B)| = b$, in time $\bigOh^*(\min\{\binom{a - 1}{b}, \binom{n}{n - a}\})$ due to \cref{lem:block-count}.
    For all the combinations of $s$, $a$, and $b$ such that $k < s \leq 3k/ 2$,
    $b \leq k$, $b < a$, and $s < a \leq n$, and for each $B \in \calB_{a, b}(G)$, 
    we try each subset $S$ of $B$ such that $|S| = s$ and $\bound(B) \subseteq S$, 
    letting $\calC$ be the set of components of $G[B \setminus S]$,
    letting $\calB = \{\hat{C} \mid C \in \calC\}$, and 
    deciding if the order of the block derivation $D = (B, \calB)$ with $S(D) = S$ is at most $k$
    using \cref{lem:bd-order}; we add $(B, \calB)$ to $\calD_k(G)$ if this is the case.
    We have $\binom{a - b}{s - b}$ choices for $S$ and deciding if $(B, \calB)$ is of order at most $k$ can be done in time $\bigOh^*(\min\{2^{a - s}, \binom{s} {2k - s}\})$. 
    The time for each combination of $a$, $b$, and $s$ is at most $f(n, k, s, a, b)$
    and hence the total time for processing all combinations is in $\bigOh^*(f^*(n, k))$. 
\end{proof}

\begin{theorem}
    \label{thm:bw-graph}
    The running time of Algorithm~\ref{alg:bw-graph} is in $\bigOh(3.293^n)$.
\end{theorem}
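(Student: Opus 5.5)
We split the running time of Algorithm~\ref{alg:bw-graph} into three parts: computing $\calB_k(G)$, computing $\calD_k(G)$, and the dynamic programming loop. By \cref{lem:block-count}, summed over the polynomially many pairs $(a,b)$, $\calB_k(G)$ is listed in $\bigOh^*(2^n)$ time. In the main loop, each block $B$ is handled by scanning its derivations in $\calD_k(G,B)$. For each derivation we check membership of its polynomially many blocks $A$ in the earlier sets $\calF_k(G,j)$; these sets are stored as hash sets or tries over vertex sets. So the loop costs $\bigOh^*(|\calD_k(G)|)$. By \cref{lem:bd-bound}, the whole algorithm therefore runs in $\bigOh^*(f^*(n,k)+3^n)$ per value of $k$. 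Trying all $k\le n$ adds only a polynomial factor. Since $3^n\cdot\mathrm{poly}(n)=\bigOh(3.293^n)$, the theorem reduces to showing
\[
\max_k f^*(n,k)=\bigOh(c^n)\quad\text{for some } c<3.293,
\]
with the polynomial factors absorbed by the slack between $c$ and $3.293$.

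To bound $f^*$, set $k=\kappa n$, $s=\sigma n$, $a=\alpha n$, $b=\beta n$ and use $\binom{m}{\ell}\approx 2^{mH(\ell/m)}$, where $H$ is the binary entropy. Then $\log_2 f/n$ is bounded by
\[
\min\{\alpha H(\beta/\alpha),H(\alpha)\}+(\alpha-\beta)H\!\left(\tfrac{\sigma-\beta}{\alpha-\beta}\right)+\min\{\alpha-\sigma,\ \sigma H(\tfrac{2\kappa-\sigma}{\sigma})\},
\]
subject to $\kappa<\sigma\le 3\kappa/2$, $\sigma<\alpha\le1$ and $\beta\le\min\{\kappa,\alpha\}$. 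We must show that the maximum of this expression over the feasible region is at most $\log_2 3.293\approx1.7194$.

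The plan is a case split according to which term of each minimum is active. Every branch is a continuous bound on a compact region, so it suffices to maximize each branch. We first remove variables by monotonicity. In the middle factor, $\binom{a-b}{s-b}\cdot\binom{a-1}{b}$ is essentially $\binom{a}{b,\,s-b,\,a-s}$, a multinomial coefficient. This merges the first two factors into $\alpha H_3(\beta/\alpha,(\sigma-\beta)/\alpha,(\alpha-\sigma)/\alpha)$, where $H_3$ is the ternary entropy. The last factor does not depend on $\beta$, so for fixed $\sigma,\alpha,\kappa$ the optimum $\beta$ is either unconstrained, near $\sigma/2$, or clipped at $\beta=\kappa$. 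We then maximize over $\kappa$, using that the last factor becomes smaller as $\kappa$ approaches $\sigma/2$ or $\sigma$ in relative terms. Finally we maximize over $\alpha$, where the crossover with $H(\alpha)$ matters when $\alpha$ is close to $1$.

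The main obstacle is certifying this numerical maximum rigorously. A crude bound fails: $\min\{\cdot\}\le 2^{a-s}$ together with the multinomial gives $3^a$, which is not good enough for $\alpha$ near $1$. Near $\alpha=1$ the factor $\binom{n}{a}$ is small, which saves the bound there. When $\alpha-\sigma$ is large, the second term $\binom{s}{2k-s}$ takes over. So the binding case is where the two minima cross, and I would locate it by Lagrange conditions. I expect an interior optimum of roughly $\sigma\approx0.4$ and $\alpha\approx0.85$. I would then verify $\log_2 f/n<1.7194$ by a grid search with a Lipschitz bound on the grid cells. I would first try to argue analytically that in each branch the objective is concave in $\beta$ and in $\alpha$, so that only the crossover surfaces need numerical checking.
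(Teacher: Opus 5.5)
Your proposal is essentially the paper's proof: the same cost split, reducing everything to $f^*(n,k)$ via \cref{lem:bd-bound}, and the same entropy bound. You correctly weight the entropies by $\alpha$, $\alpha-\beta$ and $\sigma$, which the paper's displayed $g$ omits. The final numerical maximization is certified in the paper by interval arithmetic, where you plan a case split and a grid search. Two minor corrections that do not affect the plan: the crude bound $\binom{a}{b,\,s-b,\,a-s}2^{a-s}$ peaks near $4^{a}$, not $3^{a}$ (a $3^{a}$ bound would already finish the proof); and since $2\sigma/3\le\kappa<\sigma$, taking $\kappa=3\sigma/4$ turns the last term into $\min\{\alpha-\sigma,\sigma\}$, after which a calculation suggests the optimum has $\sigma=\alpha/2$ and $\alpha H(\beta/\alpha)=H(\alpha)$, leaving a one-variable maximum of about $1.705$ near $\alpha\approx 0.89$, comfortably below $\log_2 3.293$.
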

\begin{proof}
We can compute the set $\calB_k$ of all blocks $B$ in $G$ such that $|\bound(B)| \le k$ in $\bigOh^*(2^n)$ time. 
By \cref{lem:bd-bound}, $|\calD_k(G)| = \bigOh^*(f^*(n, k) + 3^n)$ and so is the running time of generating $\calD_k(G)$.
The dynamic programming step for computing $\calF_k(G, i)$ can also be done in time $\bigOh^*(f^*(n, k) + 3^n)$.

To evaluate $f^*(n, k)$,
we use the well known upper bound on the binomial coefficient that $\binom{n}{k} \leq 2^{h(k / n)}$, where $h(x)$ is the binary entropy function (see~\cite{FominK10}, for example).
The logarithm base two of the function $f$ in \cref{lem:bd-bound} is bounded as
\begin{align*}
    \log_2 f(n, k, s, a, b) &\leq n \cdot g(k/n, s/n, a/n, b/n),
\end{align*}
where 
\begin{align*}
    g(\kappa, \sigma, \alpha, \beta) = \min\{h(\beta / \alpha), h(\alpha)\} + h((\sigma - \beta) / (\alpha - \beta)) + \min\{\alpha - \sigma, h((2\kappa - \sigma) / \sigma)\}.
\end{align*}
A numerical computation shows that the maximum of $g(\kappa, \sigma, \alpha, \beta)$ in the domain corresponding to the domain for $f$ described in \cref{lem:bd-bound}
is at most 1.719253. The code for this computation, implemented by Claude Code (Opus 4.6), uses the interval arithmetic
provided by the npmath module of Python and therefore this upper bound is rigorous. As $2^{1.71953} < 3.2927$,
we have $f^*(n, k) \leq 3.2927^n$ and therefore the running time of the algorithm is 
in $\bigOh(3.293^n)$.
\end{proof}

This completes the proof of \cref{thm:main}.

\subsection{An alternative algorithm for graphs}
In this subsection, we present an alternative to Algorithm~\ref{alg:bw-graph}.
We conjecture that its running time is $\bigOh(c^n)$ where $c$
is a constant smaller than 3.293, the constant in the running time bound of Algorithm~\ref{alg:bw-graph}. See 
the partial analysis at the end of this subsection for some grounds of this conjecture.
Regardless of the validity of this conjecture, its practical performance is similar to Algorithm~\ref{alg:bw-graph} on 
most instances but significantly better on some instances that are hard for Algorithm~\ref{alg:bw-graph}. 

The idea of the algorithm is to restrict blocks $B$ for which we need to compute the branchwidth of $E(B)$.
Call an edge set $F$ of $G$ \emph{small} if $|\core(F)| \leq |V(G) \setminus V(F)|$.
Call a block $B$ \emph{small} if $E(B)$ is small.
Call a tripartition $(E_1, E_2, E_3)$ of $E(G)$ \emph{balanced} if
$E_i$ is small for $1 \le i \le 3$. 

\begin{proposition}
    \label{prop:balanced}
    Let $G$ be a graph with at least three edges and of minimum degree at least~2.
    Then, there is a balanced tripartition $(E_1, E_2, E_3)$ of $E(G)$
    such that $\bw(E_i) \leq \bw(G)$. 
\end{proposition}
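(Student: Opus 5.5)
The plan is to take an optimal branch-decomposition of $G$, view its tree as unrooted, and find a \emph{centroid-like} internal node of degree three whose three branches induce a balanced tripartition.

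By \cref{prop:bw_injective}, fix a full and injective branch-decomposition $\calT=(T,\varphi)$ of $E(G)$ of width $\bw(G)$. Suppress the degree-two root to obtain an unrooted tree $T'$ in which every internal node has degree exactly three. Each tree edge $xy$ of $T'$ splits $E(G)$ into $F_{xy}$ (the edges mapped to leaves on the $y$ side) and $E(G)\setminus F_{xy}$. Each such $F$ is the lower edge set of some node in a suitable rerooting of $\calT$. Since $G$ has at least three edges, $T'$ has an internal node.

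First, I would record the key inequality. For an edge set $F$, the set $\core(F)$ consists of vertices all of whose incident edges lie in $F$. Hence $\core(F)$ and $\core(E(G)\setminus F)$ are disjoint, and $\core(E(G)\setminus F)\subseteq V(G)\setminus V(F)$. Consequently $F$ and $E(G)\setminus F$ cannot both be non-small: otherwise
\[
|\core(F)|>|V(G)\setminus V(F)|\ge|\core(E(G)\setminus F)|>|V(G)\setminus V(E(G)\setminus F)|\ge|\core(F)|,
\]
a contradiction.

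Next, orient the tree edge $xy$ from $x$ to $y$ whenever $F_{xy}$ (the side toward $y$) is not small. By the inequality above, each tree edge receives at most one orientation. No edge is oriented toward a leaf. Indeed, a leaf side is a single edge $\{e\}$, and since $G$ has minimum degree at least $2$, every vertex of $e$ lies in another edge, so $\core(\{e\})=\emptyset$ and $\{e\}$ is small.

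Now start at any internal node and repeatedly move along an outgoing oriented edge. We can never traverse an edge back, because it carries only one orientation. Since $T'$ is a finite tree, the walk terminates at a node with no outgoing edge, and this node is internal by the previous paragraph. Its three branches give a tripartition $(E_1,E_2,E_3)$ of $E(G)$ into nonempty sets, and the absence of outgoing edges means exactly that each $E_i$ is small. So the tripartition is balanced.

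Finally, for the width bound, reroot $T'$ at the edge joining the chosen node to branch $i$ (subdividing it). This gives a full injective branch-decomposition of $E(G)$ with the same middle sets, hence the same width $\bw(G)$, in which $E_i$ is the lower edge set of a node $p$. By \cref{prop:sub-bd}, the sub-branch-decomposition at $p$ is a branch-decomposition of $E_i$ of width at most $\bw(G)$, so $\bw(E_i)\le\bw(G)$. The main obstacle I expect is checking this rerooting step carefully: the order of a node depends only on the bipartition of $E(G)$ determined by the tree edge, so it is invariant under rerooting. I would also check the handling of the suppressed root, and the fact that the injectivity needed for \cref{prop:sub-bd} is preserved.
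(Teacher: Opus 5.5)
Your proof is correct and follows essentially the same route as the paper: it turns an optimal injective decomposition into a ternary tree, orients tree edges by smallness of the sides, and picks a sink, which is internal because single-edge sides are small under the minimum-degree assumption. Your explicit argument that $F$ and $E(G)\setminus F$ cannot both be non-small is exactly what justifies the paper's unproved remark that every tree edge gets oriented, and your rerooting together with \cref{prop:sub-bd} makes rigorous the paper's ``straightforward'' width claim.
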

\begin{proof}
    Let $\calT = (T, \varphi)$ be an injective branch-decomposition of $E(G)$ of width $\bw(G)$.
    Let $T'$ be an undirected ternary tree\footnote{A \emph{ternary tree} is a (unrooted) tree such that every internal node has degree exactly~3.} obtained from $T$ by replacing the
    two edges incident to the root by an edge. Let $p$ be an internal node of $T'$.
    The removal of $p$ from $T'$ naturally induces a tripartition of $E(G)$, which
    we call the tripartition associated with $p$. We show that there is an internal node $r$ of $T'$
    such that the tripartition associated with $r$ is balanced. 
    For each edge $\{p, q\}$ of $T'$, orient this edge from
    $p$ to $q$ if the edge set on $p$'s side is small; from $q$ to $p$ if the edge set
    on $q$'s side is small, breaking ties arbitrarily.
    Under our assumption on $G$, each edge incident to a leaf of $T'$ is oriented away from the leaf.
    Since every edge is oriented, there is a node $r$ such that all three incident edges
    are oriented towards $r$. The tripartition associated with $r$ is balanced.
    The subtrees of $T'$ that result from removing $r$ are
    branch-decompositions of $E_1$, $E_2$, and $E_3$. It is straightforward to see
    that the width of each of them is at most the width of $\calT$.
\end{proof}

In the following, we use similar but slightly different definitions from those used in the previous subsection.
A \emph{root derivation} of a graph $G$ is a set $\calB$ of blocks of $G$
such that $\core(A_1) \cap \core(A_2) = \emptyset$ for each pair of distinct $A_1, A_2 \in \calB$. We let $S(\calB) = V(G) \setminus \core(\calB)$, where $\core(\calB) = \bigcup_{A \in \calB} \core(A)$ as before.
A \emph{mid-triple} of a root derivation $\calB$ is a triple $(M_1, M_2, M_3)$ of
subsets of $S(\calB)$ such that $M_1 \cup M_2 = M_2 \cup M_3 = M_3 \cup M_1 = S(\calB)$.
Let $(\calB_1, \calB_2, \calB_3)$ be a tripartition of $\calB$. We say
that a mid-triple $(M_1, M_2, M_3)$ \emph{conforms to} this tripartition
if $\bound(\calB_i) \subseteq M_i$ for $1 \le i \le 3$.
A mid-triple of $\calB$ is \emph{conforming} if it conforms to some
tripartition of $\calB$. 
The order of a mid-triple $(M_1, M_2, M_3)$ is the largest of
$|M_i|$ for $1 \le i \le 3$. The \emph{order} of a root derivation $\calB$ is
the smallest $k$ such that there is a conforming mid-triple of $\calB$ of order $k$.

\begin{lemma}
    \label{lem:root-deriv}
    Let $G$ be a graph with at least two edges.
    Then, $\bw(G) \leq k$ if and only if there is a root derivation
    $\calB$ of order at most $k$ such that, for each $A \in \calB$, 
    $A$ is small and $\bw(E(A)) \leq k$.
\end{lemma}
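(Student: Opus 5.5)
The plan is to prove the two directions separately, modelling each on an earlier argument: the ``if'' direction on \cref{lem:block-deriv2bw}, and the ``only if'' direction on \cref{thm:bd-iff} combined with \cref{prop:balanced}.

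\emph{If direction.} Let $(M_1,M_2,M_3)$ be a conforming mid-triple of order at most $k$, conforming to the tripartition $(\calB_1,\calB_2,\calB_3)$. For each $A\in\calB$ take an injective branch-decomposition of $E(A)$ of width at most $k$ (\cref{prop:bw_injective}). For $i=1,2,3$, chain these into $\calT_i$ by successive compositions, exactly as in \cref{lem:block-deriv2bw}. The root of each partial composition has middle set inside $M_i$, because no other edge meets the cores involved. Next compose $\calT_i$ with an arbitrary decomposition of $E(M_i)$ to obtain $\calT_i'$, whose root has middle set $\bound(F_i)\subseteq M_i$. Finally join $\calT_1',\calT_2',\calT_3'$ at a ternary node; this is a single unrooted node, or equivalently one extra rooted composition whose middle set equals that of the third subtree.

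Coverage of $E(G)$ works as before. An edge not inside any $A$ lies in $S(\calB)$, and since $G$ is a graph and each vertex of $S$ lies in two of the $M_i$, the edge lies in some $M_i$. The nodes inside each $\calT_A$ are handled verbatim as in \cref{lem:block-deriv2bw}. Here $G$ plays the role of $B$, with $\bound(B)=\emptyset$, which is why the mid-triple needs no boundary condition. The \emph{small} hypothesis is not needed for this direction.

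\emph{Only if direction.} Apply \cref{prop:balanced} to get a ternary node $r$ of an optimal injective decomposition whose three branches $E_1,E_2,E_3$ are small and have width at most $k$. Set $M_i=\bound(E_i)$, so $|M_i|\le k$ as the order of an edge of the decomposition. Since every vertex of $V(G)$ meeting two of the $E_j$ lies in both of their boundaries, $M_1\cup M_2=M_2\cup M_3=M_1\cup M_3=S:=\bigcup_i M_i$. Take $\calC$ to be the components of $G-S$ and $\calB=\{\hat C: C\in\calC\}$. Each $C$ together with all edges at $C$ lies within a single $E_i$, because a vertex of $C$ touching two parts would be in $S$. Hence $\hat C\subseteq V(E_i)$ and $N(C)\subseteq \bound(E_i)=M_i$, which gives a conforming tripartition. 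We then need $S(\calB)=S$, i.e.\ $\core(\hat C)=C$, which holds as long as $G$ has minimum degree~2 (assumed after the reductions). Each block $A=\hat C$ is connected to $E_i$ only through $M_i$, so \cref{prop:bd-edge-deletion} applied to $E(A)\subseteq E_i$ gives $\bw(E(A))\le\bw(E_i)\le k$. Here $\bw(E_i)\le k$ uses \cref{prop:sub-bd}, and the edges of $E_i\setminus E(A)$ meet $A$ only in $N(C)\subseteq\bound(E_i)$.

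\emph{Main obstacle: smallness of $A$.} We need $|\core(E(A))|\le|V(G)\setminus A|$. Here $\core(E(A))=C$ (using degree $\ge2$ and that every edge at $C$ lies in $E(A)$), and $C\subseteq\core(E_i)$. Also $V(G)\setminus V(E_i)\subseteq V(G)\setminus A$. Therefore $|C|\le|\core(E_i)|\le|V(G)\setminus V(E_i)|\le|V(G)\setminus A|$, using that $E_i$ is small. I expect this inequality chain, together with the bookkeeping verifying $\core(\hat C)=C$ and the degenerate cases (at least two edges, possibly empty parts), to be the delicate point. The case where $G$ has fewer than three edges would be handled directly.
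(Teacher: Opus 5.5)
Your route is the paper's. The ``if'' direction replays \cref{lem:block-deriv2bw} with three groups of blocks, and that part is fine. The ``only if'' direction takes a tripartition as in \cref{prop:balanced}, sets $M_i=\bound(E_i)$ and $S=M_1\cup M_2\cup M_3$, and lets $\calB$ be the closures of the components of $G-S$.

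The step that fails is your claim that $\core(\hat C)=C$ ``as long as $G$ has minimum degree~2''. A vertex $w\in N(C)$ lies in $S$ because it has edges in two parts. However, all its edges outside $E_i$ may go to other vertices of $N(C)$. Then $N(w)\subseteq\hat C$ and so $w\in\core(\hat C)$.

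Here is a concrete instance. Let $G$ be $K_4$ on $\{c,a_1,a_2,a_3\}$ with the edge $a_2a_3$ subdivided by a new vertex $y$; it has minimum degree~2 and $\bw(G)=3$. Take $E_1=\{ca_1,ca_2,ca_3\}$, $E_2=\{ya_2,ya_3\}$, $E_3=\{a_1a_2,a_1a_3\}$.
\begin{itemize}
\item Joining decompositions of the three parts at a ternary node gives width $3$.
\item The parts are small: $1\le 1$, $1\le 2$, $0\le 2$.
\end{itemize}
So this tripartition has every property \cref{prop:balanced} guarantees and your argument uses. Then $S=\{a_1,a_2,a_3\}$, $C=\{c\}$ and $A=\hat C=\{c,a_1,a_2,a_3\}$. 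But $\core(A)=\{c,a_1\}$, because $N(a_1)\subseteq A$.

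This breaks the smallness argument, which is the real problem. Your chain $|C|\le|\core(E_i)|\le|V(G)\setminus V(E_i)|\le|V(G)\setminus A|$ is correct, but it only bounds $|C|$. The set $\core(E(A))=\core(A)$ also contains the vertices of $N(C)$ whose neighbourhoods lie in $\hat C$. Those vertices belong to $\bound(E_i)$, not to $\core(E_i)$, so the smallness of $E_i$ says nothing about them. In the example, $|\core(E(A))|=2>1=|V(G)\setminus A|$, so $A$ is not small and your $\calB$ is not a witness.

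The same example refutes two further claims, but both are easy to repair:
\begin{itemize}
\item $E(A)\subseteq E_i$ fails, since $a_1a_2\in E_3$. Apply \cref{prop:bd-edge-deletion} to $E(A)\cap E_i$ and then \cref{lem:bw_egdge_closure}, using $V(E(A)\cap E_i)=A$.
\item $S(\calB)=S$ fails. Intersect each $M_i$ with $S(\calB)$.
\end{itemize}

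Smallness, by contrast, needs a new ingredient. One option is a tripartition in which $\closure(E_i)$, not just $E_i$, is small; this suffices because $\core(A)\subseteq\core(\closure(E_i))$ and $V(\closure(E_i))=V(E_i)$. Another option is a different choice of $S$. The orientation argument of \cref{prop:balanced} does not directly provide either.

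The paper's proof passes over exactly this point with ``each block in $\calB_i$ is small since $E_i$ is small''. That sentence is false for the tripartition above, so the paper does not supply the missing step either. In this example the lemma's conclusion is still reachable with another separator. Take $S=\{c,a_1,a_2,a_3\}$, $\calB=\{\{y,a_2,a_3\}\}$ and the mid-triple $(\{c,a_2,a_3\},\{a_1,a_2,a_3\},\{c,a_1\})$.
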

\begin{proof}
    The proof of the ``if''-part is similar to the proof of \cref{lem:block-deriv2bw} on block derivations. For the converse, suppose $\bw(G) = k$. By \cref{prop:balanced},
    there is a balanced tripartition $(E_1, E_2, E_3)$ of $E(G)$ such that
    $\bw(E_i) \leq k$ for $1 \le i \le 3$. Let $M_i = \bound(E_i)$ for $1 \le i \le 3$ and
    let $S = M_1 \cup M_2 \cup M_3$. Note that $S = M_1 \cup M_2 = M_2 \cup M_3 = M_3 \cup M_1$.
    Let $\calC$ denote the set of components of $G[V(G) \setminus S]$ and
    let $\calB = \{\hat{C} \mid C \in \calC\}$. Observe that $\bound(A)$ for each $A \in \calB$
    is contained in $M_i$ for some $i \in \{1, 2, 3\}$.
    Let $(\calB_1, \calB_2, \calB_3)$ be
    the tripartition of $\calB$ obtained by putting each $A \in \calB$ to
    $\calB_i$ where $i$ is the smallest such that $\bound(A) \subseteq M_i$. 
    Then, $(M_1, M_2, M_3)$ is a mid-triple of $\calB$ conforming to the tripartition
    $(\calB_1, \calB_2, \calB_3)$. As the order of this mid-triple is at most $k$,
    the order of the root derivation $\calB$ is at most $k$. 
    Each block in $\calB_i$ is small since $E_i$ is small. Finally, for each $A \in \calB_i$,
    $\bw(A) \leq k$ holds, since a branch-decomposition of $E(A)$ of width at most $k$ can
    be obtained from the branch-decomposition of $E_i$ of width at most $k$ by removing 
    leaves that map to edges not in $E(A)$.
\end{proof}

Algorithm~\ref{alg:bw-alternative} shows a pseudo-code of our alternative algorithm.

\begin{algorithm}[tb]
    \KwIn{A graph $G$ and an integer $k \ge 0$.}\caption{Decides if $\bw(G) \leq k$.} \label{alg:bw-alternative}
    Compute the set $\calF_k(G)$ of all small blocks $A$ of $G$ such that $\bw(E
(A)) \leq k$, using the method of Algorithm~\ref{alg:bw-graph}\;
    Generate root derivations $\calB$ of $G$ such that $\calB \subseteq \calF_k(G)$
    and $|S(\calB)| \leq 3k/ 2$\;
    \For{$\calB$ generated}{
        \If{the order of $\calB$ is at most $k$}{
            Answer YES and halts\;
        }
    }
    Answer NO\;
\end{algorithm}
The correctness of this algorithm follows straightforwardly from \cref{lem:root-deriv}.
We conjecture that this algorithm can be implemented in such a way to run in time 
smaller than the upper bound given in \cref{thm:bw-graph} on Algorithm~\ref{alg:bw-graph}. 
The following is a partial analysis of the algorithm that provides some grounds for the conjecture.

\begin{enumerate}
\item
It is intuitively clear that the running time of generating $\calF_k(G)$ is much smaller than that of 
Algorithm~\ref{alg:bw-graph}, since we compute the branchwidth of small blocks only. 
Indeed, it is not difficult to show that the running time of this step is in $\bigOh^*(2.88^n)$.
\item
For each $S \subseteq V(G)$, let $\calC(S)$ denote the set of components of
$G[V(G) \setminus S]$ and let $\calB(S) = \{\hat{C} \mid C \in \calC(S)\}$.
Let $\calS_k(G)$ denote the family of vertex sets $S$ of $G$ such that
$|S| \leq 3k / 2$ and, for every $A \in \calB(S)$, $A$ is small and $\bw(A) \leq k$.
Step 2 of the algorithm looks for $S \in \calS_k(G)$ such that the order of
the root derivation $\calB(S)$ is at most $k$. Let $Q(S)$ be a hypergraph on $S$ obtained from $G[S]$ by adding $\bound(A)$ for every $A \in \calB(S)$
as a hyperedge. Then, deciding if the order of $\calB(S)$ is at most $k$
is equivalent to deciding if $\bw(Q) \leq k$. We can use our algorithm given in \cref{thm:main-hg}
spending $\bigOh^*(4^{|S|})$ time. We can also solve this problem in $\bigOh^*(3^{|\calB(S)|})$ time, deciding, for
each tripartition of $\calB(S)$, if there is a mid-triple of order at most $k$ that conforms to the tripartition.
Overall, the running time of the remaining steps is upper bounded by
$\bigOh^*\left(\sum_{S \in \calS_k(G)} \min\{4^{|S|}, 3^{|\calB(S)|}\}\right)$. Although the only upper bound we have
on $|\calB(S)|$ is the trivial $n - |S|$, this bound corresponds to an extremely rare case where
every component in $\calC(S)$ is a singleton. It would not be surprising if there is some useful 
combinatorial bound on the above sum.
We also note that, as $k \geq 2|V(Q)|/ 3$, deciding if the branchwidth of $Q$ is at most $k$ might admit an
algorithm that is faster than algorithms for general hypergraphs. 
This possibility would be a fascinating research topic. Positive results could lead to a
proof of the conjecture as well as to a dramatically improved practical performance of Algorithm~\ref{alg:bw-alternative}.
\end{enumerate}

\section{Computational experiment}
In this section, we present the results of our computational experiment.
Although our computational experiment is somewhat preliminary and further rigorous comparative evaluations, such as running existing algorithms in the exact same computational environment, are necessary, we can nevertheless observe that the proposed algorithms are highly efficient in practice.

In our experiment, we evaluated the SAT encoding approach~\cite{LodhaOS19} and our algorithm using ``named graphs''~\cite{DellKTW17}\footnote{\url{https://github.com/PACE-challenge/Treewidth}}, some of which have also been used in previous work~\cite{LodhaOS19}, and the instances for the exact track of the treedepth computation~\cite{KowalikMNPSW20}\footnote{\url{https://github.com/lkowalik/Treedepth-PACE-2020-instances/}} in the PACE Challenge repository. 
We implemented our proposed algorithms in Java\footnote{The source code is available at \url{https://github.com/twalgor/bw}.} and used the authors' implementation\footnote{\url{https://www.ac.tuwien.ac.at/research/branchlis/}} of the SAT-encoding approach due to \cite{LodhaOS19} with the Glucose 4.0 SAT solver.
The experiment was conducted on a machine equipped with AMD Ryzen Threadripper 3990X 64-Core Processor and 64 GB of RAM, running Ubuntu 22.04.5 LTS.
We ran \cref{alg:bw-alternative} (Ours) and the SAT-encoding approach (SAT) for each of the above instances in 10 minutes. 

The implementations of our algorithms are mostly faithful, although, for some subtasks,
they put priority on practical efficiency over theoretical time complexity. The most notable difference is in the
algorithm to decide if the order of a given block derivation is at most $k$. Our implementation 
deviates from the procedure described in the proof of \cref{lem:bd-order} in that it does not
adopt the second method that goes through all subsets of the separator $S$ of cardinality $2k - |S|$.
It rather performs a backtrack search to generate candidate mid-triples. Although the running time can be
larger than $\bigOh\left(\binom{|S|}{2k - |S|}\right)$, it seems to perform well in practice with some
simple pruning rules. We note, however, that it may be a good idea to incorporate, in future implementations,
a mechanism to detect on the fly an unaffordable explosion of the backtrack search and switch to the
theoretical method in \cref{lem:bd-order}.

\cref{fig:plot} shows the experimental results on the named graph instances (\texttt{named}) and the treedepth instances (\texttt{td}).
\cref{alg:bw-graph} and \cref{alg:bw-alternative} solve 110 and 112 instances out of 150 instances in \texttt{named} and 155 and 165 instances out of 200 instances in \texttt{td}, while the SAT-encoding approach solves 49 and 43 instances in \texttt{named} and \texttt{td}, respectively. 
Instance-wise speed-ups are also overwhelming. The minimum, maximum, and geometric average speed-ups of \cref{alg:bw-graph} over the SAT approach are approximately $3.2$, $3388$, and $32.7$ times among the instances that are solved by all the algorithms. Those numbers for \cref{alg:bw-alternative} 
are approximately $3.1$, $3362$, and $35.4$ times.
See Tables~\ref{tab:results_named}--\ref{tab:last} in \cref{apd:tables} for the full experimental results.

\begin{figure}
    \centering
    \includegraphics[width=0.49\linewidth]{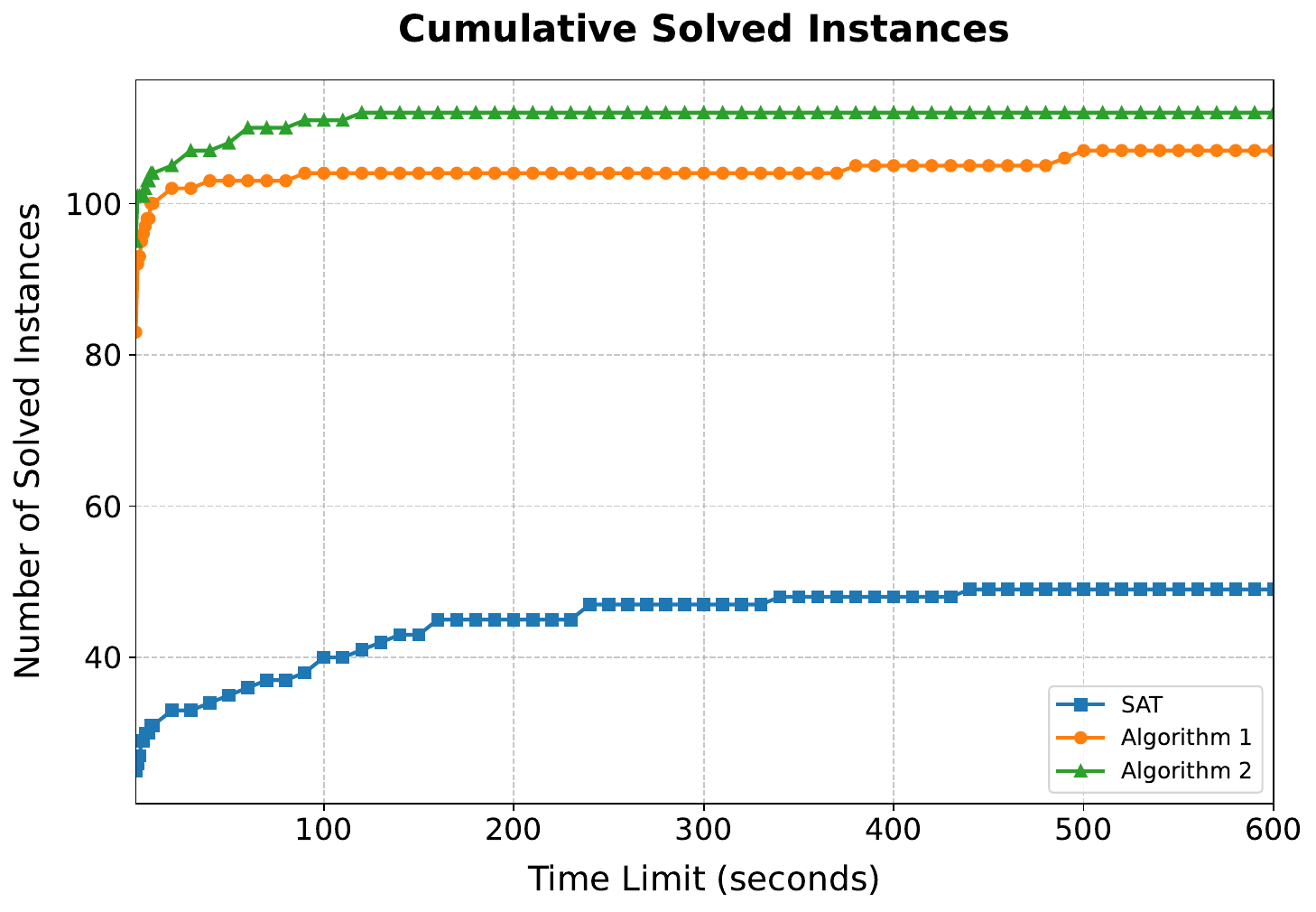}
    \includegraphics[width=0.49\linewidth]{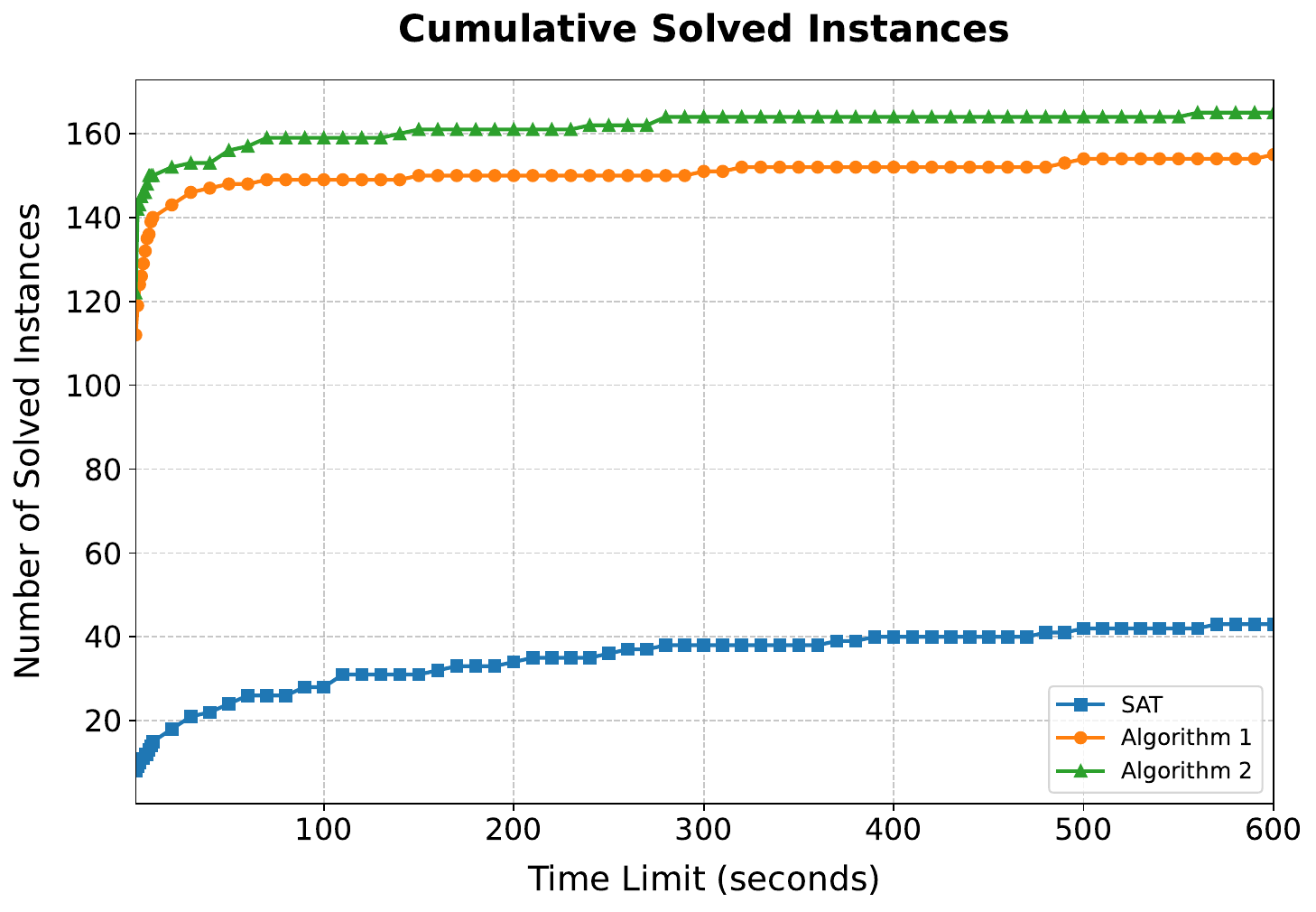}
    \caption{These plots illustrate how many instances each algorithm could solve within a given time limit. The left and right plots correspond to the results for named graphs and the treedepth instances, respectively.}
    \label{fig:plot}
\end{figure}

We also experimented on those instances that are used in \cite{LodhaOS19}. 
\cref{{tab:exp_results}} shows the results of our experiment alongside with the results reported in \cite{LodhaOS19}. As \cite{LodhaOS19} contains comparisons between the SAT-encoding approach and another state-of-the-art practical algorithm due to Hicks~\cite{Hicks05}, our table allows us to indirectly compare our algorithms with Hicks's. Despite the difference of
the computational platforms used, we can safely say that our algorithms overwhelmingly outperform Hicks's Algorithm as well, with the exception that, on small instances with at most 40 vertices and branchwidth at most 5,
Hicks's algorithm is mostly comparable to, and sometimes outperforms, ours.
\begin{table}[p]
\begin{center}
\caption{Experimental results for instances shown in Table~3 in \cite{LodhaOS19}. We excluded the instance ``Pmin'' from this table, as its detail is unclear.
The columns ``Alg~1'', ``Alg~2'', and ``SAT'' are the results for \cref{alg:bw-graph}, \cref{alg:bw-alternative}, and the SAT-encoding of \cite{LodhaOS19} executed on our computational environment and the columns ``SAT*'' and ``Tangle*'' are those for the SAT-encoding and the tangle-based algorithm of \cite{Hicks05} that are reported in \cite{LodhaOS19}.
Note that the results in the two rightmost columns are obtained in a different computational environment with time limit 2,000 seconds from the others.
The entries ``*'' and MO indicate that the branchwidth can be computed within 24 hours and cannot be computed due to an out-of-memory error, respectively.
}
\label{tab:exp_results}
\begin{tabular}{ c|c|c|c|c|c|c|c|c }
\hline 
instance & $|V|$ & $|E|$ & $\bw$ & Alg~1 & Alg~2 & SAT & SAT* & Tangle*\\ 
 \hline
Ellingham       & 78 & 117 & 6  & 2.364  & 1.079  & --      & -- & -- \\
B10Cage         & 70 & 106 & -- & --     & --     & -- & -- & -- \\
Watkin          & 50 & 75  & 6  & 1.731  & 0.624  & --      & -- & -- \\
Paley17         & 17 & 68  & 10 & 0.122  & 0.121  & --      & -- & -- \\
Kittell         & 23 & 63  & 6  & 0.123  & 0.123  & --      & -- & 315.88\\
Holt            & 27 & 54  & 9  & 0.844  & 0.259  & --      & * & MO\\
Shrikhande      & 16 & 48  & 8  & 0.120  & 0.121  & --      & * & * \\
Errera          & 17 & 45  & 6  & 0.108  & 0.118  & --      & 1,384.21 & 3.70\\
Brinkmann       & 21 & 42  & 8  & 0.248  & 0.152  & --     & -- & 3,163.38\\
5x5-grid        & 25 & 40  & 5  & 0.188  & 0.159  & 233.19 & 13.45 & 0.06\\
Folkman         & 20 & 40  & 6  & 0.127  & 0.128  & 430.36 & 1,518.40 & 3.40\\
Clebsch         & 16 & 40  & 8  & 0.126  & 0.123  & --      & -- & 2,017.62\\
Poussin         & 15 & 39  & 6  & 0.117  & 0.115  & --      & 645.63 & 4.75\\
Paley13         & 13 & 39  & 7  & 0.111  & 0.112  & --      & -- & 179.11\\
Robertson       & 19 & 38  & 8  & 0.168  & 0.144  & --      & -- & 3,906.94\\
McGee           & 24 & 36  & 7  & 0.929  & 0.282  & --      & -- & 177.92\\
Nauru           & 24 & 36  & 6  & 0.316  & 0.163  & 338.78 & 436.23 & 43.34\\
Hoffman         & 16 & 32  & 6  & 0.123  & 0.122  & 111.35 & 997.97 & 3.97\\
Desargues       & 20 & 30  & 6  & 0.204  & 0.159  & 123.13 & 474.72 & 2.10\\
Dodecahedron    & 20 & 30  & 6  & 0.261  & 0.159  & 239.67 & 152.96 & 2.29\\
Flower Snark    & 20 & 30  & 6  & 0.224  & 0.155  & 158.72 & 995.93 & 1.87\\
Pappus          & 18 & 27  & 6  & 0.210  & 0.147  & 86.39  & 241.12 & 1.76\\
Sousselier      & 16 & 27  & 5  & 0.124  & 0.118  & 5.528  & 18.47 & 0.12\\
Goldner--Harary & 11 & 27  & 4  & 0.102  & 0.106  & 2.444  & 9.97 & 0.48\\
4x4-grid        & 16 & 24  & 4  & 0.128  & 0.120  & 1.015  & 3.77 & 0.10\\
Chv\'atal       & 12 & 24  & 6  & 0.115  & 0.113  & 49.98  & 81.20 & 0.82\\
Gr\"otzsch      & 11 & 20  & 5  & 0.110  & 0.112  & 3.345  & 11.37 & 0.55\\
D\"urer         & 12 & 18  & 4  & 0.115  & 0.116  & 0.655  & 2.45 & 0.01\\
Franklin        & 12 & 18  & 4  & 0.114  & 0.113  & 0.598  & 4.85 & 0.04\\
Frucht          & 12 & 18  & 3  & 0.108  & 0.107  & 0.452  & 2.16 & 0.41\\
Tietze          & 12 & 18  & 4  & 0.113  & 0.115  & 0.662  & 1.52 & 0.04\\
Herschel        & 11 & 18  & 4  & 0.110  & 0.107  & 0.642  & 8.49 & 0.08\\
Petersen        & 10 & 15  & 4  & 0.111  & 0.113  & 0.513  & 1.62 & 0.09\\
3x3-grid        & 9  & 12  & 3  & 0.108  & 0.105  & 0.349  & 1.10 & 0.02\\
Wagner          & 8  & 12  & 4  & 0.104  & 0.110  & 0.401  & 1.54 & 0.02\\
Moser spindle   & 7  & 11  & 3  & 0.104  & 0.091  & 0.354  & 1.05 & 0.014\\
Prism           & 6  & 9   & 3  & 0.104  & 0.103  & 0.313  & 1.33 & 0.03\\
Butterfly       & 5  & 6   & 2  & 0.097  & 0.100  & 0.300  & 2.11 & 0.02\\
Bull            & 5  & 5   & 2  & 0.087  & 0.090  & 0.300  & 0.00 & 0.00\\
Diamond         & 4  & 5   & 2  & 0.097  & 0.089  & 0.301  & 2.09 & 52.54\\
\hline
\end{tabular}
\end{center}
\end{table}

\bibliography{references}

\newpage
\appendix

\section{Further experimental results}\label{apd:tables}
\begin{table}[H]
    \centering
    \caption{Performance comparison on named graphs (1/4).}
    \label{tab:results_named}
    \small
    \begin{tabular}{lrrrrrr}
        \toprule
        Instance & $|V|$ & $|E|$ & $\textrm{bw}$ & SAT & Alg 1 & Alg 2 \\ \midrule
        AhrensSzekeresGeneralizedQuadrangleGraph\_3 & 27 & 135 & 16 & -- & 0.277 & 0.383 \\
        Balaban10Cage & 70 & 105 & -- & -- & -- & -- \\
        Balaban11Cage & 112 & 168 & -- & -- & -- & -- \\
        BalancedTree\_3\_5 & 364 & 363 & 2 & -- & 0.139 & 0.135 \\
        BarbellGraph\_10\_5 & 25 & 96 & 7 & -- & 0.113 & 0.113 \\
        BidiakisCube & 12 & 18 & 4 & 0.624 & 0.115 & 0.113 \\
        BiggsSmithGraph & 102 & 153 & -- & -- & -- & -- \\
        BlanusaFirstSnarkGraph & 18 & 27 & 5 & 8.163 & 0.14 & 0.135 \\
        BlanusaSecondSnarkGraph & 18 & 27 & 4 & 1.542 & 0.126 & 0.118 \\
        BrinkmannGraph & 21 & 42 & 8 & -- & 0.248 & 0.152 \\
        BrouwerHaemersGraph & 81 & 810 & -- & -- & -- & -- \\
        BubbleSortGraph\_5 & 120 & 240 & -- & -- & -- & -- \\
        BuckyBall & 60 & 90 & -- & -- & -- & -- \\
        CameronGraph & 231 & 3465 & -- & -- & -- & -- \\
        Cell120 & 600 & 1200 & -- & -- & -- & -- \\
        ChvatalGraph & 12 & 24 & 6 & 49.981 & 0.115 & 0.113 \\
        CirculantGraph\_20\_5 & 20 & 20 & 2 & 0.42 & 0.105 & 0.107 \\
        CircularLadderGraph\_20 & 40 & 60 & 4 & 69.035 & 0.327 & 0.223 \\
        ClebschGraph & 16 & 40 & 8 & -- & 0.126 & 0.123 \\
        CompleteBipartiteGraph\_25\_20 & 45 & 500 & 20 & -- & 0.244 & 0.225 \\
        CompleteGraph\_15 & 15 & 105 & 10 & -- & 0.101 & 0.099 \\
        CoxeterGraph & 28 & 42 & 7 & -- & 1.215 & 0.297 \\
        CubeGraph\_6 & 64 & 192 & -- & -- & -- & -- \\
        CycleGraph\_100 & 100 & 100 & 2 & 98.221 & 1.411 & 0.592 \\
        DejterGraph & 112 & 336 & -- & -- & -- & -- \\
        DesarguesGraph & 20 & 30 & 6 & 123.133 & 0.204 & 0.159 \\
        DodecahedralGraph & 20 & 30 & 6 & 239.696 & 0.261 & 0.215 \\
        DorogovtsevGoltsevMendesGraph & 3282 & 6561 & -- & -- & -- & -- \\
        DoubleStarSnark & 30 & 45 & 6 & -- & 0.501 & 0.225 \\
        DurerGraph & 12 & 18 & 4 & 0.655 & 0.115 & 0.116 \\
        DyckGraph & 32 & 48 & 8 & -- & 5.05 & 1.525 \\
        EllinghamHorton54Graph & 54 & 81 & 6 & -- & 1.575 & 0.531 \\
        EllinghamHorton78Graph & 78 & 117 & 6 & -- & 2.364 & 1.079 \\
        ErreraGraph & 17 & 45 & 6 & -- & 0.108 & 0.118 \\
        F26AGraph & 26 & 39 & 7 & -- & 1.135 & 0.407 \\
        FibonacciTree\_10 & 143 & 142 & 2 & -- & 0.105 & 0.108 \\
        FlowerSnark & 20 & 30 & 6 & 158.722 & 0.224 & 0.155 \\
        FoldedCubeGraph\_7 & 64 & 224 & -- & -- & -- & -- \\
        FolkmanGraph & 20 & 40 & 6 & 430.357 & 0.127 & 0.128 \\
        FosterGraph & 90 & 135 & -- & -- & -- & -- \\
        FranklinGraph & 12 & 18 & 4 & 0.598 & 0.114 & 0.113 \\
        FriendshipGraph\_10 & 21 & 30 & 2 & 0.876 & 0.093 & 0.107 \\
        FruchtGraph & 12 & 18 & 3 & 0.452 & 0.108 & 0.107 \\
        GeneralizedPetersenGraph\_10\_4 & 20 & 30 & 6 & 133.92 & 0.212 & 0.164 \\
        GoethalsSeidelGraph\_2\_3 & 16 & 72 & 10 & -- & 0.118 & 0.121 \\
        \midrule
        \multicolumn{7}{r}{\small \slshape Continued on next page} \\
        \bottomrule
    \end{tabular}
\end{table}

\begin{table}[p]
    \centering
    \caption[]{Performance comparison on named graphs (2/4) (continued).}
    \small
    \begin{tabular}{lrrrrrr}
        \toprule
        Instance & $|V|$ & $|E|$ & $\textrm{bw}$ & SAT & Alg 1 & Alg 2 \\ \midrule
        GoldnerHararyGraph & 11 & 27 & 4 & 2.444 & 0.102 & 0.106 \\
        GossetGraph & 56 & 756 & 36 & -- & 0.638 & 0.833 \\
        GrayGraph & 54 & 81 & 9 & -- & 489.292 & 19.648 \\
        Grid2dGraph\_20\_40 & 800 & 1540 & -- & -- & -- & -- \\
        Grid2dGraph\_5\_5 & 25 & 40 & 5 & 233.19 & 0.188 & 0.159 \\
        GrotzschGraph & 11 & 20 & 5 & 3.345 & 0.11 & 0.112 \\
        HallJankoGraph & 100 & 1800 & -- & -- & -- & -- \\
        HanoiTowerGraph\_4\_3 & 64 & 168 & 12 & -- & 494.091 & 49.575 \\
        HararyGraph\_6\_15 & 15 & 45 & 6 & -- & 0.114 & 0.116 \\
        HarborthGraph & 52 & 104 & 4 & -- & 0.147 & 0.139 \\
        HarriesGraph & 70 & 105 & -- & -- & -- & -- \\
        HarriesWongGraph & 70 & 105 & -- & -- & -- & -- \\
        HeawoodGraph & 14 & 21 & 5 & 3.103 & 0.125 & 0.123 \\
        HerschelGraph & 11 & 18 & 4 & 0.642 & 0.11 & 0.107 \\
        HexahedralGraph & 8 & 12 & 4 & 0.407 & 0.106 & 0.11 \\
        HigmanSimsGraph & 100 & 1100 & -- & -- & -- & -- \\
        HoffmanGraph & 16 & 32 & 6 & 111.353 & 0.123 & 0.122 \\
        HoffmanSingletonGraph & 50 & 175 & -- & -- & -- & -- \\
        HoltGraph & 27 & 54 & 9 & -- & 0.844 & 0.259 \\
        HortonGrap & 96 & 144 & 6 & -- & 3.26 & 1.602 \\
        HouseGraph & 5 & 6 & 2 & 0.327 & 0.102 & 0.105 \\
        HouseXGraph & 5 & 8 & 3 & 0.342 & 0.098 & 0.101 \\
        HyperStarGraph\_10\_2 & 45 & 72 & 6 & -- & -- & 0.158 \\
        HyperStarGraph\_10\_5 & 252 & 630 & -- & -- & -- & -- \\
        IcosahedralGraph & 12 & 30 & 6 & 158.176 & 0.115 & 0.116 \\
        JohnsonGraph\_10\_4 & 210 & 2520 & -- & -- & -- & -- \\
        JohnsonGraph\_8\_2 & 28 & 168 & 15 & -- & 0.141 & 0.144 \\
        KittellGraph & 23 & 63 & 6 & -- & 0.123 & 0.123 \\
        Klein3RegularGraph & 56 & 84 & -- & -- & -- & -- \\
        Klein7RegularGraph & 24 & 84 & 12 & -- & 0.241 & 0.223 \\
        KneserGraph\_10\_2 & 45 & 630 & 30 & -- & 0.431 & 0.585 \\
        KneserGraph\_8\_3 & 56 & 280 & -- & -- & -- & -- \\
        KrackhardtKiteGraph & 10 & 18 & 3 & 0.464 & 0.107 & 0.106 \\
        LadderGraph\_20 & 40 & 58 & 2 & 38.487 & 0.116 & 0.115 \\
        LjubljanaGraph & 112 & 168 & -- & -- & -- & -- \\
        LollipopGraph\_7\_5 & 12 & 26 & 5 & 18.837 & 0.1 & 0.089 \\
        M22Graph & 77 & 616 & -- & -- & -- & -- \\
        MarkstroemGraph & 24 & 36 & 4 & 13.61 & 0.15 & 0.137 \\
        McGeeGraph & 24 & 36 & 7 & -- & 0.929 & 0.282 \\
        MeredithGraph & 70 & 140 & 6 & -- & 1.195 & 0.57 \\
        MoebiusKantorGraph & 16 & 24 & 6 & 58.467 & 0.191 & 0.171 \\
        MoserSpindle & 7 & 11 & 3 & 0.354 & 0.104 & 0.091 \\
        NauruGraph & 24 & 36 & 6 & 338.777 & 0.316 & 0.163 \\
        NKStarGraph\_5\_3 & 60 & 120 & 12 & -- & -- & 88.064 \\
        NKStarGraph\_8\_2 & 56 & 196 & -- & -- & -- & -- \\
        \midrule
        \multicolumn{7}{r}{\small \slshape Continued on next page} \\
        \bottomrule
    \end{tabular}
\end{table}

\begin{table}[p]
    \centering
    \caption[]{Performance comparison on named graphs (3/4) (continued).}
    \small
    \begin{tabular}{lrrrrrr}
        \toprule
        Instance & $|V|$ & $|E|$ & $\textrm{bw}$ & SAT & Alg 1 & Alg 2 \\ \midrule
        NonisotropicOrthogonalPolarGraph\_3\_5 & 15 & 60 & 8 & -- & 0.109 & 0.109 \\
        NonisotropicUnitaryPolarGraph\_3\_3 & 63 & 1008 & 40 & -- & 0.611 & 0.978 \\
        NStarGraph\_5 & 120 & 240 & -- & -- & -- & -- \\
        OctahedralGraph & 6 & 12 & 4 & 0.411 & 0.103 & 0.104 \\
        OddGraph\_3 & 10 & 15 & 4 & 0.508 & 0.101 & 0.1 \\
        OddGraph\_4 & 35 & 70 & 12 & -- & 88.59 & 5.502 \\
        OddGraph\_5 & 126 & 315 & -- & -- & -- & -- \\
        OrthogonalArrayBlockGraph\_4\_3 & 9 & 36 & 6 & -- & 0.092 & 0.089 \\
        OrthogonalPolarGraph\_5\_2 & 15 & 45 & 9 & -- & 0.126 & 0.138 \\
        PaleyGraph\_17 & 17 & 68 & 10 & -- & 0.122 & 0.121 \\
        PappusGraph & 18 & 27 & 6 & 86.388 & 0.21 & 0.147 \\
        PasechnikGraph\_1 & 9 & 9 & 2 & 0.331 & 0.1 & 0.089 \\
        PasechnikGraph\_2 & 49 & 441 & 30 & -- & 3.739 & 6.75 \\
        PasechnikGraph\_3 & 121 & 3025 & -- & -- & -- & -- \\
        PathGraph\_100 & 100 & 99 & 2 & 94.348 & 0.104 & 0.105 \\
        PerkelGraph & 57 & 171 & -- & -- & -- & -- \\
        PetersenGraph & 10 & 15 & 4 & 0.513 & 0.111 & 0.113 \\
        PoussinGraph & 15 & 39 & 6 & -- & 0.117 & 0.115 \\
        RingedTree\_10 & 1023 & 2043 & -- & -- & -- & -- \\
        RingedTree\_6 & 63 & 123 & 7 & -- & 4.485 & 1.131 \\
        RingedTree\_8 & 255 & 507 & -- & -- & -- & -- \\
        RobertsonGraph & 19 & 38 & 8 & -- & 0.168 & 0.144 \\
        SchlaefliGraph & 27 & 216 & 16 & -- & 0.123 & 0.128 \\
        ShrikhandeGraph & 16 & 48 & 8 & -- & 0.12 & 0.121 \\
        SierpinskiGasketGraph\_3 & 15 & 27 & 3 & 0.818 & 0.106 & 0.109 \\
        SierpinskiGasketGraph\_5 & 123 & 243 & 4 & -- & 0.473 & 0.291 \\
        SimsGewirtzGraph & 56 & 280 & -- & -- & -- & -- \\
        SousselierGraph & 16 & 27 & 5 & 5.528 & 0.124 & 0.118 \\
        SquaredSkewHadamardMatrixGraph\_1 & 9 & 18 & 4 & 0.616 & 0.104 & 0.104 \\
        SquaredSkewHadamardMatrixGraph\_2 & 49 & 588 & 32 & -- & 0.501 & 0.895 \\
        SquaredSkewHadamardMatrixGraph\_3 & 121 & 3630 & 80 & -- & 10.039 & 51.771 \\
        StarGraph\_100 & 101 & 100 & 1 & -- & 0.094 & 0.095 \\
        SwitchedSquaredSkewHadamardMatrixGraph\_1 & 10 & 15 & 4 & 0.479 & 0.11 & 0.111 \\
        SwitchedSquaredSkewHadamardMatrixGraph\_2 & 50 & 525 & 32 & -- & 0.99 & 1.203 \\
        SwitchedSquaredSkewHadamardMatrixGraph\_3 & 122 & 3355 & 80 & -- & 18.118 & 54.602 \\
        SylvesterGraph & 36 & 90 & 15 & -- & 378.549 & 117.714 \\
        SymplecticDualPolarGraph\_4\_3 & 40 & 240 & 23 & -- & 6.651 & 8.028 \\
        SymplecticDualPolarGraph\_4\_4 & 85 & 850 & -- & -- & -- & -- \\
        SymplecticPolarGraph\_4\_3 & 40 & 240 & 24 & -- & 8.548 & 24.009 \\
        SymplecticPolarGraph\_4\_4 & 85 & 850 & -- & -- & -- & -- \\
        SzekeresSnarkGraph & 50 & 75 & 6 & -- & 1.685 & 0.544 \\
        T2starGeneralizedQuadrangleGraph\_2 & 8 & 16 & 4 & 0.478 & 0.103 & 0.107 \\
        T2starGeneralizedQuadrangleGraph\_4 & 64 & 576 & -- & -- & -- & -- \\
        TaylorTwographDescendantSRG\_3 & 27 & 135 & 16 & -- & 0.248 & 0.384 \\
        TaylorTwographSRG\_3 & 28 & 210 & 18 & -- & 0.201 & 0.203 \\
        \midrule
        \multicolumn{7}{r}{\small \slshape Continued on next page} \\
        \bottomrule
    \end{tabular}
\end{table}

\begin{table}[p]
    \centering
    \caption[]{Performance comparison on named graphs (4/4) (continued).}
    \small
    \begin{tabular}{lrrrrrr}
        \toprule
        Instance & $|V|$ & $|E|$ & $\textrm{bw}$ & SAT & Alg 1 & Alg 2 \\ \midrule
        TetrahedralGraph & 4 & 6 & 3 & 0.333 & 0.1 & 0.099 \\
        ThomsenGraph & 6 & 9 & 3 & 0.344 & 0.103 & 0.103 \\
        TietzeGraph & 12 & 18 & 4 & 0.662 & 0.113 & 0.115 \\
        Toroidal6RegularGrid2dGraph\_4\_6 & 24 & 72 & 8 & -- & 0.138 & 0.135 \\
        TruncatedIcosidodecahedralGraph & 120 & 180 & -- & -- & -- & -- \\
        TruncatedTetrahedralGraph & 12 & 18 & 4 & 0.913 & 0.118 & 0.117 \\
        Tutte12Cage & 126 & 189 & -- & -- & -- & -- \\
        TutteCoxeterGraph & 30 & 45 & 8 & -- & 8.308 & 1.037 \\
        TutteGraph & 46 & 69 & 5 & -- & 0.35 & 0.267 \\
        WagnerGraph & 8 & 12 & 4 & 0.401 & 0.104 & 0.11 \\
        WatkinsSnarkGraph & 50 & 75 & 6 & -- & 1.731 & 0.624 \\
        WellsGraph & 32 & 80 & 14 & -- & 34.231 & 26.103 \\
        WheelGraph\_100 & 100 & 198 & 3 & -- & 1.399 & 0.592 \\
        WienerArayaGraph & 42 & 67 & 7 & -- & 1.429 & 0.658 \\
        WorldMap & 166 & 323 & 5 & -- & 0.798 & 0.324 \\
        \bottomrule
    \end{tabular}
\end{table}

\begin{table}[p]
    \centering
    \caption{Performance comparison on treedepth instances (1/5).}
    \label{tab:results_td}
    \small
    \begin{tabular}{lrrrrrr}
        \toprule
        Instance & $|V|$ & $|E|$ & $\textrm{bw}$ & SAT & Alg 1 & Alg 2 \\ \midrule
        exact\_001.gr & 10 & 15 & 4 & 0.511 & 0.111 & 0.111 \\
        exact\_002.gr & 16 & 58 & 8 & -- & 0.113 & 0.116 \\
        exact\_003.gr & 17 & 62 & 7 & -- & 0.105 & 0.108 \\
        exact\_004.gr & 18 & 18 & 2 & 0.371 & 0.102 & 0.102 \\
        exact\_005.gr & 20 & 21 & 2 & 0.408 & 0.105 & 0.104 \\
        exact\_006.gr & 20 & 24 & 2 & 0.475 & 0.095 & 0.107 \\
        exact\_007.gr & 20 & 30 & 6 & 157.751 & 0.243 & 0.157 \\
        exact\_008.gr & 20 & 39 & 3 & 3.374 & 0.107 & 0.108 \\
        exact\_009.gr & 21 & 25 & 3 & 0.942 & 0.112 & 0.113 \\
        exact\_010.gr & 21 & 72 & 8 & -- & 0.111 & 0.112 \\
        exact\_011.gr & 23 & 24 & 2 & 0.512 & 0.105 & 0.106 \\
        exact\_012.gr & 24 & 28 & 2 & 0.673 & 0.107 & 0.108 \\
        exact\_013.gr & 25 & 46 & 3 & 7.62 & 0.109 & 0.108 \\
        exact\_014.gr & 25 & 52 & 5 & 203.209 & 0.117 & 0.116 \\
        exact\_015.gr & 26 & 30 & 2 & 0.769 & 0.103 & 0.104 \\
        exact\_016.gr & 26 & 51 & 3 & 10.494 & 0.105 & 0.111 \\
        exact\_017.gr & 26 & 64 & 7 & -- & 0.131 & 0.121 \\
        exact\_018.gr & 27 & 87 & 7 & -- & 0.12 & 0.118 \\
        exact\_019.gr & 28 & 57 & 4 & 48.094 & 0.116 & 0.115 \\
        exact\_020.gr & 28 & 168 & 15 & -- & 0.138 & 0.142 \\
        exact\_021.gr & 29 & 38 & 2 & 1.89 & 0.117 & 0.11 \\
        exact\_022.gr & 29 & 47 & 3 & 9.454 & 0.112 & 0.113 \\
        exact\_023.gr & 30 & 38 & 3 & 2.805 & 0.12 & 0.118 \\
        exact\_024.gr & 30 & 45 & 5 & 100.297 & 0.293 & 0.186 \\
        exact\_025.gr & 30 & 51 & 4 & 48.875 & 0.122 & 0.119 \\
        exact\_026.gr & 30 & 59 & 4 & 561.575 & 0.116 & 0.119 \\
        exact\_027.gr & 30 & 70 & 6 & -- & 0.134 & 0.126 \\
        exact\_028.gr & 31 & 64 & 4 & 385.584 & 0.123 & 0.12 \\
        exact\_029.gr & 32 & 119 & 8 & -- & 0.118 & 0.118 \\
        exact\_030.gr & 32 & 137 & 12 & -- & 0.24 & 0.171 \\
        exact\_031.gr & 33 & 78 & 7 & -- & 0.135 & 0.134 \\
        exact\_032.gr & 34 & 78 & 5 & -- & 0.133 & 0.124 \\
        exact\_033.gr & 34 & 82 & 8 & -- & 0.137 & 0.13 \\
        exact\_034.gr & 35 & 42 & 3 & 5.872 & 0.134 & 0.126 \\
        exact\_035.gr & 35 & 70 & 3 & 105.028 & 0.114 & 0.116 \\
        exact\_036.gr & 37 & 68 & 3 & 23.392 & 0.112 & 0.104 \\
        exact\_037.gr & 39 & 48 & 3 & 8.326 & 0.113 & 0.11 \\
        exact\_038.gr & 39 & 62 & 3 & 26.409 & 0.119 & 0.117 \\
        exact\_039.gr & 40 & 54 & 3 & 13.121 & 0.129 & 0.127 \\
        exact\_040.gr & 40 & 73 & 4 & -- & 0.139 & 0.127 \\
        exact\_041.gr & 40 & 79 & 4 & -- & 0.133 & 0.123 \\
        exact\_042.gr & 40 & 112 & 7 & -- & 0.128 & 0.127 \\
        exact\_043.gr & 40 & 129 & 8 & -- & 0.177 & 0.139 \\
        exact\_044.gr & 41 & 52 & 3 & 24.177 & 0.131 & 0.126 \\
        exact\_045.gr & 41 & 58 & 6 & -- & 1.539 & 0.469 \\
        \midrule
        \multicolumn{7}{r}{\small \slshape Continued on next page} \\
        \bottomrule
    \end{tabular}
\end{table}

\begin{table}[p]
    \centering
    \caption[]{Performance comparison on treedepth instances (2/5) (continued).}
    \small
    \begin{tabular}{lrrrrrr}
        \toprule
        Instance & $|V|$ & $|E|$ & $\textrm{bw}$ & SAT & Alg 1 & Alg 2 \\ \midrule
        exact\_046.gr & 41 & 76 & 4 & -- & 0.124 & 0.117 \\
        exact\_047.gr & 41 & 256 & 19 & -- & 1.055 & 0.388 \\
        exact\_048.gr & 42 & 67 & 7 & -- & 1.436 & 0.667 \\
        exact\_049.gr & 42 & 73 & 6 & -- & 0.912 & 0.243 \\
        exact\_050.gr & 43 & 53 & 3 & 13.815 & 0.152 & 0.139 \\
        exact\_051.gr & 45 & 91 & 5 & -- & 0.175 & 0.145 \\
        exact\_052.gr & 45 & 630 & 30 & 493.793 & 0.436 & 0.563 \\
        exact\_053.gr & 46 & 63 & 3 & 53.679 & 0.115 & 0.122 \\
        exact\_054.gr & 47 & 67 & 4 & 160.085 & 0.178 & 0.142 \\
        exact\_055.gr & 49 & 85 & 4 & -- & 0.125 & 0.119 \\
        exact\_056.gr & 50 & 59 & 3 & 83.7 & 0.117 & 0.118 \\
        exact\_057.gr & 50 & 75 & 7 & -- & 8.855 & 1.251 \\
        exact\_058.gr & 50 & 98 & 5 & -- & 0.176 & 0.142 \\
        exact\_059.gr & 50 & 525 & 32 & 366.016 & 1.015 & 1.127 \\
        exact\_060.gr & 51 & 151 & 8 & -- & 0.173 & 0.143 \\
        exact\_061.gr & 53 & 100 & 7 & -- & 0.552 & 0.239 \\
        exact\_062.gr & 54 & 66 & 4 & -- & 0.228 & 0.17 \\
        exact\_063.gr & 55 & 100 & 5 & -- & 0.256 & 0.183 \\
        exact\_064.gr & 57 & 79 & 3 & 87.537 & 0.138 & 0.131 \\
        exact\_065.gr & 57 & 153 & 6 & -- & 0.165 & 0.139 \\
        exact\_066.gr & 59 & 154 & 6 & -- & 0.139 & 0.134 \\
        exact\_067.gr & 60 & 72 & 3 & 107.222 & 0.14 & 0.132 \\
        exact\_068.gr & 60 & 90 & 10 & -- & -- & 552.296 \\
        exact\_069.gr & 60 & 119 & 4 & -- & 0.168 & 0.139 \\
        exact\_070.gr & 60 & 176 & 10 & -- & 7.414 & 1.294 \\
        exact\_071.gr & 61 & 151 & 10 & -- & 0.264 & 0.169 \\
        exact\_072.gr & 62 & 108 & 4 & -- & 0.159 & 0.138 \\
        exact\_073.gr & 62 & 159 & 9 & -- & 0.693 & 0.266 \\
        exact\_074.gr & 63 & 1008 & 40 & -- & 0.649 & 0.998 \\
        exact\_075.gr & 64 & 168 & 12 & -- & 498.103 & 48.922 \\
        exact\_076.gr & 65 & 128 & 9 & -- & 0.82 & 0.24 \\
        exact\_077.gr & 66 & 120 & 5 & -- & 0.301 & 0.184 \\
        exact\_078.gr & 67 & 152 & 6 & -- & 0.19 & 0.147 \\
        exact\_079.gr & 68 & 83 & 3 & 191.805 & 0.14 & 0.134 \\
        exact\_080.gr & 68 & 116 & 3 & -- & 0.122 & 0.12 \\
        exact\_081.gr & 70 & 79 & 2 & 39.004 & 0.123 & 0.12 \\
        exact\_082.gr & 70 & 139 & 4 & -- & 0.3 & 0.134 \\
        exact\_083.gr & 70 & 274 & 12 & 249.498 & 1.528 & 0.557 \\
        exact\_084.gr & 71 & 103 & -- & -- & -- & -- \\
        exact\_085.gr & 72 & 158 & 5 & -- & 0.187 & 0.168 \\
        exact\_086.gr & 75 & 118 & 4 & -- & 0.137 & 0.199 \\
        exact\_087.gr & 75 & 304 & 8 & 270.882 & 0.184 & 0.151 \\
        exact\_088.gr & 77 & 146 & 10 & -- & -- & 135.583 \\
        exact\_089.gr & 77 & 254 & 8 & 254.525 & 0.291 & 0.163 \\
        exact\_090.gr & 77 & 616 & -- & 474.88 & -- & -- \\
        \midrule
        \multicolumn{7}{r}{\small \slshape Continued on next page} \\
        \bottomrule
    \end{tabular}
\end{table}

\begin{table}[p]
    \centering
    \caption[]{Performance comparison on treedepth instances (3/5) (continued).}
    \small
    \begin{tabular}{lrrrrrr}
        \toprule
        Instance & $|V|$ & $|E|$ & $\textrm{bw}$ & SAT & Alg 1 & Alg 2 \\ \midrule
        exact\_091.gr & 80 & 98 & 3 & -- & 0.143 & 0.139 \\
        exact\_092.gr & 80 & 120 & -- & -- & -- & -- \\
        exact\_093.gr & 80 & 159 & 4 & -- & 0.321 & 0.16 \\
        exact\_094.gr & 81 & 810 & -- & -- & -- & -- \\
        exact\_095.gr & 82 & 100 & 2 & 59.071 & 0.169 & 0.15 \\
        exact\_096.gr & 84 & 339 & 9 & -- & 0.414 & 0.238 \\
        exact\_097.gr & 86 & 123 & 4 & -- & 0.156 & 0.132 \\
        exact\_098.gr & 87 & 406 & 12 & -- & 0.994 & 0.324 \\
        exact\_099.gr & 88 & 133 & 5 & -- & 1.43 & 0.637 \\
        exact\_100.gr & 90 & 135 & -- & -- & -- & -- \\
        exact\_101.gr & 90 & 135 & -- & -- & -- & -- \\
        exact\_102.gr & 92 & 104 & 3 & -- & 0.174 & 0.153 \\
        exact\_103.gr & 92 & 131 & 6 & -- & 8.079 & 1.63 \\
        exact\_104.gr & 92 & 285 & 8 & -- & 0.333 & 0.2 \\
        exact\_105.gr & 93 & 111 & 4 & -- & 0.231 & 0.174 \\
        exact\_106.gr & 94 & 118 & 4 & -- & 0.269 & 0.191 \\
        exact\_107.gr & 95 & 121 & 5 & -- & 2.265 & 0.702 \\
        exact\_108.gr & 96 & 153 & 8 & -- & 29.653 & 2.06 \\
        exact\_109.gr & 98 & 109 & 3 & -- & 0.222 & 0.169 \\
        exact\_110.gr & 99 & 119 & 3 & -- & 0.194 & 0.162 \\
        exact\_111.gr & 100 & 384 & 9 & -- & 0.581 & 0.246 \\
        exact\_112.gr & 100 & 1800 & -- & -- & -- & -- \\
        exact\_113.gr & 102 & 113 & 4 & -- & 0.125 & 0.133 \\
        exact\_114.gr & 102 & 142 & 4 & -- & 0.177 & 0.138 \\
        exact\_115.gr & 102 & 144 & 4 & -- & 0.384 & 0.265 \\
        exact\_116.gr & 105 & 312 & 8 & -- & 0.367 & 0.217 \\
        exact\_117.gr & 105 & 441 & 12 & -- & 6.447 & 1.378 \\
        exact\_118.gr & 110 & 219 & 4 & -- & 0.347 & 0.153 \\
        exact\_119.gr & 110 & 223 & 11 & -- & -- & 57.89 \\
        exact\_120.gr & 111 & 1029 & 27 & -- & 148.875 & 18.422 \\
        exact\_121.gr & 112 & 168 & -- & -- & -- & -- \\
        exact\_122.gr & 112 & 425 & -- & -- & -- & -- \\
        exact\_123.gr & 113 & 164 & 5 & -- & 3.239 & 1.241 \\
        exact\_124.gr & 113 & 168 & -- & -- & -- & -- \\
        exact\_125.gr & 115 & 161 & 4 & -- & 0.192 & 0.144 \\
        exact\_126.gr & 117 & 146 & 4 & -- & 0.182 & 0.154 \\
        exact\_127.gr & 119 & 147 & 4 & -- & 0.138 & 0.126 \\
        exact\_128.gr & 120 & 157 & 5 & -- & 5.073 & 1.615 \\
        exact\_129.gr & 120 & 240 & -- & -- & -- & -- \\
        exact\_130.gr & 120 & 479 & 6 & -- & 2.442 & 0.621 \\
        exact\_131.gr & 121 & 220 & -- & -- & -- & -- \\
        exact\_132.gr & 123 & 243 & 4 & -- & 0.476 & 0.315 \\
        exact\_133.gr & 123 & 409 & 8 & -- & 0.996 & 0.284 \\
        exact\_134.gr & 125 & 141 & 4 & -- & 17.289 & 3.517 \\
        exact\_135.gr & 126 & 189 & -- & -- & -- & -- \\
        \midrule
        \multicolumn{7}{r}{\small \slshape Continued on next page} \\
        \bottomrule
    \end{tabular}
\end{table}

\begin{table}[p]
    \centering
    \caption[]{Performance comparison on treedepth instances (4/5) (continued).}
    \small
    \begin{tabular}{lrrrrrr}
        \toprule
        Instance & $|V|$ & $|E|$ & $\textrm{bw}$ & SAT & Alg 1 & Alg 2 \\ \midrule
        exact\_136.gr & 126 & 315 & -- & -- & -- & -- \\
        exact\_137.gr & 126 & 4100 & 47 & -- & -- & 43.924 \\
        exact\_138.gr & 127 & 189 & -- & -- & -- & -- \\
        exact\_139.gr & 128 & 266 & 11 & -- & -- & 275.225 \\
        exact\_140.gr & 131 & 193 & -- & -- & -- & -- \\
        exact\_141.gr & 132 & 191 & 5 & -- & 4.299 & 1.29 \\
        exact\_142.gr & 133 & 167 & 4 & -- & 0.248 & 0.202 \\
        exact\_143.gr & 138 & 447 & 8 & -- & 0.79 & 0.306 \\
        exact\_144.gr & 138 & 493 & 12 & -- & -- & 1.323 \\
        exact\_145.gr & 140 & 279 & 5 & -- & 0.957 & 0.309 \\
        exact\_146.gr & 141 & 214 & -- & -- & -- & -- \\
        exact\_147.gr & 142 & 196 & 5 & -- & 4.808 & 1.461 \\
        exact\_148.gr & 145 & 2512 & 44 & -- & 297.415 & 11.615 \\
        exact\_149.gr & 146 & 250 & 7 & -- & 45.87 & 6.271 \\
        exact\_150.gr & 148 & 198 & 5 & -- & 6.98 & 1.939 \\
        exact\_151.gr & 150 & 599 & 6 & -- & 2.026 & 0.258 \\
        exact\_152.gr & 151 & 223 & -- & -- & -- & -- \\
        exact\_153.gr & 152 & 476 & 9 & -- & 5.739 & 1.251 \\
        exact\_154.gr & 157 & 3541 & 31 & -- & -- & 278.625 \\
        exact\_155.gr & 160 & 240 & -- & -- & -- & -- \\
        exact\_156.gr & 160 & 319 & 4 & -- & 0.533 & 0.274 \\
        exact\_157.gr & 163 & 195 & 4 & -- & 0.744 & 0.319 \\
        exact\_158.gr & 163 & 278 & 8 & -- & -- & 63.475 \\
        exact\_159.gr & 165 & 522 & 9 & -- & 4.06 & 1.089 \\
        exact\_160.gr & 167 & 213 & 5 & -- & 0.375 & 0.211 \\
        exact\_161.gr & 167 & 328 & 5 & -- & 0.79 & 0.327 \\
        exact\_162.gr & 170 & 218 & 6 & -- & 316.127 & 25.113 \\
        exact\_163.gr & 170 & 255 & -- & -- & -- & -- \\
        exact\_164.gr & 170 & 679 & 6 & -- & 2.949 & 1.369 \\
        exact\_165.gr & 176 & 186 & 3 & -- & 0.531 & 0.303 \\
        exact\_166.gr & 176 & 3973 & 29 & -- & 3.252 & 1.333 \\
        exact\_167.gr & 180 & 270 & -- & -- & -- & -- \\
        exact\_168.gr & 180 & 359 & 5 & -- & 2.145 & 0.368 \\
        exact\_169.gr & 181 & 253 & 6 & -- & 60.95 & 7.893 \\
        exact\_170.gr & 184 & 268 & -- & -- & -- & -- \\
        exact\_171.gr & 191 & 283 & -- & -- & -- & -- \\
        exact\_172.gr & 195 & 342 & 6 & -- & 27.777 & 4.688 \\
        exact\_173.gr & 198 & 692 & 9 & -- & 8.168 & 1.318 \\
        exact\_174.gr & 199 & 265 & 5 & -- & 18.844 & 3.585 \\
        exact\_175.gr & 200 & 300 & -- & -- & -- & -- \\
        exact\_176.gr & 200 & 799 & 6 & -- & 22.262 & 6.238 \\
        exact\_177.gr & 204 & 248 & 4 & -- & 0.942 & 0.26 \\
        exact\_178.gr & 210 & 2520 & -- & -- & -- & -- \\
        exact\_179.gr & 212 & 257 & 5 & -- & 6.501 & 1.964 \\
        exact\_180.gr & 214 & 785 & 9 & -- & 9.719 & 1.574 \\
        \midrule
        \multicolumn{7}{r}{\small \slshape Continued on next page} \\
        \bottomrule
    \end{tabular}
\end{table}

\begin{table}[p]
    \centering
    \caption[]{Performance comparison on treedepth instances (5/5) (continued).}\label{tab:last}
    \small
    \begin{tabular}{lrrrrrr}
        \toprule
        Instance & $|V|$ & $|E|$ & $\textrm{bw}$ & SAT & Alg 1 & Alg 2 \\ \midrule
        exact\_181.gr & 223 & 265 & 4 & -- & 0.495 & 0.211 \\
        exact\_182.gr & 225 & 771 & 9 & -- & 1.713 & 0.769 \\
        exact\_183.gr & 231 & 3465 & -- & -- & -- & -- \\
        exact\_184.gr & 244 & 290 & 4 & -- & 0.544 & 0.217 \\
        exact\_185.gr & 276 & 1187 & 10 & -- & 10.02 & 1.928 \\
        exact\_186.gr & 280 & 360 & 4 & -- & 5.347 & 1.277 \\
        exact\_187.gr & 300 & 450 & -- & -- & -- & -- \\
        exact\_188.gr & 300 & 508 & -- & -- & -- & -- \\
        exact\_189.gr & 300 & 1199 & 6 & -- & 35.1 & 7.161 \\
        exact\_190.gr & 352 & 440 & 6 & -- & -- & 62.734 \\
        exact\_191.gr & 439 & 873 & -- & -- & -- & -- \\
        exact\_192.gr & 441 & 1638 & -- & -- & -- & -- \\
        exact\_193.gr & 449 & 2213 & 13 & -- & 488.196 & 40.182 \\
        exact\_194.gr & 450 & 1799 & 7 & -- & 595.01 & 232.541 \\
        exact\_195.gr & 451 & 587 & -- & -- & -- & -- \\
        exact\_196.gr & 457 & 684 & -- & -- & -- & -- \\
        exact\_197.gr & 469 & 614 & 8 & -- & -- & 145.815 \\
        exact\_198.gr & 479 & 638 & -- & -- & -- & -- \\
        exact\_199.gr & 491 & 645 & -- & -- & -- & -- \\
        exact\_200.gr & 498 & 702 & -- & -- & -- & -- \\
        \bottomrule
    \end{tabular}
\end{table}

\end{document}